\newtheorem{definition}{Definition}[section]
\newtheorem{theorem}{Theorem}[section]
\newtheorem{lemma}{Lemma}
\newcommand{\cG}{\mathcal{G}}
\newcommand{\cC}{\mathcal{C}}
\newcommand{\cS}{\mathcal{S}}
\newcommand{\cI}{\mathcal{I}}
\newcommand{\cT}{\mathcal{T}}
\DeclareMathOperator{\Cat}{Cat}
\begin{document}


\title{Double scaling limit for the $O(N)^3$-invariant tensor model}

\author{{\bf V. Bonzom}}\email{bonzom@lipn.univ-paris13.fr}
\affiliation{Universit\'e Sorbonne Paris Nord, LIPN, CNRS UMR 7030, F-93430 Villetaneuse, France, EU}

\author{{\bf V. Nador}}\email{victor.nador@u-bordeaux.fr}
\affiliation{LaBRI, Univ. Bordeaux, 351 cours de la Lib\'eration, 33405 Talence, France, EU}

\author{{\bf A. Tanasa}}\email{ntanasa@u-bordeaux.fr}
\affiliation{LaBRI, Univ. Bordeaux, 351 cours de la Lib\'eration, 33405 Talence, France, EU}
\affiliation{H. Hulubei Nat. Inst. Phys. Nucl. Engineering,P.O.Box MG-6, 077125 Magurele, Romania, EU}

\date{\today}

\begin{abstract}
We study the double scaling limit of the $O(N)^3$-invariant tensor model, initially introduced in Carrozza and Tanasa, {\it Lett. Math. Phys.} (2016). This model has an interacting part containing two types of quartic invariants, the tetrahedric and the pillow one. For the 2-point function, we rewrite the sum over Feynman graphs at each order in the $1/N$ expansion as a \emph{finite} sum, where the summand is a function of the generating series of melons and chains (a.k.a. ladders). The graphs which are the most singular in the continuum limit are characterized at each order in the $1/N$ expansion. This leads to a double scaling limit which picks up contributions from all orders in the $1/N$ expansion. In contrast with matrix models, but similarly to previous double scaling limits in tensor models, this double scaling limit is summable.
The tools used in order to prove our results are combinatorial, namely a thorough diagrammatic analysis of the Feynman graphs, as well as an analytic analysis of the singularities of the relevant generating series.
\end{abstract}

\keywords{$O(N)^3$-invariant tensor models, double scaling limit, tensor graph degree, Feynman diagrams, schemes, generating functions, singularity analysis}

\maketitle

\tableofcontents

\section{Introduction}

Tensor models~\cite{Gu3, Riv1, Ta1} are a natural generalization to higher dimensions of the celebrated $2$-dimensional matrix models~\cite{DFGiZJ}. Thus, both matrix and tensor models share some common features. In particular, matrix and tensor models admit a $1/N$ expansion, where $N$ is the size of the matrix, resp. of the tensor. However, the topological aspect of this large $N$ expansion is drastically different in the tensor case with respect to the matrix case.

The parameter governing the matrix $1/N$ expansion is the genus $g$ of the underlying ribbon graphs (a.k.a. as combinatorial maps in the mathematical literature) ~\cite{Kaz,Dav,BreIt}. The matrix large $N$ limit is thus dominated by planar maps. For tensor models, the parameter governing the $1/N$ expansion is often called the \emph{degree}. The existence of the $1/N$ expansion was initially proven in \cite{Gu1,GuRi,Gu2} for the so-called colored tensor models and then extended to many families of tensor models, see for example \cite{Dartois, Dario, CarrozzaHarribey2021, CarrozzaPozsgay2018, Carrozza2018, Bonzom2016, Bonzom2018} or the recent book on combinatorial physics \cite{Ta3}. The large $N$ limit is dominated by graphs of vanishing degree, whose structure depends on the set of interactions. Here we will be interested in the $O(N)^3$-invariant model of \cite{TaCa} for which the graphs of vanishing degree are the \emph{melonic} graphs (already appearing in \cite{BoGuRiRi}). These melonic graphs are a particular family of planar graphs which are in bijection with trees (a.k.a. branched polymers in the physics literature)~\cite{BoGuRiRi, GuRy}. 

Another key feature of matrix models is the double scaling limit mechanism \cite{DoSh,BreKa,GroMi}. This limit is obtained by sending the matrix size $N$ to infinity and the coupling constant $\lambda$ to the critical value $\lambda_c$, while holding some ratio of these two quantities fixed. This ratio is chosen that maps of all genus $g\geq 0$, and not just planar ones, contribute in this limit. To achieve this, one must know that the series of maps of genus $g$ behaves like $(\lambda_c-\lambda)^{\frac{5}{2}(1-g)}$ (for pure gravity). The double scaling parameter is thus
\begin{equation}
\kappa = N (\lambda_c-\lambda)^{5/4},
\end{equation}
and the 2-point function is $\sum_{g\geq0} \kappa^{2-2g} G_g$.

The double scaling limit for tensor models is defined in a similar way. This also requires to identify the singularities of the generating series of graphs of fixed degree. Adapting the scheme decomposition of maps from \cite{ChaMa} to tensor models has proven very useful. This was done first in ~\cite{GuSch} for the colored tensor model \cite{Gu1} (whose symmetry group is $U(N)^d$, $d\geq 3$), and then in \cite{TaFu} for the multi-orientable tensor model \cite{Ta2} (whose symmetry group is $U(N)\times O(N) \times U(N)$). These results allow to identify sub-families of graphs of any fixed degree which are more singular than others in the continuum, and thus dominate in the double scaling limit \cite{GuTaYo}.

Another tensor model, whose symmetry is $O(N)^3$, was defined in \cite{TaCa}. 
This model has two types of quartic $O(N)^3-$invariant interactions, a tetrahedron one and a pillow one (the names tetrahedron and pillow coming from the graphical representations of these interactions, see Sec. \ref{sec:def} below).

Tensor models and in particular the $O(N)^3$-invariant model \cite{Witten, KleTa} have received emphasized attention recently due to their large $N$ behavior being similar to that of the Sachdev-Ye-Kitaev (SYK) model \cite{Kitaev}. An particularly appealing feature of those models is that the large $N$ limit can be calculated, due to the special structure of melonic graphs, and leads to conformal invariance in the IR. Since then, several studies of various SYK-like tensor models have been done, both from a combinatorial and a mathematical physics point of view, see for example \cite{Bonzom, Bonzom2, Bonzom3, CarrozzaPozsgay2018, Pascalie} or again the recent book \cite{Ta3}. Note that these studies, and the present one included, focus on the behavior of the Feynman amplitudes with respect to $N$, and not on the part coming from spacetime integrals.

\medskip

In this paper, we study the double scaling limit mechanism for the $O(N)^3$-invariant tensor model with both tetrahedral and pillow interaction terms. We establish the existence of the double scaling limit and compute the $2$-point function in this limit. Similarly to the double scaling limits of other tensor models previously studied, we perform a scheme decomposition {\it à la} \cite{ChaMa} to identify the families of graphs which are the most singular in the continuum at any fixed order in the $1/N$ expansion. We then obtain a double scaling limit which picks up contributions from all degrees, as in the matrix case. The 2-point function is however summable and has a tree-like behavior, unlike the matrix case.

We will explain below in Section \ref{sec:def} why we consider the 2-point function as opposed to the free energy. We also believe that the techniques we use can be extended to compute any $2r$-point functions, as was done in~\cite{GuTaYo} for the multi-orientable model. In particular, the definition of the double-scaling parameter used in the sequel is the same for all $2r$-point functions of the model and for the free energy.

Our results rely on combinatorial objects already introduced in \cite{GuSch, TaFu, BeCa}: melons, dipoles, chains, schemes. Thanks to the scheme decomposition, we can describe the singularities of the series of graphs at any fixed degree in terms of the singularities of the series of melons and chains, which are simple objects.

A variant of tensor models consists in seeing them as multi-matrix models, with $D$ matrices and a symmetry $U(D)$ or $O(D)$ \cite{BonzomCombes2013, FeVa, TaFe, Ferrari}. The double scaling limit of a $U(N)^2\times O(D)$-invariant model has been found in \cite{BeCa} using again similar techniques. In a companion paper to appear we will also give the double scaling limit of other such multi-matrix models.

\paragraph*{Strategy of the paper.} As in previous papers on the double scaling limit of tensor models \cite{GuSch, TaFu, BeCa}, the strategy is the following: 
\begin{enumerate}
\item \label{enum:Classify} Classify the graphs according to their degree.
\item \label{enum:WriteGF} Write the generating series of graphs at fixed degree in terms of known series, and identify its singularities.
\item \label{enum:DS} Describe the most singular contributions and resum them using a double scaling limit.
\end{enumerate}
 
\medskip

\paragraph*{Scheme decomposition.} Let us explain Step \ref{enum:Classify} above. There is an infinite number of graphs of fixed degree. What we are looking for is packing the infinities into well controlled graphical objects, which will be the \emph{melons} and \emph{chains} (also known, in the theoretical physics literature, as ladders). Melons can be eliminated/added on every edge without changing the degree, while chains can be extended/reduced also without changing the degree. \emph{Schemes} will then be defined as graphs without melons and with minimal chains.

This gives the following bijective result:
\begin{theorem}
Any 2-point graph can be reconstructed from a unique scheme by extending some chains and adding some melons on the edges.
\label{thm:graph-scheme}
\end{theorem}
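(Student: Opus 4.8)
The plan is to realize the decomposition as the normal form of a terminating, confluent rewriting system on $2$-point graphs, and then obtain the bijection by inverting the rewriting moves. Concretely, I would introduce two elementary reduction moves — \emph{melon reduction}, which replaces a maximal melonic $2$-point subgraph sitting on an edge by a single edge, and \emph{chain reduction} (unchaining), which shortens a chain to its minimal admissible length — and show that the graphs on which no move applies are exactly the schemes. The theorem then amounts to the statement that every $2$-point graph has a unique associated scheme and that the ``inverse data'' (which melons to reinsert, how far to re-extend each chain) is in bijection with the graphs having that scheme.

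First I would record that both moves leave the degree unchanged: this is where I invoke the degree bookkeeping recalled in the introduction, namely that melonic insertions on an edge and chain insertions (being built from dipoles) do not contribute to the degree. Both moves also strictly decrease the number of vertices, so any maximal sequence of moves terminates, and the terminal graphs are by construction those with no melonic $2$-point subgraph and with all chains of minimal length, i.e.\ the schemes. Termination is routine; the content is in showing the terminal object does not depend on the order of the moves.

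For that I would prove \emph{local confluence} and conclude by Newman's lemma. This is a finite case analysis of how the supports of two applicable moves can interact: disjoint supports commute trivially; nested supports (a melon inside a chain rung, a chain occurring inside the $2$-point insertion defining a melon, nested melons); and overlapping supports (two chains sharing dipoles, or a melon straddling one end of a chain). In each configuration one checks that reducing in either order leads, after at most a bounded number of extra moves, to the same graph — in particular that a ``chain of melons'' and a ``melon whose $2$-point insertion is itself a chain'' are resolved consistently. I expect this confluence step to be the main obstacle: the delicate point is choosing the definitions of \emph{maximal melon} and \emph{minimal chain} (following \cite{GuSch, TaFu, BeCa, ChaMa}) robustly enough that performing one move cannot create or destroy melonic/chain structure elsewhere in an uncontrolled way, so that only finitely many overlap patterns occur and each one genuinely resolves.

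Finally I would invert the construction. Running the moves backwards, rebuilding a graph from its scheme $\cS$ means: for each edge of $\cS$ choosing a (possibly trivial) rooted melonic $2$-point graph to insert on it, and for each chain of $\cS$ choosing a length $\geq$ its minimal length. I would check that these choices act independently on distinct edges and chains, that every $2$-point graph arises this way (from its scheme), and that the scheme of a reconstructed graph is again $\cS$ — because the very moves defining the normal form delete exactly the inserted melons and the added rungs and create nothing new. This yields the claimed bijective reconstruction and completes the proof.
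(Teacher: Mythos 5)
Your framework---melon and chain reductions as a terminating rewriting system, uniqueness of the normal form via Newman's lemma, then inversion---is in the same spirit as the paper, which in fact offers even less detail: it defines the scheme by removing all melonic 2-point subgraphs and then contracting maximal chains to chain-vertices, remarks that the result does not depend on the order of melon removals, and asserts the converse reconstruction. But as a proof your text has a genuine gap: the two decisive verifications are announced rather than carried out. The local-confluence case analysis (nested and overlapping supports of melon and chain reductions) is exactly where you say you ``expect the main obstacle'' to lie, and you do not resolve a single overlap pattern; in particular you never show that a chain reduction cannot create or destroy melonic 2-point subgraphs elsewhere, nor that removing a melon sitting between two dipoles merges them into a chain in a way independent of the order of removals. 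Without that case analysis, the uniqueness of the scheme is simply not established by your argument.

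Second, your inverse parametrization is too coarse to give the bijection. Rebuilding a graph from its scheme is not just ``a melonic insertion per edge plus a length per chain'': to re-extend a chain-vertex one must also choose the sequence of dipole types and colors compatible with its label (color $i$ versus broken), and the melonic 2-point insertions sitting between consecutive dipoles of the chain---these melons live on edges that do not exist in the scheme, so they are not covered by your ``melon on each edge of $\cS$'' data, and with your description surjectivity fails. Moreover, a melon at the extremity of a chain could equally be regarded as lying inside the chain or on the adjacent scheme edge; one needs a convention (the paper's bookkeeping attaches the melonic dressing $M(t,\mu)^2$ to one side of each dipole only, in the definition of $U(t,\mu)$) so that the decoration data can be read off uniquely from the graph. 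Unique normal form alone does not give this injectivity of the decoration map, so the claimed bijection between 2-point graphs and pairs (scheme, decorations) does not yet follow. Both points are fixable within your strategy, but they are the actual content of the theorem and must be written out.
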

In other words, each scheme represents a family of graphs obtained by extending some chains and adding some melons. It is therefore possible to repackage the sum of all graphs of any given degree $\omega$ as a sum over schemes of the same degree. The 2-point function thus reads
\begin{equation*}
G_\omega = \sum_{\text{Schemes $\mathcal{S}$ of degree $\omega$}} P_{\mathcal{S}}(C(M), M)
\end{equation*}
where $P_{\mathcal{S}}$ is a polynomial, $C$ the generating series of chains, and $M$ the generating series of melons. The quantity $P_{\mathcal{S}}(C(M), M)$ is the amplitude resulting from the sum over all graphs in the family of the scheme $\mathcal{S}$. The singularities of $G_\omega$ may then come from the series $C$, $M$ and from the sum over schemes of degree $\omega$ if there is an infinite number of them.

If this were the case, one would look for another operation which would encode the contribution of infinitely many schemes into a new structure. However, this turns out not to be the case. We prove the following result:
\begin{theorem}
The set of schemes of a given degree is finite in the $O(N)^3$-invariant tensor model.
\label{th:sch}
\end{theorem}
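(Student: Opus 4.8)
\medskip
\noindent\textbf{Proof strategy.} The plan is to reduce the statement to a bound of the shape ``a scheme of degree $\omega$ has at most $c\,\omega$ vertices'' for some absolute constant $c$. Once such a bound is available the theorem is immediate: every Feynman graph of the model is built out of the two fixed quartic vertices (tetrahedron and pillow) glued by propagators, so there are only finitely many $2$-point graphs on a bounded number of vertices, and in particular finitely many schemes of degree $\omega$.

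To establish the vertex bound I would first use the defining properties of a scheme to localise where vertices can accumulate. A scheme contains no melon, so no vertex sits inside a melonic $2$-point insertion; and a scheme has only minimal chains, so each maximal chain carries only a bounded number of dipoles, hence of vertices. Contracting every maximal chain of a scheme $\cS$ to a single type-labelled edge produces a reduced graph $\widehat{\cS}$ which is simultaneously melon-free and chain-free, i.e.\ contains no removable dipole of any flavour. Since each edge of $\widehat{\cS}$ encodes only a bounded amount of data -- one of the finitely many chain types, together with a minimal number of dipoles -- it suffices to bound the number of vertices of $\widehat{\cS}$, and hence of $\cS$.

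The heart of the argument is then the estimate: a $2$-point graph with no melon and no chain, i.e.\ with no removable dipole, and of degree $\omega$, has a number of vertices bounded by an affine function of $\omega$. I would derive this from the combinatorial formula expressing $\omega$ in terms of the numbers of vertices and faces of the graph, combined with lower bounds on the lengths of the faces forced by the absence of melons and chains: a face that is too short would exhibit a dipole, which in a melon-free, chain-free graph cannot occur. Feeding these length bounds into the face count -- which is itself controlled by the number of vertices, all of which are quartic -- caps the number of faces from above in terms of the number of vertices, and the degree formula then turns this into the desired affine bound on the number of vertices. I expect this step to be the main obstacle: the $O(N)^3$ model has several flavours of dipole (tetrahedron--tetrahedron, pillow, and the mixed ones), and for each one must check either that it produces a short face that is excluded in a scheme or that it is genuinely reducible; carrying out this case analysis and extracting the correct numerical constants is where the real work lies.

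Finally I would assemble the pieces. By the previous step there are finitely many melon-free, chain-free $2$-point graphs with at most $c\,\omega$ vertices; over each such graph there are finitely many ways to decorate its edges with a minimal chain of one of the finitely many admissible types; and by Theorem~\ref{thm:graph-scheme} distinct schemes yield distinct (disjoint) families of graphs, so this enumeration involves no overcounting. Hence the set of schemes of degree $\omega$ is finite.
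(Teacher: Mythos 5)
There is a genuine gap, and it sits exactly at the step you yourself flag as the heart of the argument. In this model the degree of a vacuum graph is $\omega(\bar{\cG}) = 3 + \tfrac{3}{2}n_t + 2n_p - F(\bar{\cG})$, and since each edge of colour $0$ lies on exactly one face of each colour, $\sum_p pF^{(p)} = 6n$ for a graph with $n$ tetrahedra. Eliminating $n$ gives the paper's Eq.~\eqref{eq:face_bound}, in which a face of degree $p$ enters with weight $p-4$: faces of degree $4$ drop out completely, and short faces enter with the \emph{wrong} sign. Consequently, even if the absence of melons, dipoles and chains forced every face to have length at least $4$ (it does not, see below), you would only obtain $F \leq \tfrac{3}{2}n$ and hence $\omega \geq 3$ — a lower bound on the degree, not an affine upper bound on the number of vertices. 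A graph all of whose faces have degree $4$ has degree exactly $3$ no matter how many bubbles it contains, so the possibility of arbitrarily large melon-, dipole- and chain-free graphs of fixed degree cannot be excluded by face-length bookkeeping alone. Ruling this out is precisely the hardest part of the paper's proof of Lemma~\ref{lemma:sec}: one must show structurally that a bubble incident only to non-self-intersecting faces of degree $4$ cannot sit at distance $3$ or more from the set $B_{\text{exc}}$ of bubbles touching faces of other degrees, and this forces the graph into finitely many explicit configurations. Your premise that ``a face that is too short would exhibit a dipole, which cannot occur'' is also false: schemes do contain faces of degree $1$ (tadpoles — they are even the leaves of the dominant schemes), non-isolated dipoles (degree-$2$ faces) and degree-$3$ faces; the paper does not exclude them but bounds their number in terms of $\omega$ and the number of isolated dipoles by inductive surgery arguments.

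A second, smaller gap: bounding the number of bubbles of the chain-contracted graph $\widehat{\cS}$ does not bound the number of chain-vertices, because chain-vertices can be adjacent to one another with no bubble in between (for instance three chains meeting along a pure colour-$0$ loop, which is one of the inner nodes of the dominant schemes). The number of chains therefore needs its own argument; in the paper this is Lemma~\ref{lemma:fst}, proved via the skeleton graph $\cI(\cS)$, the behaviour of the degree under separating and non-separating chain removals, and a spanning-tree/valency count yielding $N(\cS)\leq 7\omega-1$. Your plan, as written, neither supplies this count nor a workable substitute for the degree-$4$-face analysis, so the finiteness claim does not yet follow.
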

Enumerating all schemes of a given degree is still a hard combinatorial problem which has not been solved. However, in the double scaling limit we only need to identify a subset of schemes of given degree, which we will be able to obtain explicitly.

\medskip
\paragraph*{Dominant schemes and double scaling limit.} The double scaling limit consists in taking the large $N$ limit while sending the coupling constant $\lambda$ to a critical value where the contribution of some families of graphs become divergent, while maintaining a certain parameter $\kappa(N, \lambda)$ fixed. Note that this parameter $\kappa$ is defined such that the contribution of non-melonic graphs are enhanced in this limit, so that graphs of arbitrarily large order can contribute. 

Since there is a finite number of schemes of any fixed degree $\omega$, the sum over schemes in $G_\omega$ is a \emph{finite} sum. All singularities then come the series $C$ and $M$, and the double scaling limit can be obtained by finding the schemes for which $P_{\mathcal{S}}(C(M),M)$ is most singular. Those schemes are said to be \emph{dominant} in the continuum limit, because they are the most divergent when the coupling constant gets close to its critical value. As we will see in the sequel, the dominant schemes are the ones which have a maximal number of a specific type of chains, called \textit{broken} chains, and they turn out to be in bijection with binary trees.

Thus, the last result of the paper is:
\begin{theorem}
\label{resultatfinal}
The dominant schemes are in bijection with rooted binary trees and the 2-point function, in the double scaling limit writes:
\begin{equation}
G_{2}^{DS}(\mu)
= M_c(\mu) \left(1 + N^{\frac{11}{12}}\sqrt{3}\frac{t_c(\mu)^\frac{1}{4}}{\left(1+6t_c(\mu)\right)^\frac{1}{2}} \frac{1-\sqrt{1-4\kappa(\mu)}}{2\kappa(\mu)^\frac{1}{2}}\right)
\end{equation}
\end{theorem}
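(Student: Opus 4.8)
With Theorems~\ref{thm:graph-scheme} and~\ref{th:sch} in hand, the 2-point function at fixed degree is the \emph{finite} sum $G_\omega = \sum_{\mathcal{S}} P_{\mathcal{S}}(C(M),M)$ over schemes $\mathcal{S}$ of degree $\omega$, so the task splits into: (i) locating and classifying the singularities of the elementary series $M$ (melons) and $C$ (chains); (ii) reading off from $P_{\mathcal{S}}$ which schemes are most singular at each $\omega$; (iii) resumming those dominant contributions over all degrees. The plan is to carry these out in order.

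\emph{Analytic input.} First I would show that the melon series $M=M(t)$ has a unique dominant singularity at some $t_c$, of square-root type, $M(t)=M_c-a\sqrt{1-t/t_c}+\cdots$, from the algebraic equation it satisfies together with standard singularity analysis; this also produces $M_c(\mu)$ and $t_c(\mu)$ after the relevant change of variables to $\mu$. Then I would write, for each type of chain, the series obtained by summing over the number of its links: each is a geometric-type sum $\frac{u(M)}{1-v(M)}$ for explicit polynomials $u,v$. The decisive dichotomy is whether $v(M_c)=1$: for such chains the length-sum acquires an extra factor blowing up like an inverse power of $\sqrt{1-t/t_c}$ at criticality, whereas the others stay finite there. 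I expect this to single out exactly the \emph{broken} chains as the carriers of an enhanced singularity, each contributing one factor $(1-t/t_c)^{-1/2}$.

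\emph{Dominant schemes.} Substituting the above into $P_{\mathcal{S}}(C(M),M)$ gives, for a scheme $\mathcal{S}$ of degree $\omega$, a behaviour $\sim(1-t/t_c)^{-\rho(\mathcal{S})/2}$ up to less singular terms, where $\rho(\mathcal{S})$ is a combinatorial functional of $\mathcal{S}$ governed by its number of broken chains. I would then solve the discrete optimization $\max\{\rho(\mathcal{S}):\deg\mathcal{S}=\omega\}$, proving that the maximum is a fixed linear function of $\omega$ and is attained precisely by the schemes maximizing the number of broken chains. Establishing this bound — and ruling out that any alternative scheme topology reaches the same scaling through some other mechanism — is the technical heart of the argument, and I expect it to be \textbf{the main obstacle}: it requires controlling all ways degree can be distributed among vertices, melonic two-point insertions and the several chain types. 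Once the extremal schemes are pinned down, inspecting their structure — broken chains organized into a tree whose branchings are tetrahedral vertices, with no room for anything else — yields the announced bijection with rooted binary trees, $\omega$ corresponding to a linear function of the number of leaves.

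\emph{Resummation.} Writing the contribution of a dominant scheme with $k$ broken chains as the melonic factor $M_c$ times a weight to the power $k$, and organizing the sum over all dominant schemes via the tree bijection, the total becomes $M_c$ times $\sqrt{\kappa}$ multiplied by the Catalan generating series $\sum_{T}\kappa^{|T|}=\frac{1-\sqrt{1-4\kappa}}{2\kappa}$, i.e.\ $M_c\cdot\frac{1-\sqrt{1-4\kappa(\mu)}}{2\sqrt{\kappa(\mu)}}$ as in the statement (the leftover half-power being the one broken chain not accounted for by the degree). The double-scaling parameter $\kappa=\kappa(N,\mu)$ is \emph{defined} so that the negative power of $N$ attached to each broken chain is exactly compensated by its critical enhancement $(1-t/t_c)^{-1/2}$; this fixes the scaling relation between $N$ and $t_c-t$, identifies the composite variable as $\kappa(\mu)$, and brings out both the overall factor $N^{11/12}$ and the explicit coefficient $\sqrt{3}\,t_c(\mu)^{1/4}/(1+6t_c(\mu))^{1/2}$. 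Pulling back from generating series to amplitudes through the transfer theorems, and isolating the degree-$0$ (melonic) term $M_c(\mu)$ that factors out, gives the closed form for $G_2^{DS}(\mu)$; since $\frac{1-\sqrt{1-4\kappa}}{2\sqrt{\kappa}}$ is analytic for $|\kappa|<1/4$, the limit is summable, as claimed.
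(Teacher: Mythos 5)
Your route is the same as the paper's (singularity analysis of $M$ and of the chain series, dominant schemes as those maximizing broken chains, bijection with rooted plane binary trees, Catalan resummation with $\kappa$ balancing $N$ against criticality), but two genuine gaps remain. The first is the step you yourself flag as the main obstacle: the classification of dominant schemes is where the actual content lies, and your sketch neither supplies the argument nor anticipates its correct outcome. The paper gets it essentially for free from the skeleton-graph machinery already built for the finiteness Theorem~\ref{th:sch} (Lemmas~\ref{thm:ChainRemoval} and~\ref{thm:SkeletonGraph}): a non-separating chain removal lowers the degree by $1$ to $3$, a separating one is additive on degrees, and degree-$0$ components must have valency at least $3$ in the skeleton. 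These facts bound the number of chain-vertices at degree $\omega$ and force the extremal skeleton to be a tree with exactly $4\omega-1$ edges, all broken chains, whose leaves are the three degree-$1/2$ tadpoles (weight $3t^{1/2}$ each, with $2\omega$ leaves) and whose internal nodes are degree-$0$ six-point components of exactly two kinds: an opened tetrahedral vertex \emph{or} a bare color-$0$ propagator, see \eqref{InnerNodes}, giving the weight $1+6t$ per node. Your anticipated structure (``branchings are tetrahedral vertices, with no room for anything else'') omits the propagator-type node and does not identify the leaves, so from your classification one would get $(6t)^{2\omega-1}$ instead of $(1+6t)^{2\omega-1}$ per tree; the coefficient $\sqrt{3}\,t_c(\mu)^{1/4}/(1+6t_c(\mu))^{1/2}$ in the statement could then only be copied, not derived.

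The second gap is in the analytic input. Your dichotomy ``whether $v(M_c)=1$'' conceals the model-specific algebraic fact that must actually be proven: the point where $U(t,\mu)=\tfrac13$ coincides, for every $\mu\geq 0$, with the melonic critical point $t_c(\mu)$. The paper establishes this by using monotonicity of $t\mapsto U(t,\mu)$ to discard $U=1$, and then showing that the equation $U=\tfrac13$ reduces to the same cubic \eqref{CriticalEquation} as the criticality condition \eqref{eq:Mc_poly} for $M$. Without this identity you cannot conclude that $B(t,\mu)\sim \mathrm{const}\cdot(1-t/t_c)^{-1/2}$ at $t_c$; and you must also exclude the third alternative your dichotomy does not mention, namely $U$ reaching $\tfrac13$ strictly before $t_c$, in which case the dominant singularity would sit at a different point and be a simple pole rather than an inverse square root, changing both the exponents and the definition of $\kappa$. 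Once these two points are supplied, your resummation step matches the paper's computation leading to $G_2^{DS}(\mu)$.
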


\medskip

\paragraph*{Organization of the paper.} 

We will compute the $2$-point function in the double scaling limit $G_2^{DS}$ of the model. This is achieved in the following three sections.
\begin{description}
\item[Feynman graphs, melons, dipoles, chains and singularity analysis] In this section, we briefly present the model and its Feynman graphs. Particular emphasis is given to melonic graphs, dipoles and chains. Their generating functions are presented. We identify their critical points. The only relevant critical points for the double scaling limit are associated to a specific type of chains called \textit{broken} chains.

\item[Finiteness of the number of schemes] In this section, we give the proof of Theorem~\ref{th:sch} above. 
This proof relies on combinatorial arguments following from the proof of the multi-orientable model given in~\cite{TaFu}. 

\item[Identification of the dominant schemes and double scaling limit] In this last section, we identify all dominant schemes of a given degree and compute the double scaling limit of the 2-point function $G_2^{DS}$ of the model.
\end{description}


\section{Definition of the model and its $1/N$-expansion}
\label{sec:def}

\subsection{Feynman graphs and their degree}

The $O(N)^3$-invariant tensor model considered here was first proposed in~\cite{TaCa} as a real variant of the complex $U(N)^3$-invariant tensor models. The models involves a tensor field with components $\phi_{abc}$, each index $a,b,c$ ranging in $\{1, \dotsc, N\}$. Interactions are required to have an $O(N)^3$ symmetry, where each copy of the $O(N)$ group acts separately on an index of the tensor:
\begin{equation}
\phi_{abc} \rightarrow \phi'_{a'b'c'} = \sum_{a,b,c=1}^N O_{a'a}^1 O_{b'b}^2 O_{c'c}^3 \phi_{abc} \qquad O^i \in O(N)
\end{equation}
An index in position $i\in\{1, 2, 3\}$ is said to be of \emph{color} $i$.

Many $O(N)^3$ tensor invariants can be constructed while respecting this symmetry \cite{AvBenDub}. As usual in tensor model literature, we consider \emph{bubble polynomials}, built as follows. Take a finite set of fields, and for each color $i=1, 2, 3$ take a pairing (a.k.a. perfect matching) between the fields. Finally, contract the indices of color $i$ following this pairing to get an invariant polynomial, i.e. 
\begin{equation}
\sum_{a_i=1}^N \phi_{\dotsb a_i \dotsb} \phi_{\dotsb a_i \dotsb}
\end{equation}
These invariants can be graphically  represented by \emph{bubbles}. A bubble $B$ is a $3-$regular properly-edge-colored graph. The correspondence between bubbles and bubble polynomials is the following: 
\begin{itemize}
\item Each vertex of the bubble represents a copy of the field $\phi$.
\item One has an edge of color $i$ between two vertices if and only if the corresponding fields are contracted along the indices of color $i$.
\end{itemize}

The action of tensor models is usually a sum over some connected bubbles\footnote{This is a natural generalization of single-trace matrix models. In matrix models, bubble polynomials reduce to products of traces of the matrix to arbitrary powers. The action usually focuses on single-trace interactions, which correspond to connected bubbles.}. We further limit ourselves to quartic interactions. This leaves one quadratic invariant bubble polynomial, which gives rise to the propagator of the model, and four invariant bubble polynomials which are the interaction terms of the model. These terms are respectively called the tetrahedral, and pillow of color $i=1,2,3$ (or quartic melonic bubble) interactions. The pillow of color $i$ is by convention the one which is disconnected when the edges of color $i$ are removed. The tetrahedral bubble is obviously invariant under color permutations, but the pillows are not and are instead swapped under color permutations. They correspond to the following invariants:
\begin{align}
I_k(\phi) &= \sum_{a, b, c} \phi_{abc}\phi_{abc} = 
\begin{array}{c}\includegraphics[scale=.27]{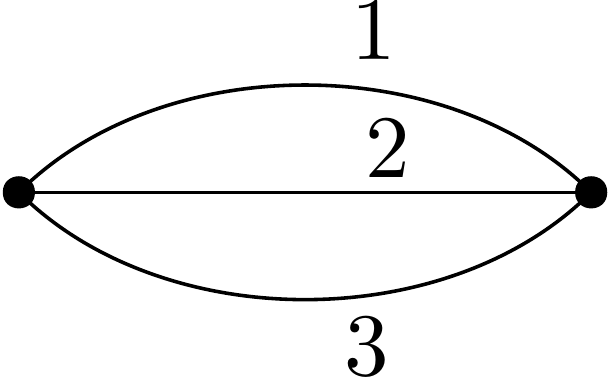}\end{array}\\
I_t(\phi) &= \sum_{a, a', b, b', c, c'} \phi_{abc}\phi_{ab'c'}\phi_{a'bc'}\phi_{a'b'c} = \begin{array}{c}\includegraphics[scale=.25]{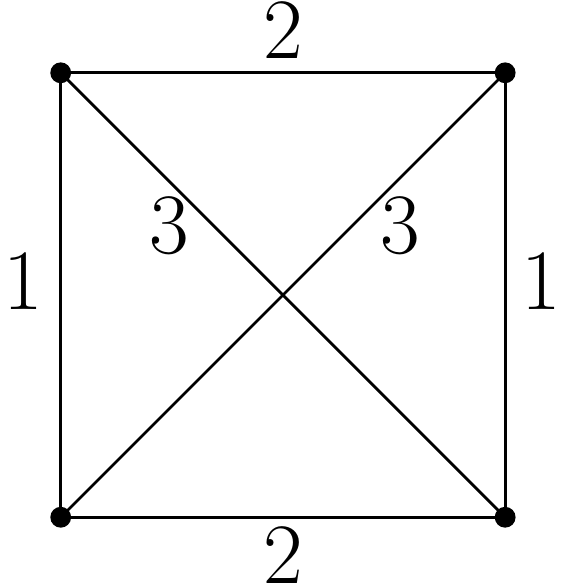}\end{array}\\
I_{p,1}(\phi) &= \sum_{a, a', b, b', c, c'} \phi_{abc}\phi_{a'bc}\ \phi_{ab'c'}\phi_{a'b'c'} = \begin{array}{c}\includegraphics[scale=.25]{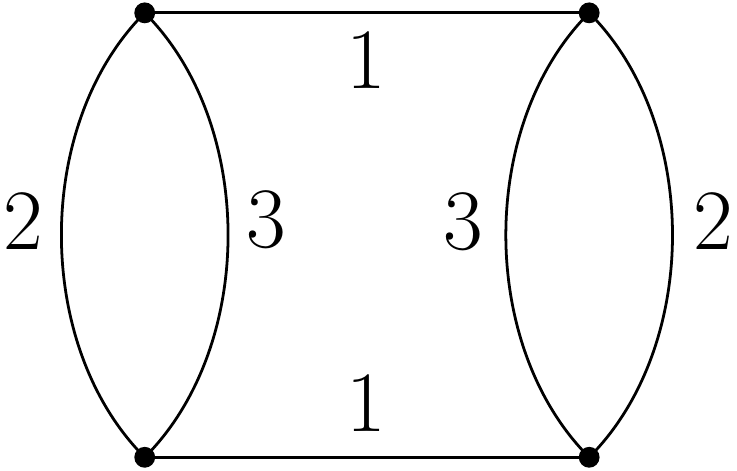}\end{array}\\
I_{p,2}(\phi) &= \sum_{a, a', b, b', c, c'} \phi_{abc}\phi_{ab'c}\ \phi_{a'bc'}\phi_{a'b'c'} = \begin{array}{c}\includegraphics[scale=.25]{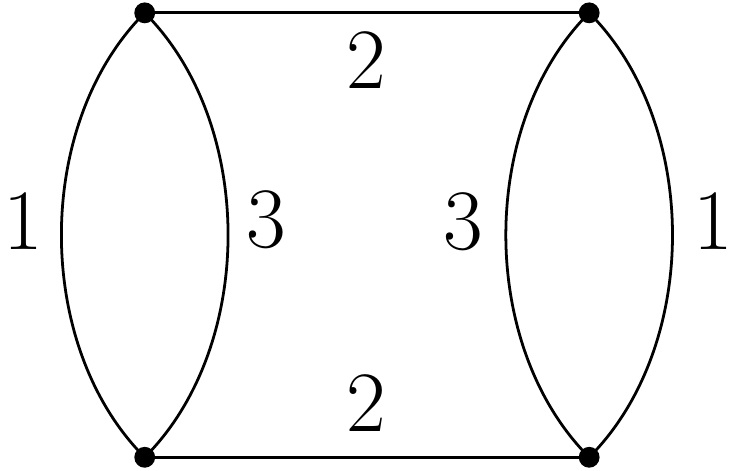}\end{array}\\
I_{p,3}(\phi) &= \sum_{a, a', b, b', c, c'} \phi_{abc}\phi_{abc'}\ \phi_{a'b'c}\phi_{a'b'c'} = \begin{array}{c}\includegraphics[scale=.25]{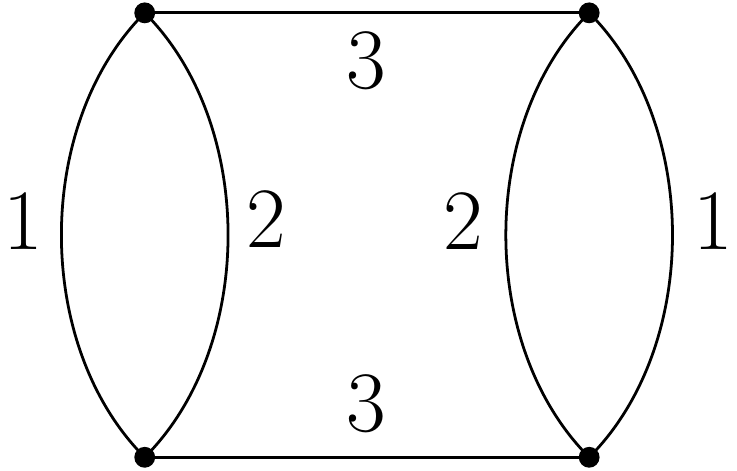}\end{array}
\end{align}

As explained above, the Feynman ``vertices'' are thus bubbles. As usual in QFT, Wick pairings connect the fields one to another. Therefore, the propagators of the Feynman graphs connect the vertices of the bubbles pairwise. A fictitious color, say $0$, can be given to the propagators, so that the Feynman graphs are $4$-regular properly-edge-colored graphs. In order not to confuse them with the edges of bubbles, we represent propagators as dashed edges. In the following, we will represent bubbles with partial coloring whenever possible i.e. when the omitted colors can be placed anyhow on the remaining pair of edges.

The action of the quartic, color-symmetric, $O(N)^3$-invariant tensor model thus writes
\begin{equation} \label{O(N)3Action}
S_N (\phi) = -\frac{N^{\frac{3}{2}}}{2}I_k(\phi) + N^{\frac{3}{2}} \frac{\lambda_1}{4}I_t(\phi) + N \frac{\lambda_2}{4}\Bigl(I_{p,1}(\phi) + I_{p, 2}(\phi) + I_{p, 3}(\phi)\Bigr)
\end{equation}
and its partition function is
\begin{equation}
Z_{N}(\lambda_1, \lambda_2) = \int \prod_{a,b,c=1}^N d\phi_{abc}\ e^{S_N(\phi)}.
\label{eq:part_fct}
\end{equation}
The scalings in $N$ of the different terms of the action, i.e. the exponents of $N$ in front of $I_k$, $I_t$ and $I_{p,i}$ for $i=1,2,3$, are chosen according to two criteria. On one hand, the free energy must admit an expansion in $\frac{1}{N}$. This gives some upper bounds on the exponents of $N$ for the interactions with respect to the scaling of the quadratic term. On the other hand, we want all interactions to contribute non-trivially to the leading order in the $\frac{1}{N}$ expansion. This gives some lower bounds on the scalings of the interactions. It has been shown in~\cite{TaCa} that for each bubble $b$ used in the action \eqref{O(N)3Action}, there is a unique scaling $\rho(b)$ so that those two criteria are satisfied. Remarkably, it admits a simple combinatorial interpretation in terms of the bubble $b$. It is given by
\begin{equation}
\rho(b) = 3 - \frac{F_b}{2}
\label{eq:scaling}
\end{equation}
where $F_b$ is the number of bicolored cycles, {\it i.e.}  the number of cycles of alternating colors $i$ and $j$ for all $i,j \in \left\lbrace 1,2,3 \right\rbrace$ with $i \neq j$. Therefore we get $\rho(k) = \rho(t) = \frac{3}{2}$ for the quadratic and tetrahedral bubbles (one bicolored cycle for every pair of colors), and $\rho(p) = 1$ for the three pillow bubbles (for example, for $I_{p,1}$ one has one bicolored cycle with colors $(1,2)$ and one with colors $(1,3)$ and two with colors $(2,3)$). In general, one can prove that if $\rho(b)$ exists, then it is unique \cite{Lionni2018, Bonzom2016}.

Thus, the $\frac{1}{N}$ expansion of the free energy reads \cite{TaCa}
\begin{equation}
F_{N}(\lambda_1, \lambda_2) = \ln Z_{N}(\lambda_1, \lambda_2) = \sum_{{ \bar{\cG}}\in{\bar{\mathbb{G}}_{}}} N^{3-\omega(\bar{\cG})} \mathcal{A}({\bar{\cG}}). 
\label{eq:largeN}
\end{equation}
The set $\bar{\mathbb{G}}$ is the set of connected $4-$regular properly-edge-colored graphs such that the subgraph obtained by removing all edges of color 0 is a disjoint union of tetrahedral bubbles and pillows and $\omega(\bar{\cG})$ is a non-negative half-integer called the degree of $\bar{\cG}$ given by
\begin{equation}
\omega (\bar{\cG}) = 3 + \frac{3}{2}n_t(\bar{\cG}) + 2n_p(\bar{\cG}) - F(\bar{\cG}),
\label{eq:deg}
\end{equation}
where
\begin{itemize}
\item $n_t(\bar{\cG})$ and $n_p(\bar{\cG})$ are respectively the number of tetrahedral and pillow bubbles in the graph,
\item $F(\bar{\cG})$ is the number of faces of the graph. Here a face is defined as a cycle of alternating colors $\{0,i\}$ for $i \in \{1,2,3\}$. The \emph{degree} of a face is the number of edges of color 0 incident to it.
\end{itemize}

\medskip

\paragraph*{2-point graphs.} The 2-point function is 
\begin{equation}
\langle \phi_{abc} \phi_{a'b'c'}\rangle = \frac{1}{N^3} G_N(\lambda_1, \lambda_2)\ \delta_{aa'} \delta_{bb'} \delta_{cc'}
\end{equation}
with $G_N(\lambda_1, \lambda_2) = \left\langle \sum_{i,j,k} \phi_{ijk} \phi_{ijk}\right\rangle$, as a consequence of the $O(N)^3$-invariance. It has an expansion on 2-point graphs, similar to that of the free energy. Denote $\mathbb{G}$ the set of 2-point graphs. From a graph $\mathcal{G}\in\mathbb{G}_{}$, we can obtain a vacuum graph $\bar{\mathcal{G}}$ by connecting the two half-edges together. $\bar{\mathcal{G}}$ is moreover equipped with a marked edge (the one formed by connecting the two half-edges), sometimes called a \emph{root}. Rooted graphs are the Feynman graphs of the expansion of $G_N(\lambda_1, \lambda_2)$.

Calculating the free energy from its Feynman expansion onto vacuum graphs requires to take into account the graph automorphisms, which can be a bit cumbersome. In contrast, rooted graphs have no symmetry. This makes the computation of the 2-point function more straightforward than that of the free energy (in fact, rooting objects is often the first step in this type of combinatorial problems).

To extract the free energy out of the 2-point function, one can introduce a coupling constant for the quadratic part of the action, then integrate the 2-point function with respect to it. This can be achieved by first rescaling the variables as follows. One rescales the coupling constants $\lambda_1, \lambda_2$ by $1/t^2$ for some parameter $t$, then rescale $\phi$ by $\sqrt{t}$ so that $t$ now only appears in front of the quadratic terms of the action. Denote $\tilde{\lambda}_{1,2} = t^2 \lambda_{1,2}$ and $\tilde{\phi} = \phi/t$, then
\begin{equation}
S_N(\phi) = -\frac{N^{\frac{3}{2}}t}{2} I_k(\tilde{\phi}) + N^{\frac{3}{2}} \frac{\tilde{\lambda}_1}{4}I_t(\tilde{\phi}) + N \frac{\tilde{\lambda}_2}{4}\Bigl(I_{p,1}(\tilde{\phi}) + I_{p, 2}(\tilde{\phi}) + I_{p, 3}(\tilde{\phi})\Bigr).
\end{equation}
In this normalization, the free energy is obtained by integrating the 2-point function with respect to $t$, at fixed $\tilde{\lambda}_1, \tilde{\lambda}_2$. The latter is
\begin{equation}
\langle \tilde{\phi}_{abc} \tilde{\phi}_{a'b'c'}\rangle = \frac{1}{t^2 N^3} G_N(\tilde{\lambda}_1/t^2, \tilde{\lambda}_2/t^2)\ \delta_{aa'} \delta_{bb'} \delta_{cc'}.
\end{equation}

\vspace{10pt}
The difference of exponents of $N$ between a 2-point graph $\cG$ and the associated vacuum graph $\bar{\cG}$ is just a $N^3$ due to opening/closing 3 faces in a systematic way. Therefore, we use as convention for the degree of $\cG\in\mathbb{G}$ the degree $\omega(\bar{\cG})$ of the graph obtained by connecting the two half-edges.





\subsection{Melons, dipoles, chains and schemes}

In this section, we introduce the different families of subgraphs which play a role in our analysis.

\paragraph{Melons\\}

Melons of the quartic $O(N)^3$-invariant tensor model were introduced and studied in~\cite{TaCa}. We give here a short summary of the relevant results.

\begin{definition}
Melonic graphs, or melons, are the graphs of vanishing degree.
\label{def:melons}
\end{definition}
The structure of melonic graphs relies on the following elementary building block:
\begin{definition}
A melonic graph is elementary if it is a 2-point graph with no non-trivial melonic $2$-point subgraph. We also say that it is an elementary melon.
\label{def:elem_mel}
\end{definition}
Another possible definition is the following: an elementary melonic graph is a 2-point, melonic graph which is 1- and 2-particle-irreducible. Melonic graphs can then be proved to be the set of graphs obtained by recursively inserting elementary melons on arbitrary edges, starting from the elementary melon itself. 



\medskip

There is an elementary melon associated to each of the quartic bubbles. An elementary melon is said to be of type I if it comes from a tetrahedral bubble and of type II if it comes from a pillow bubble.
\begin{equation}
\text{Type I} = \includegraphics[scale=.5, valign=c]{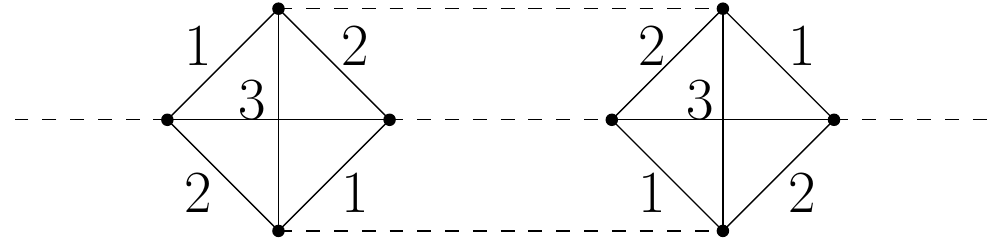} \qquad \text{Type II} = \includegraphics[scale=.5, valign=c]{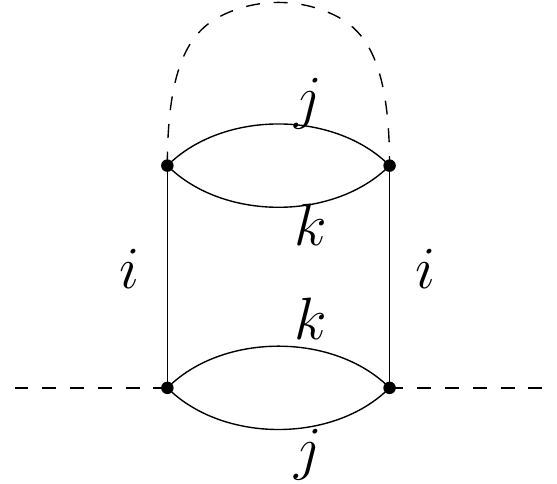}
\end{equation}
where $(a,b,c)$ is a cyclic permutation of $(1,2,3)$. Denote $M(\lambda_1, \lambda_2)$ the generating series of melonic 2-point graphs. From the recursive decomposition of melonic graphs, one has 
\begin{equation}
\includegraphics[scale=.5, valign=c]{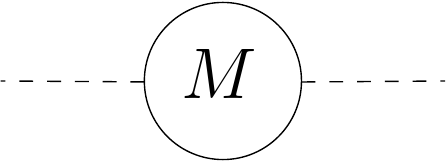} = \includegraphics[scale=.5,valign=c]{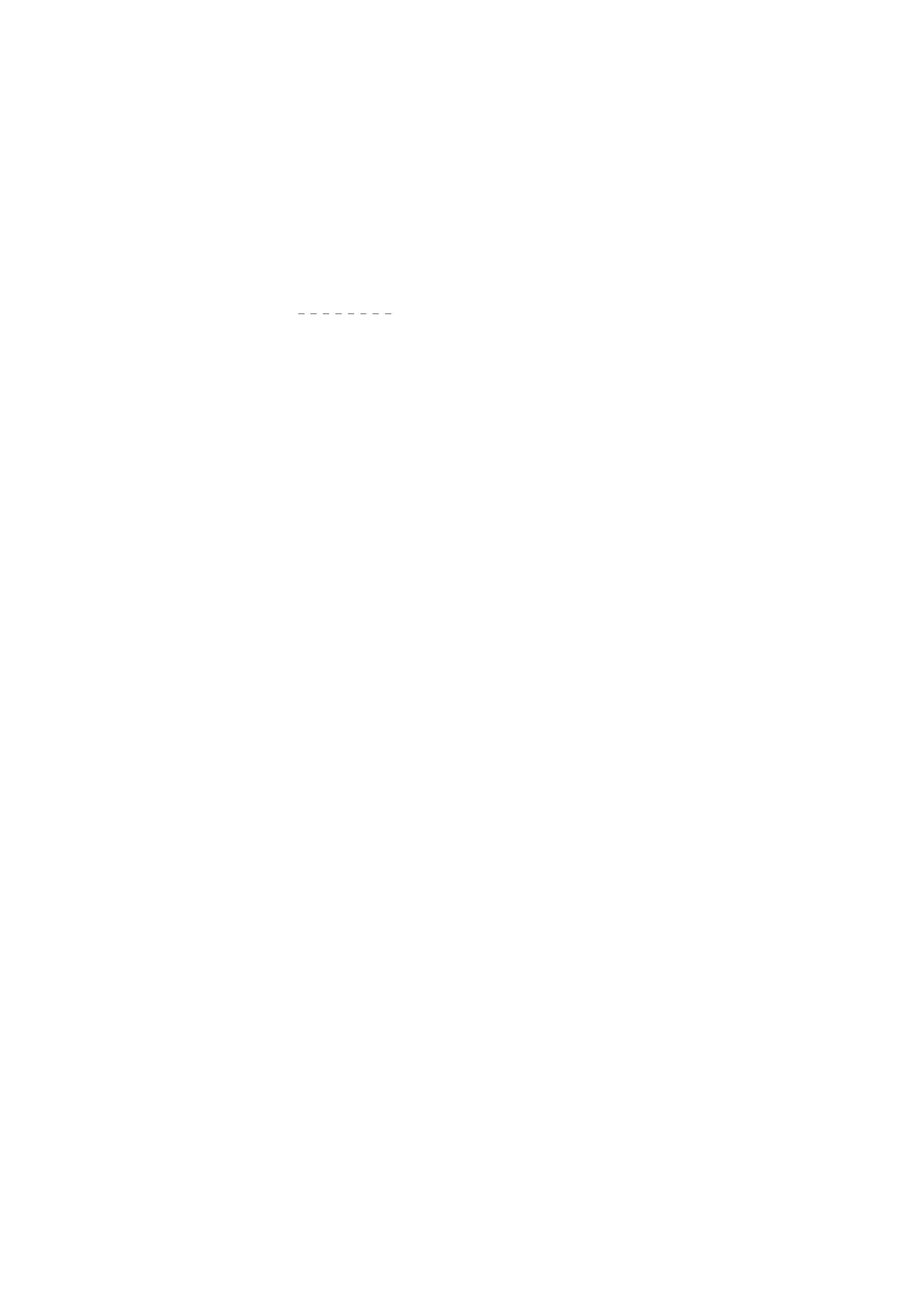} + \includegraphics[scale=.5, valign=c]{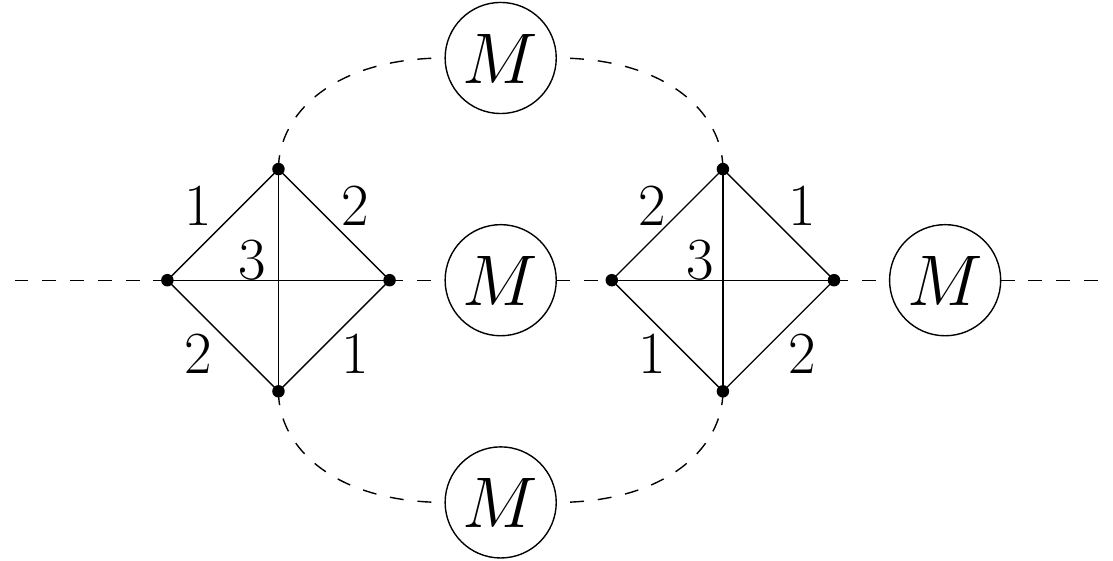} + \sum_{(i,j,k)} \includegraphics[scale=.5, valign=c]{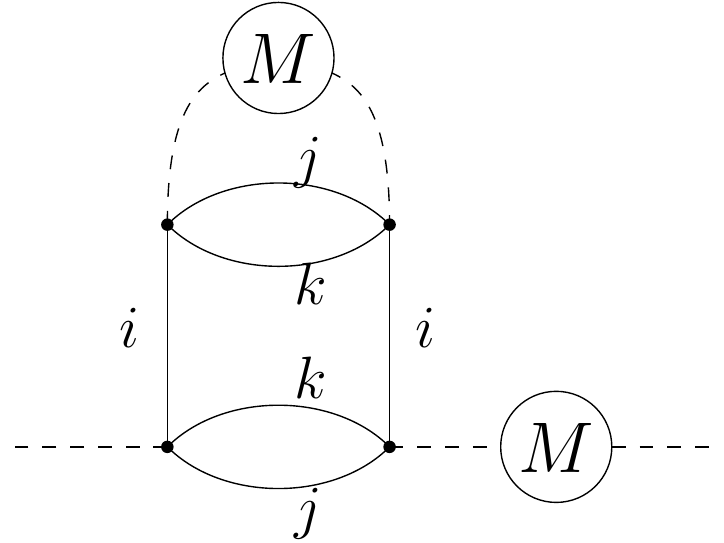}
\end{equation}
which leads to the following equation
\begin{equation}
M(\lambda_1,\lambda_2) = 1 + \lambda_1^2M(\lambda_1,\lambda_2)^4 + 3\lambda_2M(\lambda_1,\lambda_2)^2
\end{equation}
We change to the variables $(t,\mu) = (\lambda_1^2, \frac{3\lambda_2}{\lambda_1^2})$ (while retaining the notation $M$ for the melonic 2-point function), so that
\begin{equation}
M(t,\mu) = 1 + tM(t,\mu)^4 + t\mu M(t,\mu)^2
\label{eq:mel}
\end{equation}
which is the equation satisfied by the generating function of uncolored melons studied in~\cite{TaCa}.

In particular, its critical points are known. For a fixed value of $\mu \geq 0$, there is a single critical value $t_c(\mu)$ such that $(t_c(\mu),\mu)$ is a critical point of $M(t,\mu)$ and its behaviour near this critical point is
\begin{equation}
M(t,\mu) \underset{t \rightarrow t_c(\mu)}{\sim} M_c(\mu) + K(\mu)\sqrt{1 - \frac{t}{t_c(\mu)}}
\label{eq:crit_behav}
\end{equation}
where $M_c(\mu)$ is the unique positive real root of the polynomial equation 
\begin{equation} \label{eq:Mc_poly}
-3x^3+4x^2-\mu x +2\mu = 0,
\end{equation}
and $K(\mu) = \sqrt{\frac{M_c(\mu)^2\left(M_c(\mu)^2 + \mu \right)}{6M_c(\mu)^2+\mu}}$.

\paragraph{Dipoles\\}

We defined dipoles as follows:

\begin{definition}
A dipole is a $4$-point graph obtained by cutting an edge in an elementary melon.
\label{def:dip}
\end{definition}
Starting from an elementary melon of type I, cutting two edges of color 0 leaves a single face of degree $2$ untouched. If this face has color $i$, we get a dipole of type I and color $i$,
\begin{equation}
\includegraphics[scale=.45,valign=c]{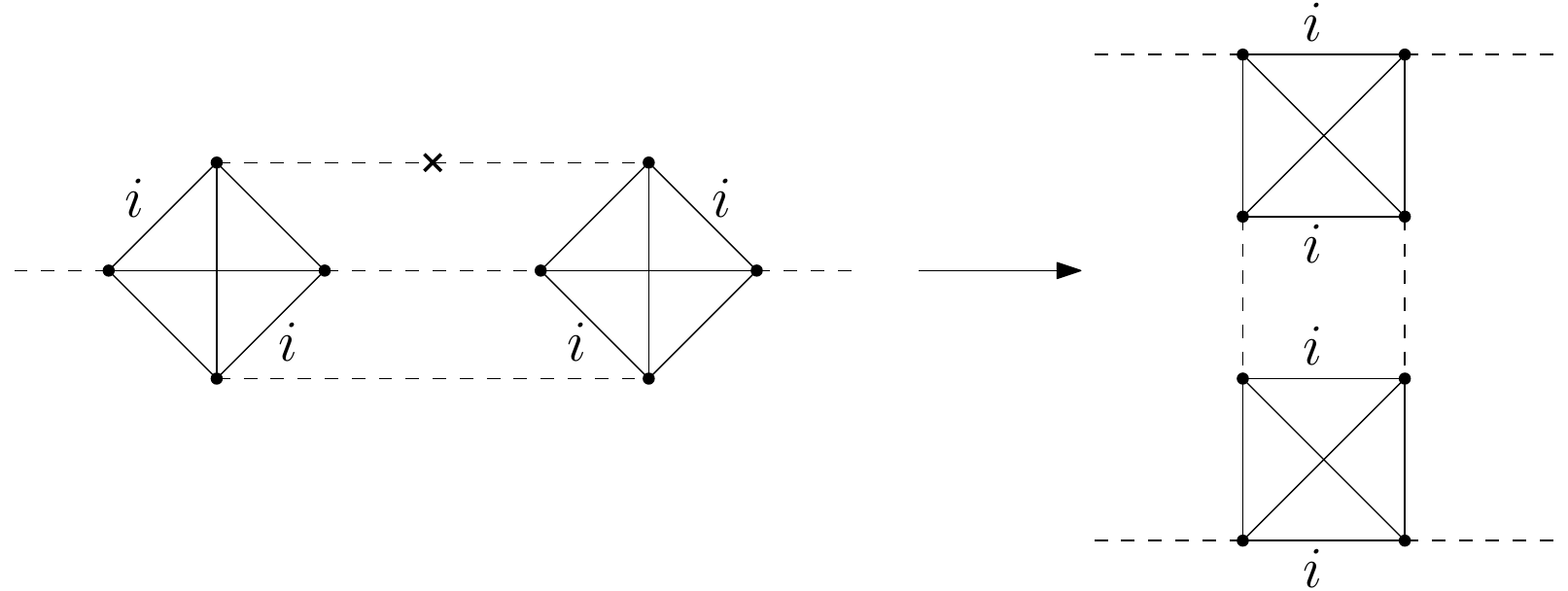}
\end{equation}
A dipole of type II is obtained by cutting the edge of color 0 of an elementary melon of type II. Dipoles of type II are simply given the color of their corresponding pillow interaction, 
\begin{equation}
\includegraphics[scale=.45,valign=c]{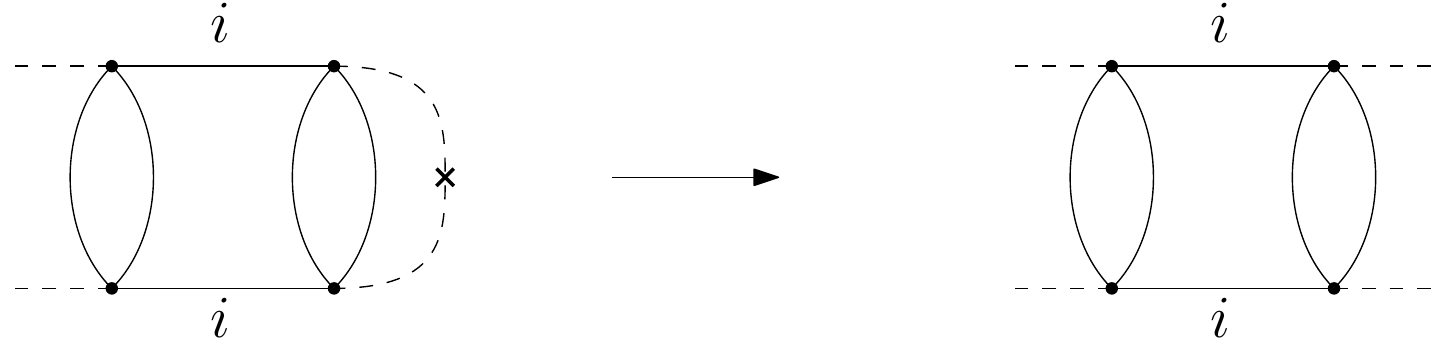}
\end{equation}
Note that dipoles of type II are always bubble-disjoint, meaning that two of them can never share a bubble. However dipoles of type I may not be bubble-disjoint, in which case we say that they are \emph{non-isolated}. It turns out that non-isolated dipoles can only occur in the following subgraph,
\begin{equation}
    \includegraphics[scale=.45]{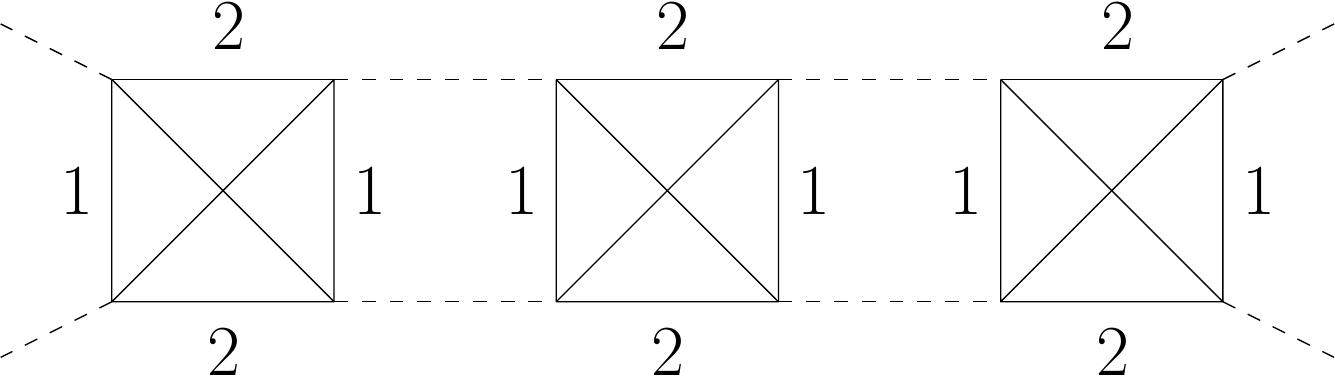}
\end{equation}
up to color permutations. Dipoles which are bubble-disjoint from others are said to be \emph{isolated}.

It is easy to check that changing a dipole of type I and color $i$ in a graph $\cG$ for a dipole of type II of the same color does not change the degree of $\cG$. Therefore, we can replace dipoles of either type which are bubble-disjoint by a \emph{dipole-vertex} which represents any of them. In terms of generating series, it means that the dipole-vertex of color $i$ is the sum of the two dipoles of color $i$,
\begin{equation}
\includegraphics[scale=0.45,valign=c]{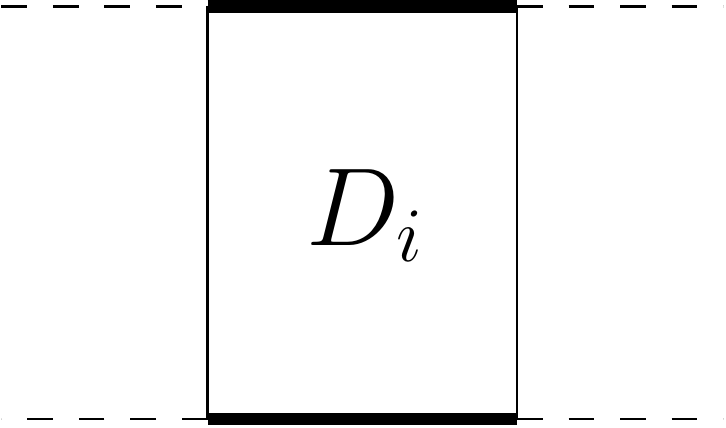} = \includegraphics[scale=0.5,valign=c]{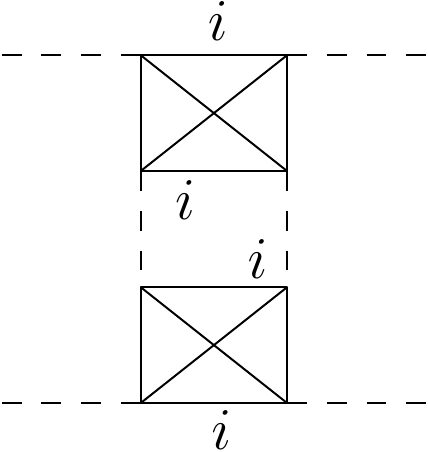} + \includegraphics[scale=0.5,valign=c]{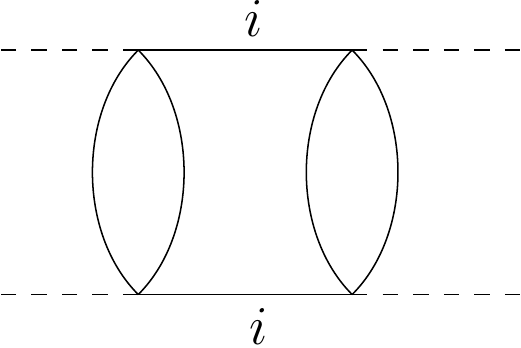}
\label{eq:dip}
\end{equation}
We have added fat edges on two opposite sides of the dipole-vertex in order to have the same symmetry as in the dipoles of type I and II. We can thus remember which external legs come from cutting the same edge in the original elementary melon: they sit on the same side of the fat edges.

Since all colors play the same role in this model, we only have to introduce one generating function for the three of them. Inserting generating functions of melons on one side of the dipoles, the generating function for dipoles reads
\begin{align}
U(t,\mu) &= \bigl(\lambda_1^2 M(t,\mu)^2 + \lambda_2\bigr) M(t,\mu)^2 \\
 &=tM(t,\mu)^4 + \frac{1}{3}t\mu M(t,\mu)^2 \underset{\eqref{eq:mel}}{=} M(t,\mu)- \frac{2}{3}t \mu M(t,\mu)^2 - 1
\label{eq:dip_rec}
\end{align}
Thus, the critical points of $U(t,\mu)$ are the same as the critical points of $M(t,\mu)$.

\vspace{5pt}
\paragraph{Chains}
\begin{definition}
A chain is either an isolated dipole, or a $4$-point function obtained by connecting an arbitrary number of dipoles by matching one side of a dipole to another side of a distinct dipole.
\label{def:chains}
\end{definition}

The \emph{length} of a chain is defined as the number of dipoles that composes the chain. By convention a chain can have just a single dipole provided it is bubble-disjoint from other dipoles. Crucially, changing the length of a chain leaves the degree unchanged. Notice that a chain of length $\ell$ contains subchains of all lengths $2\leq \ell'\leq\ell$. A chain is said to be \emph{maximal} in a graph $\bar{G}$ if it cannot be included in a longer chain in $\bar{G}$. Two different maximal chains are necessarily bubble-disjoint.
A chain is said to be of color $i$ if it only involves dipoles of color $i$,
\begin{equation}
\includegraphics[scale=0.35,valign=c]{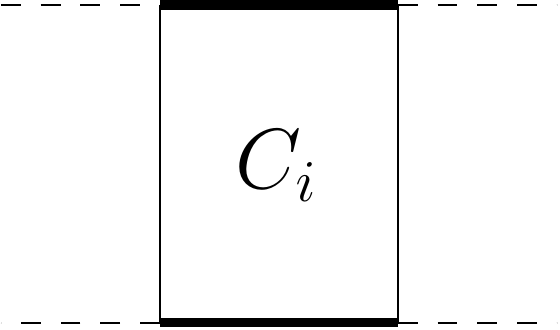} = \includegraphics[scale=0.35,valign=c]{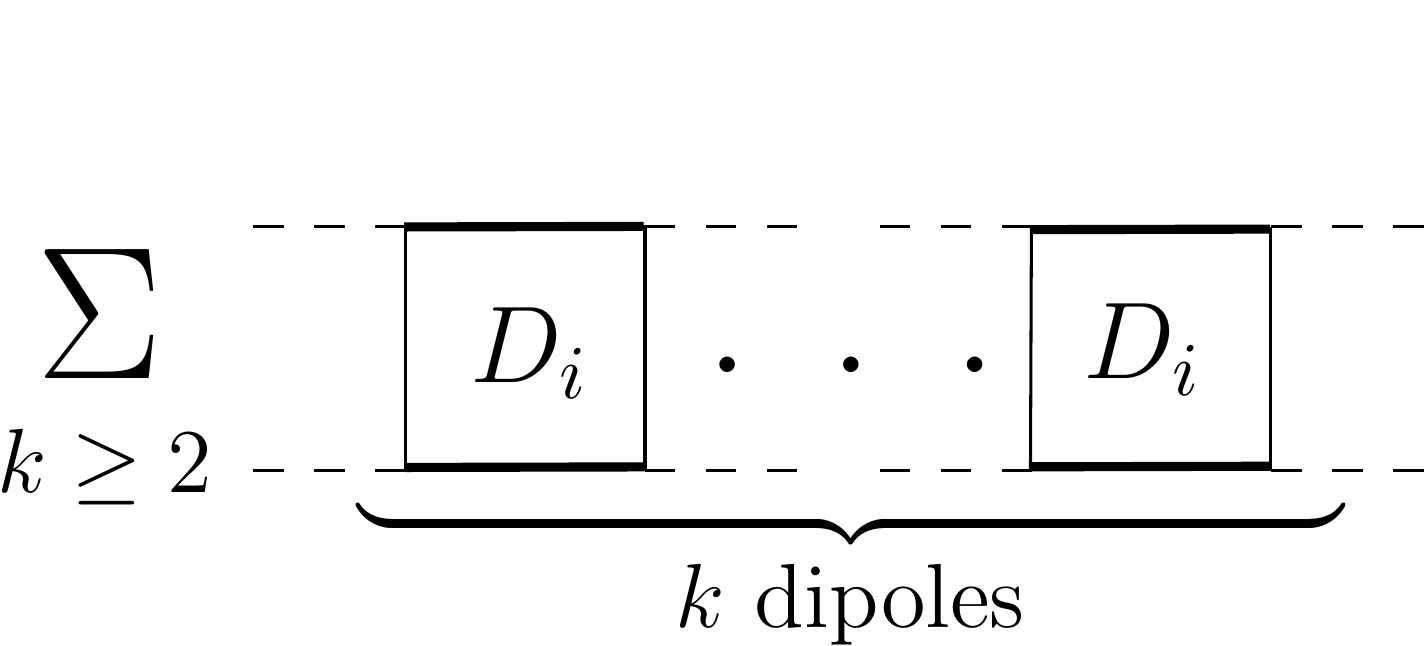}
\label{eq:chain}
\end{equation}
Otherwise it is said to be \textit{broken}. Here we represent a chain of arbitrary length by a \emph{chain-vertex} (on the left hand side). In the case of broken chains, we label the chain-vertex with $B$ instead of $C_i$.

In the following, chains of color $i$ will be denoted by a chain-vertex labelled $C_i$ and broken chains by a chain-vertex labelled $B$. A chain of color $i$ is a sequence of dipoles of length at least $2$, with melons inserted between the dipoles. Those melons have already been inserted on exactly one side of the generating series of the dipole-vertex $U(t,\mu)$. Therefore we have:
\begin{equation}
C_i(t,\mu) = U(t,\mu)^2 \sum_{k\geq 0} U(t,\mu)^k = \frac{U(t,\mu)^2}{1-U(t,\mu)}
\end{equation}
Broken chains are obtained whenever a sequence of dipoles of any color does not lead to a colored chain. Therefore their generating function is that of all chains minus those of chains of colors $i=1,2,3$:
\begin{equation}
\begin{aligned}
B(t,\mu) &= \bigl(3U(t,\mu)\bigr)^2 \sum_{k\geq 0} \bigl(3U(t,\mu)\bigr)^k -\sum_{i=1}^{3} C_i(t, \mu) \\
&= \frac{\bigl(3U(t,\mu)\bigr)^2}{1-3U(t,\mu)} - 3 \frac{U(t,\mu)^2}{1-U(t,\mu)} = \frac{6U(t,\mu)^2}{(1-3U(t,\mu))(1-U(t,\mu))}
\end{aligned}
\end{equation}
Critical points for chains are either critical points for $M(t,\mu)$, or points where $U(t,\mu) = 1$ or $U(t,\mu) = 1/3$.




\section{Schemes of the $O(N)^3$ model}

\begin{definition}
The scheme $\mathcal{S}$ of a 2-point graph $\mathcal{G}$ is obtained by first removing all melonic 2-point subgraphs, then replacing all maximal (broken and not broken) chains with chain-vertices (of the same type).
\label{def:scheme}
\end{definition}

The scheme of a graph does not depend on the order of melon removals and is thus uniquely defined. Conversely, a graph $\mathcal{G}$ is uniquely obtained by taking its scheme, first re-extending the chain-vertices as chains and then adding melons. This is precisely the content of Theorem \ref{thm:graph-scheme}.

Note that a scheme has no pillow: every pillow in a graph $\cG$ is an isolated dipole, and is therefore part of a maximal chain, the latter becoming a chain-vertex in a scheme.

In this section, we will prove Theorem~\ref{th:sch}. The result holds thanks to the following two lemmas, which ensure that there are finitely many schemes of any degree $\omega$.
 
\begin{lemma}
A scheme of fixed degree $\omega$ has finitely many chain-vertices.
\label{lemma:fst}
\end{lemma}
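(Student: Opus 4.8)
The goal is to bound the number of chain-vertices in a scheme of fixed degree $\omega$. The strategy I would follow is to view the scheme as a graph-like object built from chain-vertices (and dipole-vertices where non-isolated dipoles sit) connected by edges, and to show that the degree $\omega$ controls its size. First I would replace each chain-vertex and each remaining bubble by a single "generalized vertex" with a fixed number of external half-edges, so that the scheme becomes a finite $4$-regular coloured graph-skeleton whose vertices are of a few bounded types. Since a scheme contains no melons and no reducible chains, every generalized vertex carries a strictly positive contribution to some global counting quantity.

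Concretely, I would use the degree formula \eqref{eq:deg}, $\omega(\bar{\cG}) = 3 + \tfrac{3}{2}n_t + 2n_p - F$, together with the scheme structure. The key point is that the degree is invariant under melon insertion and chain extension, so the degree of the scheme equals $\omega$. I would then show that in a scheme the number of faces $F$ cannot exceed $3 + \tfrac{3}{2}n_t + 2n_p - \omega$ (trivially) and, in the other direction, that absence of melons and of $1$- or $2$-particle-reducible structures forces enough faces relative to bubbles that $n_t + n_p$ is itself bounded by a function of $\omega$. This is where I would lean on the combinatorial arguments of the multi-orientable case in~\cite{TaFu}: there one shows that a graph with no melon and only maximal chains has a number of vertices bounded linearly in the degree, essentially because each "extra" bubble beyond the melonic skeleton must destroy at least a fixed fraction of a face, increasing the degree. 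Once $n_t + n_p$ in the scheme is bounded, the number of dipole-vertices is bounded, and since each maximal chain occupies at least one dipole's worth of bubble material and distinct maximal chains are bubble-disjoint, the number of chain-vertices is bounded as well.

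The main obstacle, I expect, will be the second direction: getting a lower bound on the number of faces (equivalently, an upper bound on $n_t+n_p$) purely from the "no melon / minimal chain" conditions. The subtlety is that chains of arbitrary length have the same degree, so one cannot naively bound bubbles by faces; one must first pass to the scheme (where chains are contracted to chain-vertices) and argue that, among graphs realizing a given scheme, the scheme itself is the smallest, and then that a melon-free, chain-reduced graph with many bubbles necessarily has large degree. Handling the non-isolated dipoles of type I (which are not bubble-disjoint and sit inside the specific exceptional subgraph exhibited above) requires a small separate bookkeeping argument, but it only contributes a bounded correction. Assembling these bounds gives an explicit (if not optimal) bound $N_{\text{chain-vertices}} \leq f(\omega)$, which is all that is needed.
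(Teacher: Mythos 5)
Your reduction runs in a circle. You propose to bound the number of chain-vertices by first bounding the number of bubbles $n_t+n_p$ of the scheme (equivalently, of a minimal realization of it), and then using bubble-disjointness of maximal chains. But the only bubble bound of this kind available --- and the one the paper proves, Lemma~\ref{lemma:sec} --- has the form $n(\bar{\cG})\leq n_{\omega,k}$: it takes as input a bound on the number $k$ of isolated dipoles. In a chain-minimal realization of a scheme the isolated dipoles are precisely the dipoles sitting inside the (length $1$ or $2$) chains, so bounding $k$ is equivalent to bounding the number of chain-vertices, which is the very statement to be proved. Nothing in the ``no melons, chains maximal'' conditions by itself forbids, a priori, a scheme of fixed degree from having arbitrarily many short chains: a separating chain removal does not decrease the degree at all (Lemma~\ref{thm:ChainRemoval}), so degree arguments of the type ``each extra bubble must destroy a fraction of a face'' cannot exclude long trees of degree-zero pieces glued together by chain-vertices. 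The appeal to~\cite{TaFu} does not fill this hole, because the relevant argument there is exactly the one your sketch omits, not a bubble-versus-face count.

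The missing idea is the skeleton graph $\cI(\cS)$, whose vertices are the connected components left after removing all chain-vertices and whose edges are the chain-vertices themselves. The paper's proof rests on two facts: (i) removing a non-separating chain-vertex lowers the degree by $1$ to $3$, while a separating removal splits the degree additively (Lemma~\ref{thm:ChainRemoval}); (ii) a component of degree $0$ must have valency at least $3$ in $\cI(\cS)$, since valency $1$ would make it a melon and valency $2$ a non-maximal chain (point~\ref{enum:Valency3} of Lemma~\ref{thm:SkeletonGraph}). Point (ii) is the structural input that rules out long strings of degree-zero components; combined with the fact that each positive-degree component contributes at least $1/2$ to the degree, a leaf/internal-node count on a spanning tree of $\cI(\cS)$ yields the linear bound $N(\cS)\leq 7\omega(\cS)-1$, with no control on the number of bubbles needed at this stage (that control, Lemma~\ref{lemma:sec}, is established separately and only then combined with Lemma~\ref{lemma:fst} to obtain Theorem~\ref{th:sch}). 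As written, your proposal needs either this valency argument or an independent proof of the bubble bound that does not involve $k$; it supplies neither.
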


\begin{lemma}
A vacuum graph $\bar{\cG}\in\bar{\mathbb{G}_{}}$ of degree $\omega$ with $k$ isolated dipoles has a bounded number of bubbles, i.e. $n(\bar{\cG})\leq n_{\omega,k}$.
\label{lemma:sec}
\end{lemma}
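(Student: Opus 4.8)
The plan is to adapt the finiteness arguments for the colored~\cite{GuSch} and multi-orientable~\cite{TaFu} tensor models, reducing the statement to a size bound for \emph{reduced} graphs. First I would observe that removing a melonic $2$-point subgraph or shortening a maximal chain to minimal length leaves the degree $\omega$ unchanged (this is the scheme decomposition of Theorem~\ref{thm:graph-scheme}), does not increase the number of isolated dipoles — any dipole exposed by a melon removal was already part of a chain, hence already counted — and only decreases the number of bubbles. Hence it suffices to bound $n(\bar\cG)$ when $\bar\cG$ is reduced (no melonic $2$-point subgraph, all maximal chains of minimal length), together with a bound on the number of bubbles deleted by the reduction. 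Those deleted bubbles all lie inside dipoles, with exactly one bubble per dipole, and each such dipole is either isolated or belongs to the unique exceptional non-isolated-dipole pattern; since distinct maximal chains are bubble-disjoint and the exceptional patterns occur a bounded number of times at fixed $\omega$ (each contributes a positive amount to $\omega$), the number of deleted bubbles is $O(k)+O(\omega)$. A reduced graph moreover contains at most $O(k)$ pillows — every pillow is a type-II dipole, hence sits in a minimal chain — so it only remains to bound the number $n_t$ of tetrahedral bubbles of a reduced $\bar\cG$ in terms of $\omega$ and of the number $c=O(k)$ of its maximal chains.

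For that estimate the two tools are the degree formula $\omega = 3 + \frac32 n_t + 2 n_p - F$ and the incidence identity $\sum_{\text{faces }f} \deg(f) = 3\,\#\{\text{edges of color }0\}$ (each color-$0$ edge lies on exactly three faces), together with the structural input that a reduced graph has no face of degree $1$ and no face of degree $2$ apart from the $O(c)$ faces carried by the minimal chains and the exceptional patterns. I expect this to be the main obstacle, because a plain Euler-type count only yields a \emph{lower} bound on $n_t$; one must genuinely use reducedness, through an induction on $n_t$ showing that any sufficiently large reduced graph must contain either a melonic $2$-point subgraph or an extendable chain — contradicting reducedness — or a local configuration whose removal strictly lowers $\omega$. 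Setting up and running this case analysis, now with two species of interaction bubble (tetrahedron and pillow) and two types of dipoles, is the combinatorial heart of the proof, inherited from~\cite{TaFu}.

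Granting $n_t(\bar\cG) \le g(\omega,c)$ for reduced $\bar\cG$, I would then assemble the bound: for an arbitrary $\bar\cG$ of degree $\omega$ with $k$ isolated dipoles, $n(\bar\cG) \le n_t(\bar\cG_{\mathrm{red}}) + n_p(\bar\cG_{\mathrm{red}}) + \#\{\text{bubbles deleted by the reduction}\} \le g(\omega,O(k)) + O(k) + O(\omega) =: n_{\omega,k}$, which is exactly the claim. I would finish by verifying at the level of the degree formula the two auxiliary points used above: that the reduction never creates new isolated dipoles, and that the exceptional non-isolated-dipole subgraph can occur only a bounded number of times once $\omega$ is fixed.
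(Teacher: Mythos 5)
Your reduction hinges on the claim that the bubbles deleted when removing melonic $2$-point subgraphs and shortening chains ``all lie inside dipoles, with exactly one bubble per dipole'', so that their number is $O(k)+O(\omega)$. This is where the argument breaks. A dipole of type I consists of \emph{two} tetrahedral bubbles, not one; much more seriously, a melonic $2$-point subgraph obtained by nesting tetrahedral (type I) elementary melons contains many bubbles that belong to no dipole at all: an elementary type-I melon whose three internal colour-$0$ edges all carry further melonic insertions has no face of degree $2$, and the degree-$2$ faces of an innermost elementary melon pairwise share its two bubbles, so by the paper's definition (``bubble-disjoint from others'') they are not isolated dipoles and are not counted by $k$. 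Hence the bubbles removed by melonification are not accounted for dipole-by-dipole, their number is not controlled by the argument you give, and the remark that ``each exceptional pattern contributes a positive amount to $\omega$'' does not help since these configurations sit inside degree-zero regions. Controlling exactly such featureless regions of the graph --- large pieces invisible to the parameters $\omega$ and $k$ because they carry no faces of small degree --- is the actual content of Lemma~\ref{lemma:sec}, not a preliminary step to be dispatched before the ``combinatorial heart''. Moreover that heart itself, the bound $n_t\le g(\omega,c)$ for reduced graphs, is never established: you defer it to an unspecified induction inherited from~\cite{TaFu} while acknowledging it is the main obstacle, so neither half of the proposal is actually proved.

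The paper's proof is structured quite differently and does not pass through schemes at all (this lemma is itself an input to the finiteness of schemes). It bounds the number $F^{(p)}(\cG)$ of faces of each degree $p$ by a function $\phi^{(p)}(\omega,k)$: $p=1$ via tadpole removals; $p=2$ via the $k$ isolated dipoles plus an induction on the degree for non-isolated ones; $p=3$ via explicit local surgeries (Lemmas~\ref{thm:Variations} and~\ref{thm:TypeIMove}) combined with a $2$-edge-cut analysis and induction on $\omega$; and --- the step your reduction implicitly assumes away --- $p=4$ by showing that every bubble lies within bounded distance of the set $B_{\text{exc}}$ of bubbles incident to a face of degree $\neq 4$, through an explicit classification of the finitely many graphs having a bubble at distance $3$ from $B_{\text{exc}}$. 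The identity~\eqref{eq:face_bound}, together with the face--edge incidence relation $\sum_p pF^{(p)}(\cG)=6\,n(\cG)$, then converts these face bounds into the bound on the number of bubbles. Any correct proof must contain an argument of this type for the regions carrying only ``inert'' faces (in your formulation, for the size of the melonic parts and of the chain dressings); your proposal contains none, so it has a genuine gap rather than being an alternative route.
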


In the proof of Lemma \ref{lemma:fst} and of the dominant schemes in the double-scaling limit, a key role is played by the following construction. If $\cS$ is a scheme, construct its \emph{skeleton graph} $\cI(\cS)$ such that
\begin{itemize}
\item The vertex set of $\cI(\cS)$ is the set of connected components obtained by removing all chain-vertices of $\cS$.
\item There is an edge between two vertices in $\cI(\cS)$ if the two corresponding connected components are connected by a chain-vertex. This edge is labeled by the type of chain-vertex.
\end{itemize}
In other words, $\cI(\cS)$ is the incidence graph between chain-vertices, and the disjoint connected components obtained after removing those. An example of a scheme and its skeleton graph is given on Figure~\ref{fig:skeleton_graph}.

\begin{figure}
\centering
\includegraphics[scale=0.60]{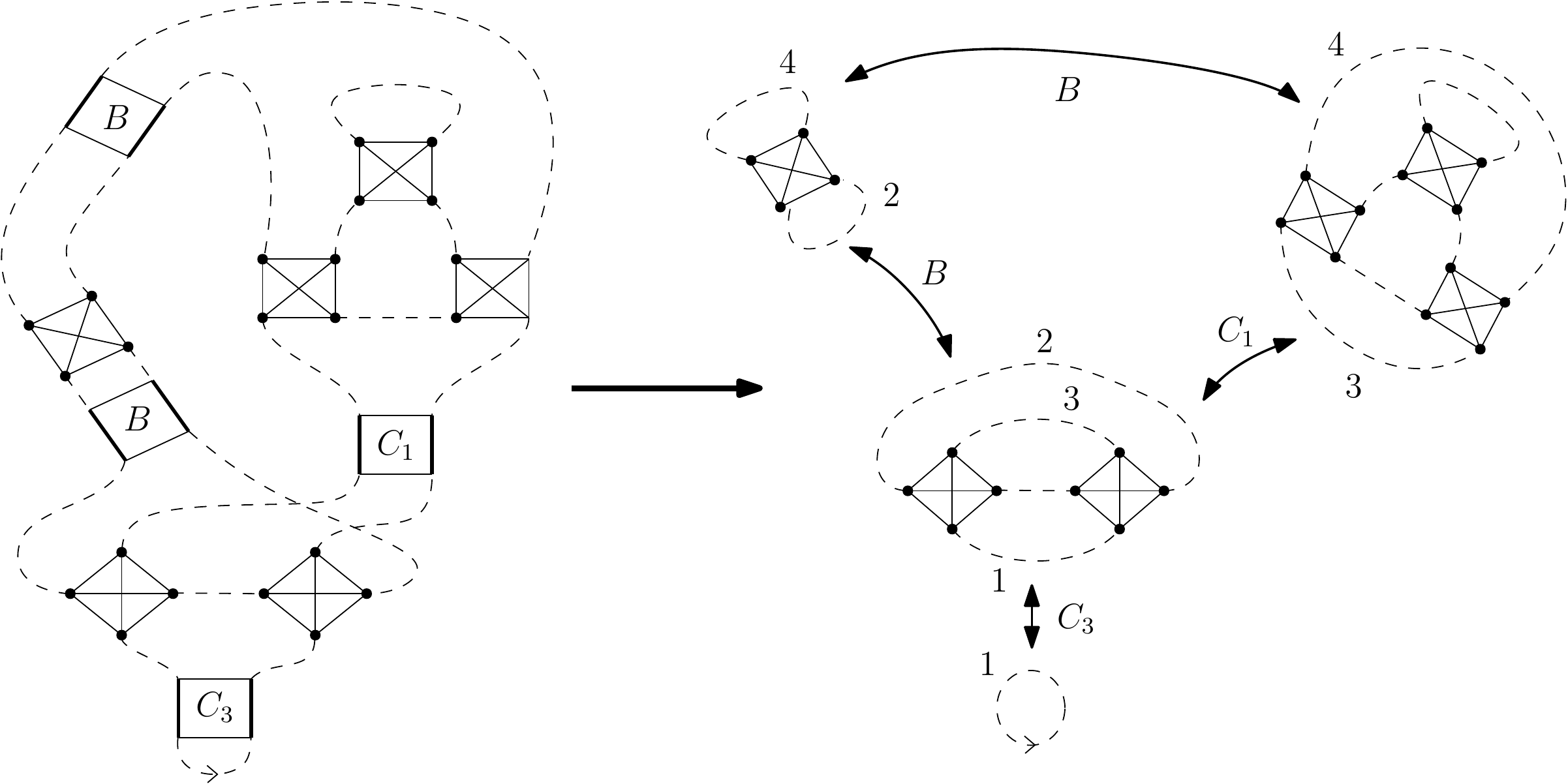}
\caption{A scheme $\cS$ and its skeleton graph $\cI(\cS)$.}
\label{fig:skeleton_graph}
\end{figure}

\subsection{Dipole and chain removals}

For the rest of the analysis, it is important that those connected components obtained by removing the chain-vertices of a scheme can be seen as graphs from $\mathbb{G}$ or $\bar{\mathbb{G}}$ and associated a degree. We thus define the removal of chain-vertices formally.

\begin{definition}
A \emph{dipole removal} in $\cG$ consists in removing the dipole and re-connecting the half-edges which were connected on the same side of the dipole together, and similarly for maximal chain removals,
\begin{equation}
\includegraphics[scale=0.7, valign=c]{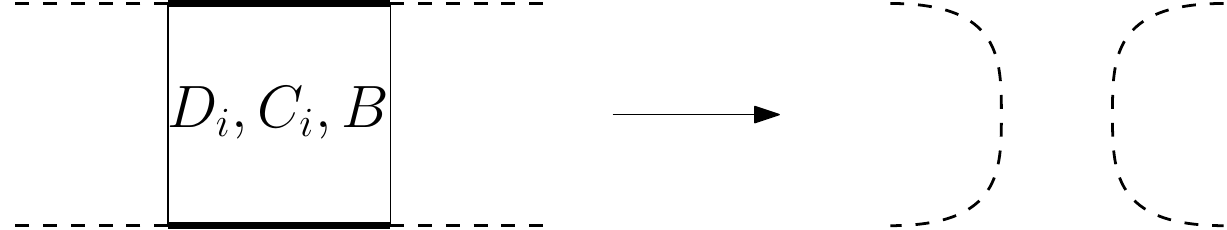}
\label{fig:dip_rem}
\end{equation}
A dipole or a chain is said to be \emph{separating} if its removal disconnects $\cG$, and non-separating if it does not. Note that a separating dipole/chain removal creates two connected components, one being a 2-point graph, the other being a vacuum graph.
\end{definition}

Note that dipole removals can be applied to non-isolated dipoles and not only dipoles which are part of chains.

In a scheme, the removal of a chain-vertex does not in general leads to another scheme, because it can create new maximal chains. Therefore, a \emph{chain-vertex removal} is defined by replacing the chain-vertex with any chain of the same type, performing the removal, then re-identifying the maximal chains in the newly obtained graph and thus finding its corresponding scheme.

\begin{lemma} \label{thm:ChainRemoval}
Let $S$ be a scheme and consider a chain-vertex removal. If it is a non-separating chain, denote $S'$ the resulting scheme, then
\begin{equation} \label{NonSeparatingRemoval}
\omega(S)-3\leq \omega(S')\leq \omega(S)-1.
\end{equation}
If it is a separating chain or dipole, denote $S_1$ and $\bar{S}_2$ the two resulting schemes, then
\begin{equation} \label{SeparatingRemoval}
\omega(S) = \omega(S_1) + \omega(\bar{S}_2).
\end{equation}
\end{lemma}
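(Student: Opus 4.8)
The plan is to prove the two cases of Lemma~\ref{thm:ChainRemoval} by tracking how the degree formula \eqref{eq:deg} changes under a chain-vertex removal. The key is that a chain-vertex stands for a sequence of dipoles (of type I or II) with melons inserted, and a dipole is nothing but a cut elementary melon; so removing a chain-vertex and reconnecting the two half-edges on each side amounts to a controlled local surgery whose effect on the number of bubbles $n_t, n_p$ and the number of faces $F$ can be computed by inspection. First I would reduce to the removal of a single dipole, since any chain removal can be realized as a sequence of dipole removals (removing the dipoles one at a time from the chain) followed by re-identification of maximal chains; the re-identification step can only merge chain-vertices and so does not affect the degree. For a single isolated dipole of type II (a pillow), removal deletes one pillow bubble, so $n_p$ drops by $1$; one then checks that exactly two faces are destroyed and none created in the non-separating case, giving a net change $\Delta\omega = 2 - 2 = 0$... wait, more carefully: the dipole is part of a longer chain-vertex, and it is the full chain-vertex whose removal we analyze, so the count must be done for the whole maximal chain at once.

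The clean way is therefore: let $C$ be the chain-vertex being removed, let $\ell$ be its length, and let $b_\bullet$ denote the bubbles it contains. Removing $C$ and reconnecting half-edges pairwise on each side removes all of $C$'s bubbles and rewires $4$ color-$0$ edges into $2$. I would then count faces: each bicolored face of the ambient graph either (i) lies entirely outside $C$ and is unchanged, (ii) passes through $C$ and gets shortened or merged, or (iii) is internal to $C$ and disappears. The degree change $\omega(S) - \omega(S')$ equals $\frac{3}{2}\Delta n_t + 2\Delta n_p - \Delta F$ where the $\Delta$'s refer to the bubbles and faces lost. Because a chain-vertex is a reduced maximal chain (length $\geq 2$ for colored chains, and the broken-chain generating series was built so that chain-vertices behave uniformly), the net bubble-and-face bookkeeping is bounded: one shows the removal of a non-separating chain-vertex always kills between $1$ and $3$ units of degree, i.e.\ $1 \leq \omega(S) - \omega(S') \leq 3$, which is \eqref{NonSeparatingRemoval}. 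The extremal values $1$ and $3$ correspond to the two ways the $\{0,i\}$-faces threading the chain can reconnect after the surgery (whether the two sides of the chain-vertex get joined into the same face or stay separate), and to whether the chain is colored or broken.

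For the separating case, the picture is simpler: if the chain or dipole is separating, its removal splits $S$ into a $2$-point piece $S_1$ and a vacuum piece $\bar S_2$, glued along the chain-vertex. Here one uses that the degree \eqref{eq:deg}, rewritten using $\omega = 3 + \sum_b (\rho(k)-\rho(b))$-type additivity (equivalently, additivity of $3 + \frac{3}{2}n_t + 2 n_p - F$ over a separating cut along a single edge of color $0$, which opens exactly three faces, and the convention identifying a $2$-point graph's degree with its closure's degree absorbs the $+3$), the degrees simply add: $\omega(S) = \omega(S_1) + \omega(\bar S_2)$. Concretely, when we cut the chain-vertex, the bubbles partition between the two components, and the faces split cleanly except for the three faces running through the cut edge of color $0$, which are distributed so that each side sees the $+3$ built into its own degree convention; summing gives \eqref{SeparatingRemoval}.

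The main obstacle I expect is the careful face-counting in the non-separating case: one must verify that no configuration of a maximal chain-vertex can destroy or create more faces than the claimed bounds allow, and in particular that the lower bound $\omega(S) - \omega(S') \geq 1$ really holds — i.e.\ that a non-separating chain-vertex removal always strictly decreases the degree. This requires using that the chain-vertex is \emph{maximal} (so its endpoints are genuinely attached to distinct structure on both sides) together with the explicit local form of dipoles of type I and II, including the non-isolated dipole configuration, to enumerate the possible ways the threading faces close up; the broken-chain case needs the extra observation that a broken chain necessarily contains two consecutive dipoles of different colors, which forces at least one face internal to the chain-vertex to be destroyed. Once these local cases are checked, assembling them into the stated inequalities is routine.
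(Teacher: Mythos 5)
Your overall bookkeeping strategy (track $\omega = 3 + \tfrac{3}{2}n_t + 2n_p - F$ under the local surgery; additivity across the cut in the separating case) is the same as the paper's, and your treatment of the separating case is at essentially the paper's own level of detail. But the non-separating case — which carries the substantive content of \eqref{NonSeparatingRemoval} — is not actually proved. You assert that "one shows" the removal kills between $1$ and $3$ units of degree, and then you yourself flag exactly this verification, in particular the bound $\omega(S')\leq\omega(S)-1$, as the main unresolved obstacle. Worse, after (rightly) hesitating about doing the count dipole-by-dipole, you propose to do the face bookkeeping "for the whole maximal chain at once," i.e. for a chain of arbitrary length and arbitrary dipole content; that count is never carried out and in that generality it is unnecessarily hard.

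The missing idea is the reduction to a \emph{minimal realization}. Since the degree is insensitive to melon insertions, to the length of a chain, and to swapping a type-I dipole of color $i$ for the type-II dipole of the same color, it suffices to work at the level of graphs and to realize the chain-vertex as a single type-I dipole (unbroken case) or as two type-I dipoles of different colors (broken case). The count then becomes a finite local check: for a color-$i$ dipole, the faces of the two other colors pass straight through and are unaffected, the unique internal face of color $i$ is destroyed, and the external faces of color $i$ incident to the dipole number one or two before the removal and recombine into one or two faces after it; since the two tetrahedra contribute $\tfrac{3}{2}\cdot 2 = 3$ to the degree, this gives $\Delta F \in \{-2,-1,0\}$ and hence $-3 \leq \Delta\omega \leq -1$, with no further case analysis needed for the lower bound you were worried about. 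For the minimal broken chain one checks directly that it has exactly three internal faces while the external face structure is unchanged, giving $\Delta\omega = -3$. Your observation that re-identifying maximal chains (and melons) after the removal does not affect the degree is correct, and it is precisely what lets this graph-level computation transfer to schemes; with the minimal-realization reduction added, your outline becomes the paper's proof.
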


\begin{proof}
It is enough to consider the case of graphs, and the case of schemes follows directly. Moreover, among all possible types of dipoles and chains, it is enough to consider only dipoles of type I, and to consider maximal chains of length 2 made of two dipoles of type I (this is just a choice, any other works) as follows
\begin{equation}
\begin{aligned}
&\includegraphics[scale=.5, valign=c]{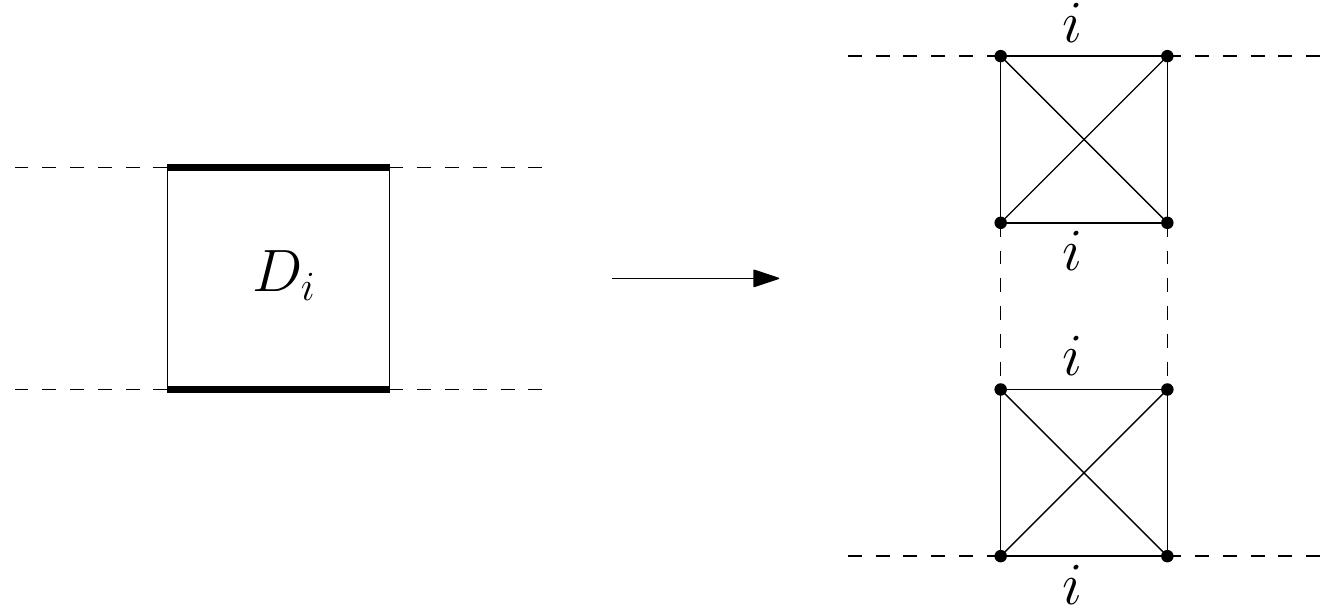}\\
&\includegraphics[scale=.5, valign=c]{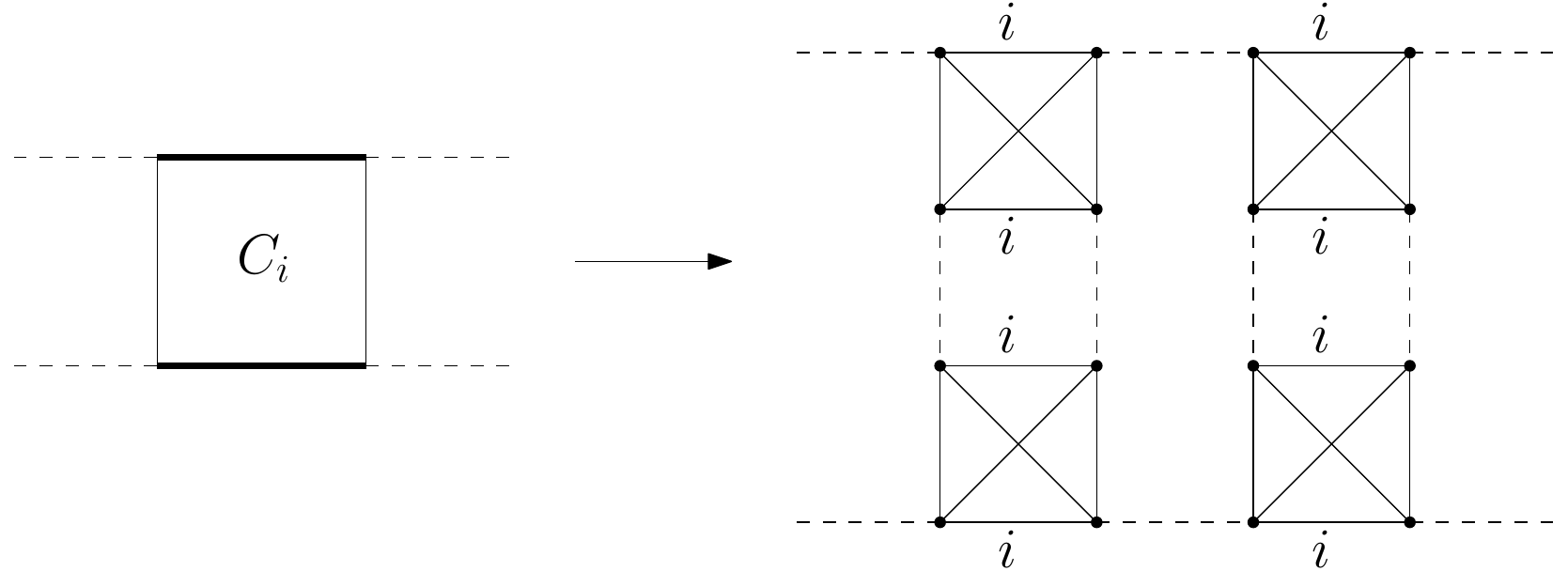}\\
&\includegraphics[scale=.5, valign=c]{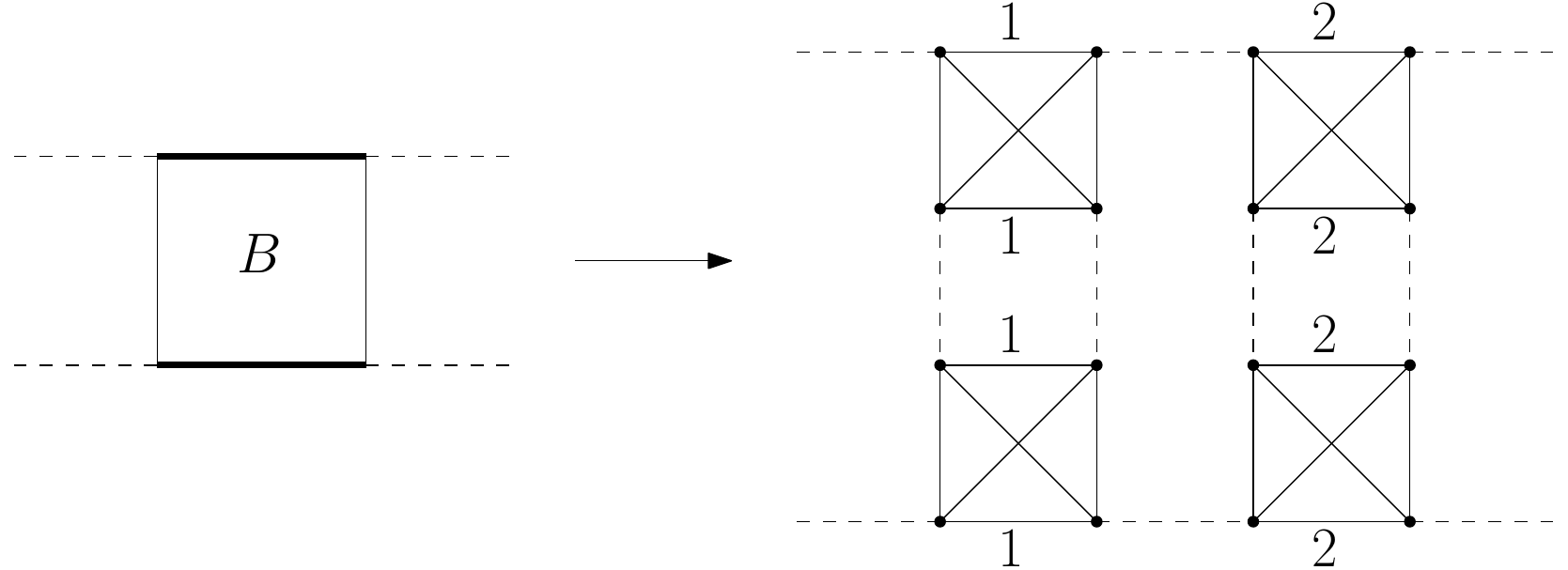}
\end{aligned}
\end{equation}

\begin{description}
\item[Removal of a non-separating dipole of color $i$]
There is either one or two faces of color $i$ incident to the dipole in $\cG$, whose structures can be as follows (we represent the paths of the faces with dotted edges)
\begin{equation}
\includegraphics[scale=.5, valign=c]{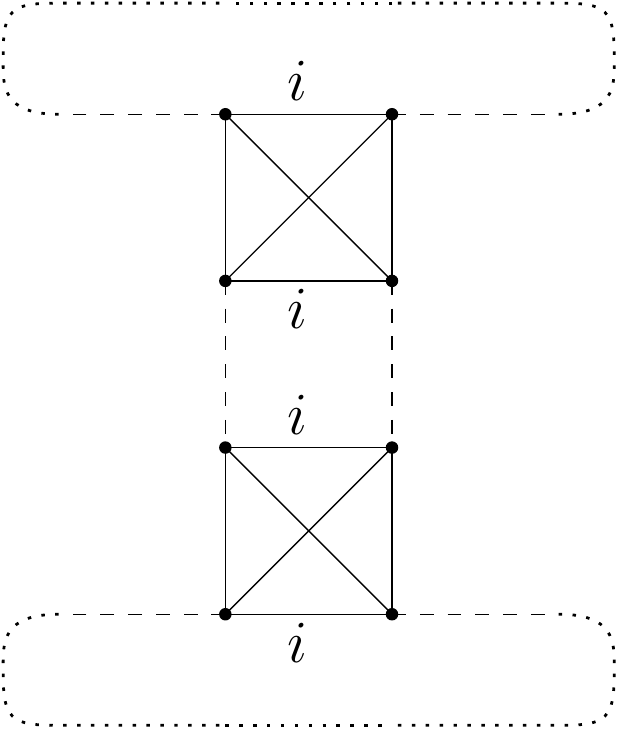} \qquad \includegraphics[scale=.5, valign=c]{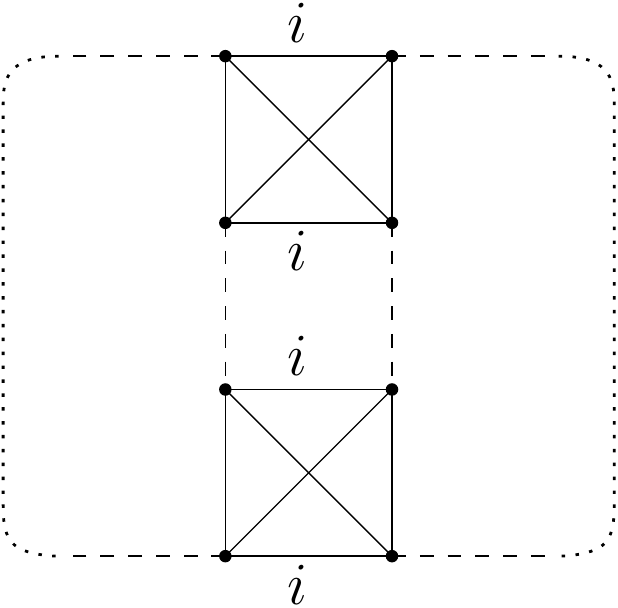}\qquad \includegraphics[scale=.5, valign=c]{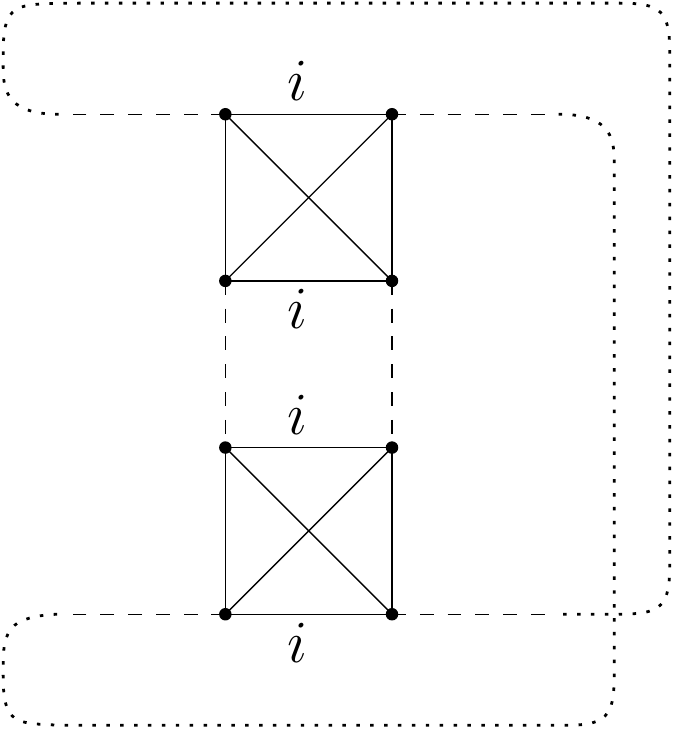}
\end{equation}
and the dipole has an internal face of color $i$. When removing the dipole, the internal face is deleted and one or two faces of color $i$ are formed. The faces of the other colors $j, k\neq i$ are unaffected by the move. The number of interactions decreases by $2$. Thus deleting a non-separating dipole gives $ -1 \geq \Delta \omega \geq -3$, where $\Delta \omega$ is the variation of the degree.

\item[Removal of a non-separating chain]
We have to distinguish two cases depending on whether the chain is broken or not.
\begin{itemize}
\item If the chain is broken then the structure of the faces incident to the chain is unchanged. Recall that since $\cG$ is a minimal realization of a scheme, a broken chains has exactly two dipoles of type I, in which case it is easy to check that the chain has exactly three internal faces. Therefore deleting the chain gives $\Delta \omega = -3$.
\item If the chain is not broken, the discussion is similar to case of the dipole above, and we have $ -1 \geq \Delta \omega \geq -3$.
\end{itemize}

\item[Removal of separating chains and separating dipoles]
In the following we will tackle the case of separating chains, however the discussion also holds for separating dipoles. If a chain is separating, then the edges of color 0 on either side form a 2-edge-cut,
\begin{equation}
\includegraphics[scale=.5, valign=c]{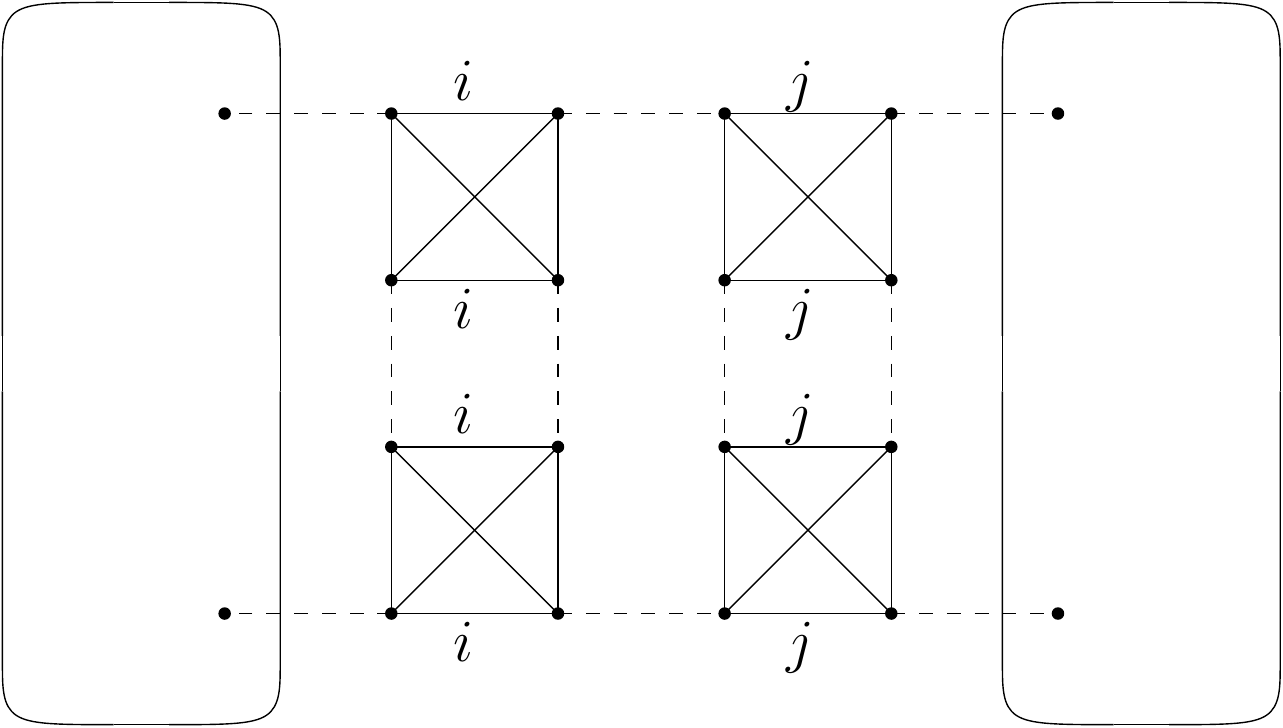}
\end{equation}
with $i,j\in\{1,2,3\}$. The chain removal produces two connected components $\cG_1$ and $\bar{\cG}_2$. It is straightforward to check that in either case $i=j$ and $i\neq j$, $\omega(\cG) = \omega(\cG_1) + \omega(\bar{\cG}_2)$.
\end{description}
\end{proof}

\subsection{Combinatorics of the skeleton graph $\cI(\cS)$}

To identify the vertex set of the skeleton graph $\cI(\cS)$, one first performs all dipole- and chain-vertex removals in $\cS$. This gives one 2-point graph $\cG^{(0)}$ and a collection of vacuum graphs $\bar{\cG}^{(1)}, \dotsc, \bar{\cG}^{(p)}$, which in turn gives a root vertex and $p$ additional vertices in $\cI(\cS)$.

\begin{lemma} \label{thm:SkeletonGraph}
For any scheme $\cS$, define by convention $\omega(\cI(\cS)) = \omega(\cS)$. The following properties hold.
\begin{enumerate}
\item\label{enum:Valency3} If $\bar{\cG}^{(r)}$ for $r\in\{1, \dotsc, p\}$ has vanishing degree, then the corresponding vertex in $\cI(\cS)$ has valency at least equal to 3.
\item\label{enum:RemoveNonSeparating} Let $\cT\subset \cI(\cS)$ be a spanning tree. Let $q$ be the number of edges of $\cI(\cS)$ which are not in $\cT$. Then, $\omega(\cT) \leq \omega(\cS)-q$.
\item\label{enum:SpanningTree} $\omega(\cT) = \omega(\cG^{(0)}) + \sum_{r=1}^p \omega(\bar{\cG}^{(r)})$, in other words, if $\cI(\cS)$ is a tree, then the degree of $\cS$ is the sum of the degrees of its components obtained by removing all chain-vertices.
\end{enumerate}
\end{lemma}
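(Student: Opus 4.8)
The plan is to deduce all three items from Lemma~\ref{thm:ChainRemoval} by iterating chain-vertex removals, using the two constraints that define a scheme: no non-trivial melonic $2$-point subgraph, and all chains maximal. I would establish item~\ref{enum:Valency3} first, then items~\ref{enum:RemoveNonSeparating} and~\ref{enum:SpanningTree} together.

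For item~\ref{enum:Valency3}, I argue by contradiction on a vacuum component $\bar\cG^{(r)}$ of vanishing degree having valency $1$ or $2$ in $\cI(\cS)$. First note that a component sitting strictly between chain-vertices must contain at least one bubble, since otherwise its two bordering chains would form a single maximal chain; hence $\bar\cG^{(r)}$ is non-trivial. If the valency is $1$, cutting the two colour-$0$ edges joining the unique incident chain-vertex to $\bar\cG^{(r)}$ realizes $\bar\cG^{(r)}$ (equipped with two legs) as a non-trivial $2$-point subgraph of $\cS$ of degree $0$, i.e. a non-trivial melon, contradicting the definition of a scheme. If the valency is $2$, then $\bar\cG^{(r)}$ carries four colour-$0$ legs; being melon-free (as a subgraph of a scheme) and of degree $0$, the structure of degree-zero subgraphs forces it to be a chain, possibly a single dipole (every melon-free $4$-point subgraph of degree $0$ is a sequence of dipoles). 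But then the two incident chain-vertices together with $\bar\cG^{(r)}$ form one longer chain, contradicting maximality. The degenerate configurations — a self-loop at $\bar\cG^{(r)}$, or two chain-vertices between the same pair of components — are excluded by the same two arguments. Hence the valency is at least $3$.

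For items~\ref{enum:RemoveNonSeparating} and~\ref{enum:SpanningTree}, I iterate Lemma~\ref{thm:ChainRemoval} along a spanning tree $\cT$. Deleting the $q$ chain-vertices labelled by edges of $\cI(\cS)\setminus\cT$ one at a time: at each step the current skeleton still contains $\cT$, hence is connected, so the deleted chain-vertex is not a bridge and is therefore non-separating, and the non-separating bound makes the degree drop by at least $1$; after $q$ steps this gives $\omega(\cT)\leq\omega(\cS)-q$. Then, starting from the scheme with skeleton $\cT$, delete its remaining chain-vertices one at a time: since $\cT$ is a tree each of them is a bridge, hence separating, so the degree is exactly additive at each step by the separating case of Lemma~\ref{thm:ChainRemoval}, and after deleting all $|V(\cI(\cS))|-1$ of them the scheme decomposes into precisely the vertex-components of $\cI(\cS)$, namely $\cG^{(0)}$ together with $\bar\cG^{(1)},\dots,\bar\cG^{(p)}$. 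Hence $\omega(\cT)=\omega(\cG^{(0)})+\sum_{r=1}^p\omega(\bar\cG^{(r)})$, and taking $\cT=\cI(\cS)$ when $\cI(\cS)$ is already a tree yields the stated consequence. Throughout, a chain-vertex removal also re-identifies the maximal chains of the result, but since the degree depends only on the underlying graph this bookkeeping does not interfere with the iteration.

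The step I expect to be the real obstacle is making the skeleton-graph bookkeeping rigorous: one must verify that removing a chain-vertex and then re-identifying maximal chains yields a scheme whose skeleton is the expected graph (in particular that no spurious long chain is created that would absorb a whole component), and one must carefully dispose of the degenerate skeleton configurations entering item~\ref{enum:Valency3}. These points do not affect the degree (in)equalities, which only see the underlying graph, but they are needed to pin down the combinatorial structure of $\cI(\cS)$ used in the rest of the paper.
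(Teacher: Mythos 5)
Your proposal is correct and follows essentially the same route as the paper: item 1 by contradiction (valency 1 would make the component a melonic 2-point subgraph, valency 2 would make it a chain contradicting maximality), and items 2 and 3 by applying the non-separating and separating cases of Lemma~\ref{thm:ChainRemoval} to the chain-vertices off and on a spanning tree respectively. Your iterative bookkeeping (connectivity preserved since the skeleton retains $\cT$, bridges in the tree being separating) just makes explicit what the paper's shorter proof leaves implicit.
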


\begin{proof}
\ref{enum:Valency3}. If $\cG^{(r)}$ has degree zero and the corresponding vertex in $\cI(\cS)$ has valency 1, then $\cG^{(r)}$ is a melonic 2-point function, which cannot happen in schemes. If $\cG^{(r)}$ has degree zero and the corresponding vertex in $\cI(\cS)$ has valency 2, then $\cG^{(r)}$ is a chain, which is impossible in a scheme (it means that a chain-vertex was used in place of a non-maximal chain).

\ref{enum:RemoveNonSeparating}. The edges of $\cI(\cS)$ which are not in $\cT$ correspond to non-separating chain-vertices in $\cS$. One then concludes using Equation \eqref{NonSeparatingRemoval} from Lemma \ref{thm:ChainRemoval}.

\ref{enum:SpanningTree}. Denote $\cS_{\cT}$ the scheme obtained after removing the above $q$ non-separating chain-vertices. Its skeleton graph is $\cT$, i.e. $\cI(\cS_{\cT}) = \cT$, meaning that all its chain-vertices are separating. One concludes with Equation \eqref{SeparatingRemoval} from Lemma \ref{thm:ChainRemoval}.
\end{proof}

\subsection{Proof of Lemma~\ref{lemma:fst}}

Let $\cS$ be a scheme of degree $\omega>0$ and consider the notations of Lemma \ref{thm:SkeletonGraph}. Notice that a chain-vertex removal can decrease the number of chain-vertices by at most 3. Indeed, only two edges of color 0 are affected by the removal, which means that at most two pairs of chain-vertices can be joined after the removal to form new chain-vertices. This happens in the following situation,
\begin{equation}
\includegraphics[scale=.5, valign=c]{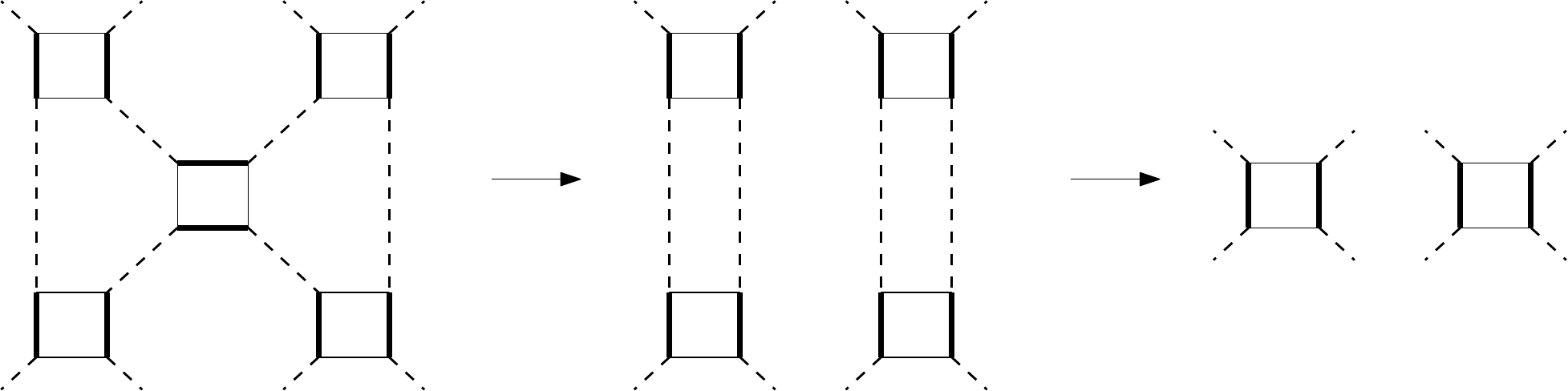}
\end{equation}
After the removal (first arrow), one obtains an object which is not a scheme, because two consecutive chain-vertices would account for two, consecutive, hence non-maximal, chains. The correct scheme is obtained by replacing the maximal chains with chain-vertices (second arrow).

Let $N(\cS)$ be the number of chain-vertices of $\cS$ and $N(\cT)$ the number of chain-vertices of the scheme $\cS_{\cT}$ obtained after removing the above $q$ non-separating chain-vertices from $\cS$. One finds $N(\cS) \leq N(\cT) + 3q$. A consequence of point \ref{enum:RemoveNonSeparating} in Lemma \ref{thm:SkeletonGraph} is that $q\leq \omega(\cS) - \omega(\cT) \leq \omega(\cS)$. Therefore 
\begin{equation} \label{BoundSchemeTree}
N(\cS) \leq N(\cT) + 3\omega(\cS).
\end{equation}

We say that $\bar{\cG}^{(r)}$, for $r\in\{1, \dotsc, p\}$, is \emph{tracked}, if it is incident to one of the non-separating chain-vertices (corresponding to the edges of $\cI(\cS)$ which are not in $\cT$). Denote $N_0^{t}$ the number of tracked components of degree 0, $N_0^{nt}$ the number of non-tracked components of degree 0, and $N_+$ the number of components of positive degree, all excluding $\cG^{(0)}$. By definition, there are fewer than $2q$ tracked components,
\begin{equation} \label{eq:bound_C0}
N_0^t\leq 2q.
\end{equation}
Since components of positive degree have degree at least $1/2$, we find from \ref{enum:SpanningTree} in Lemma \ref{thm:SkeletonGraph} that
\begin{equation} \label{eq:bound_pos}
\omega(\cT) \geq \frac{1}{2}N_+
\end{equation}
Together with \ref{enum:RemoveNonSeparating} from Lemma \ref{thm:SkeletonGraph}, and \eqref{eq:bound_C0}, this leads to 
\begin{equation} \label{PartialDegreeBound}
N^t_0 + N_+ \leq 2\omega(\cT)
\end{equation}
Notice that $N(\cT) = p+q$ is also the number of edges of $\cT$. Since it is a tree, the number of edges is the number of vertices minus one, hence the following relation
\begin{equation}
N(\cT) = N^{nt}_0 + N^t_0 + N_+ 
\label{eq:tree_euler}
\end{equation}
Finally, counting leaves and nodes of $\mathcal{T}$ weighted by their valency amounts to counting twice the number of edges of $\mathcal{T}$. Due to point \ref{enum:Valency3} in Lemma \ref{thm:SkeletonGraph}, we have
\begin{equation}
2N(\cT) \geq 3N^{nt}_0 + N^t_0 + N_+ + 1
\label{eq:tree_val}
\end{equation}
the additional one being due to the component $\cG^{(0)}$ (with valency at least 1). The previous two equations lead to $N(\cT) \leq 2(N_0^t+N_+)$. Together with \eqref{PartialDegreeBound}, $N(\cT) \leq 4\omega(\cT)$. From \eqref{BoundSchemeTree}, one gets
\begin{equation}
N(\cS) \leq 4\omega(\cT) + 3\omega(\cS) - 1 \leq 7\omega(\cS) -1,
\end{equation}
which proves Lemma \ref{lemma:fst}.
\subsection{Proof of Lemma~\ref{lemma:sec}}

Here we prove Lemma \ref{lemma:sec}, i.e. graphs of finite degree and with a finite number of isolated dipoles have a bounded number of bubbles. Our strategy is to show that there exist bounds on the number of faces of every degree, depending on the degree $\omega$ and the number of isolated dipoles $k$, i.e.
\begin{equation} \label{BoundFaceDegreeP}
F^{(p)}(\mathcal{G}) \leq \phi^{(p)}(\omega, k)
\end{equation}
where $F^{(p)}(\mathcal{G})$ is the number of faces of degree $p$.

There is exactly one face of color $i=1,2,3$ along each edge of color $0$. Therefore, denoting $F_i^{(p)}$ the number of faces with color $i$ and degree $p$, we have for each color $i$
\begin{equation}
\sum\limits_{p \geq 1} pF_i^{(p)}(\mathcal{G}) = E_0(\cG) = 2n(\cG)
\end{equation}
Using the degree formula~\eqref{eq:deg}, $n(\cG)$ can be eliminated to give
\begin{equation} 
\sum_{p \geq 5} (p-4)F^{(p)}(\mathcal{G}) = 4(\omega-3) + 3 F^{(1)}(\mathcal{G}) + 2 F^{(2)}(\mathcal{G}) + F^{(3)}(\mathcal{G})
\label{eq:face_bound}
\end{equation}
We have written this equation so that both sides come with positive coefficients (except the irrelevant $-12$).
\begin{itemize}
\item Therefore if one can bound $F^{(1)}(\mathcal{G})$, $F^{(2)}(\mathcal{G})$, $F^{(3)}(\mathcal{G})$, i.e. if one can prove \eqref{BoundFaceDegreeP} for $p=1,2,3$, then it automatically gives the bound \eqref{BoundFaceDegreeP} for $p\geq 5$.
\item It remains to prove the bound independently for $p=4$, since $F^{(4)}$ does not appear in \eqref{eq:face_bound}. There is indeed a risk that graphs with an arbitrarily large number of faces of degree 4 exist, without having these faces affecting the degree. However this is not the case, as we will prove.
\end{itemize}
In our analysis, we will distinguish self-intersecting and non-self-intersecting faces. A face is said to be \emph{self-intersecting} if it visits the same bubble more than once. Otherwise it is called \emph{non-self-intersecting} (n.s.i.).

\paragraph{Faces of degree $1$\\}
If a graph has a face of degree $1$, then it has a tadpole,
\begin{equation}
\begin{array}{c} \includegraphics[scale=.3]{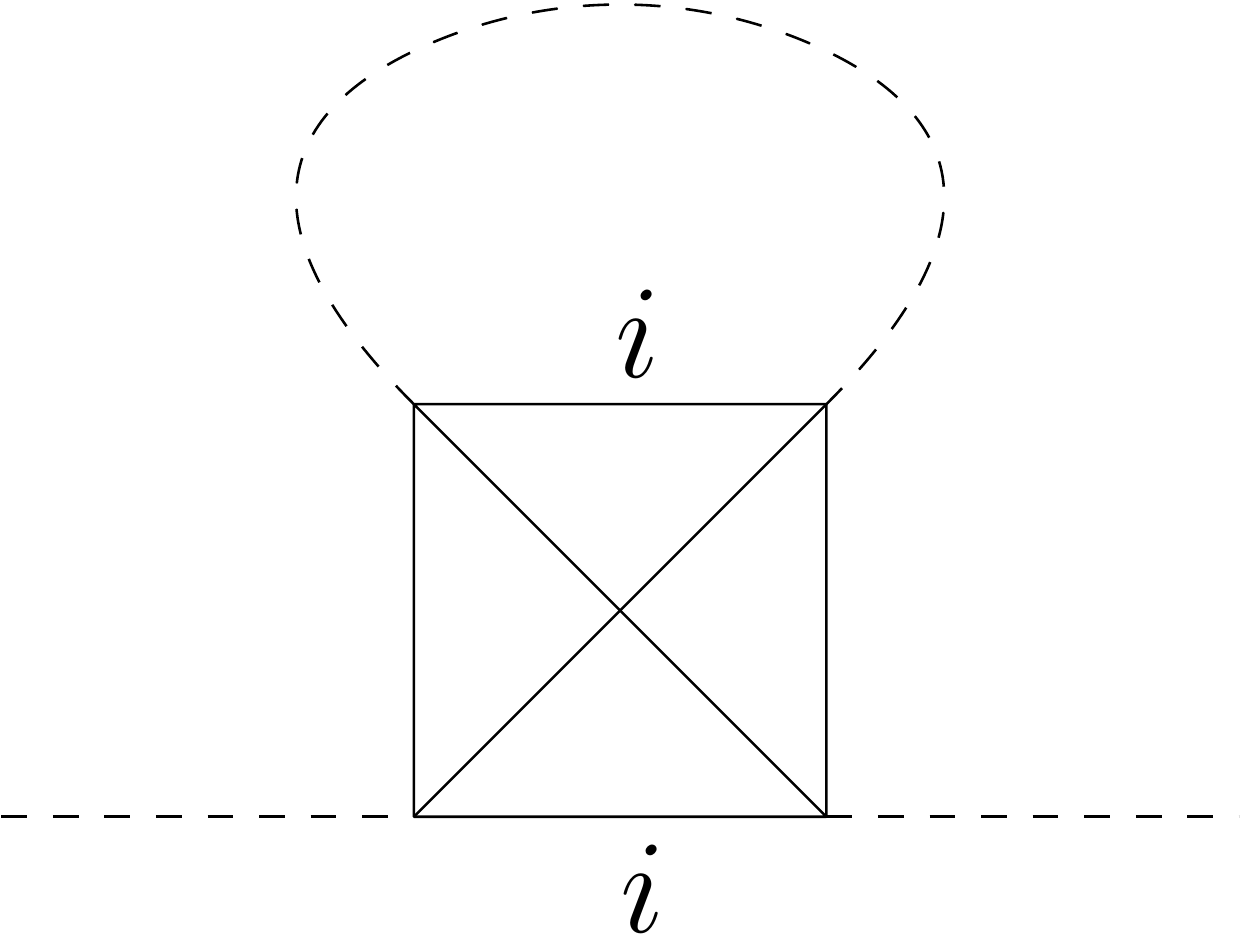} \end{array}
\end{equation}
When that bubble and the tadpole are removed and replaced with an edge of color 0, the degree decreases by $\frac{1}{2}$. Therefore, there are at most $2\omega$ tadpoles in a graph $\cC$ of degree $\omega$ i.e. $F^{(1)}(\cG)\leq 2\omega$.

\paragraph{Faces of degree $2$\\}
First notice that there is a single graph which has a face of degree 2 which is self-intersecting. It consists in a bubble with two tadpoles.

If $\cG$ has a face of degree $2$ which is n.s.i., then it is a dipole. Either it is a dipole which is isolated and there are $k$ of them, or it is a non-isolated dipole. In the latter case, we perform the dipole removal,
\begin{equation} \label{NonIsolatedDipoleRemoval}
    \begin{array}{c} \includegraphics[scale=.45]{NonIsolatedDipoles.pdf}\end{array} \quad \to \quad \begin{array}{c} \includegraphics[scale=.45]{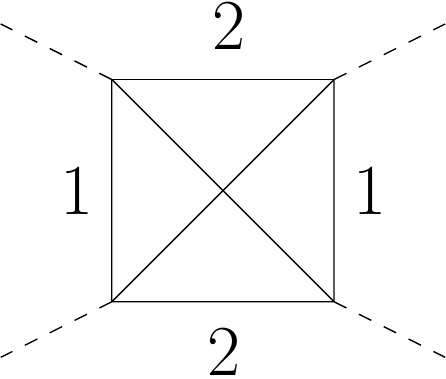}\end{array}
\end{equation}
A non-isolated dipole is always non-separating, therefore one obtains a connected graph $\cG'$ whose degree satisfies $\omega(\cG')<\omega(\cG)$. We thus proceed by induction on the degree.

Denote $F^{(2)}_{\text{NI}}(\cG)$ the number of non-isolated dipoles of $\cG$. If $\cG$ is melonic, then $F^{(2)}_{\text{NI}}(\cG)=0$. Let $\omega>0$ and assume that for all $\cG'$ of degree $\omega'<\omega$ with $k'$ isolated dipoles we have the bound $F^{(2)}_{\text{NI}}(\cG) \leq \phi^{(2)}(\omega',k')$. Let $\cG$ of degree $\omega$ with $k$ isolated dipoles. We perform \eqref{NonIsolatedDipoleRemoval} and obtain $\cG'$ of degree $\omega'<\omega$. We track the changes in the number of dipoles:
\begin{itemize}
    \item The number of non-isolated dipoles cannot increase (note that it may remain unchanged).
    \item Moreover the bubble on the RHS of \eqref{NonIsolatedDipoleRemoval} can belong to at most one isolated dipole, therefore the number of isolated dipoles of $\cG$ satisfies $k'\leq k+1$. 
\end{itemize}
From the induction hypothesis we thus find
\begin{equation}
    F^{(2)}_{\text{NI}}(\cG) \leq F^{(2)}_{\text{NI}}(\cG') \leq \max_{\substack{\omega'<\omega\\ k'\leq k+1}} \phi^{(2)}(\omega',k').
\end{equation}
The RHS defines $\phi^{(2)}(\omega,k)$.


\paragraph{Faces of degree $3$.} We distinguish the cases where $\cG$ has a face of degree which is self-intersecting or not.

\subparagraph{Self-intersecting faces of degree 3\\}
If a face of degree $3$ is self-intersecting, then it necessarily has the following structure,
\begin{equation}
\begin{array}{c} \includegraphics[scale=0.45]{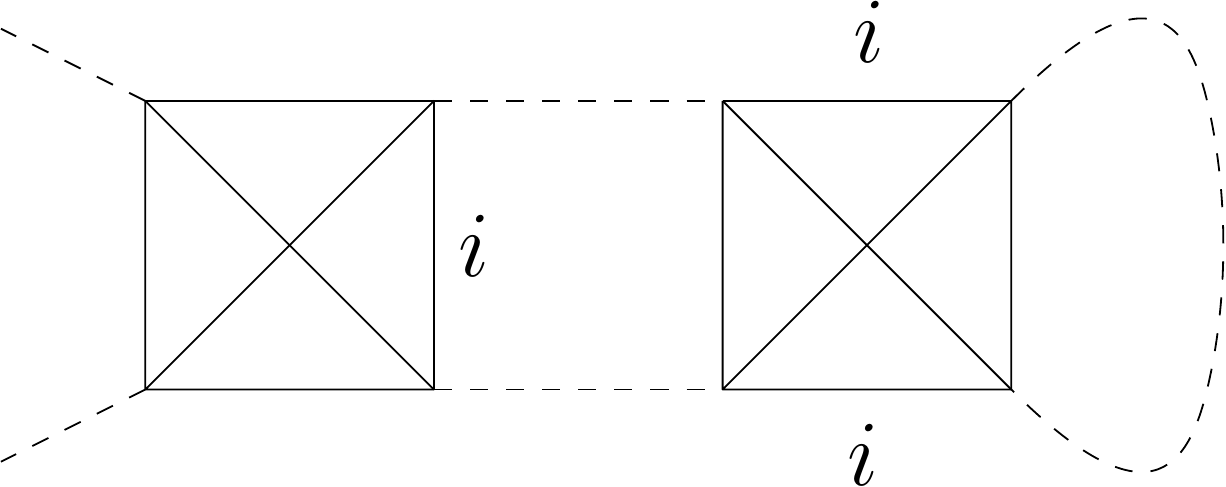} \end{array}
\end{equation}
It is a 2-point function whose removal decreases the degree by 1. Thus there are at most $\omega$ such faces in the graph, $F^{(3)}_{\text{s.i.}}(\mathcal{G})\leq \omega$.

\subparagraph{Non-self-intersecting faces of length $3$\\}
An n.s.i. face of degree 3 and color $1$ necessarily has the following structure,
\begin{equation} \label{F3nsi}
\begin{array}{c} \includegraphics[scale=0.4]{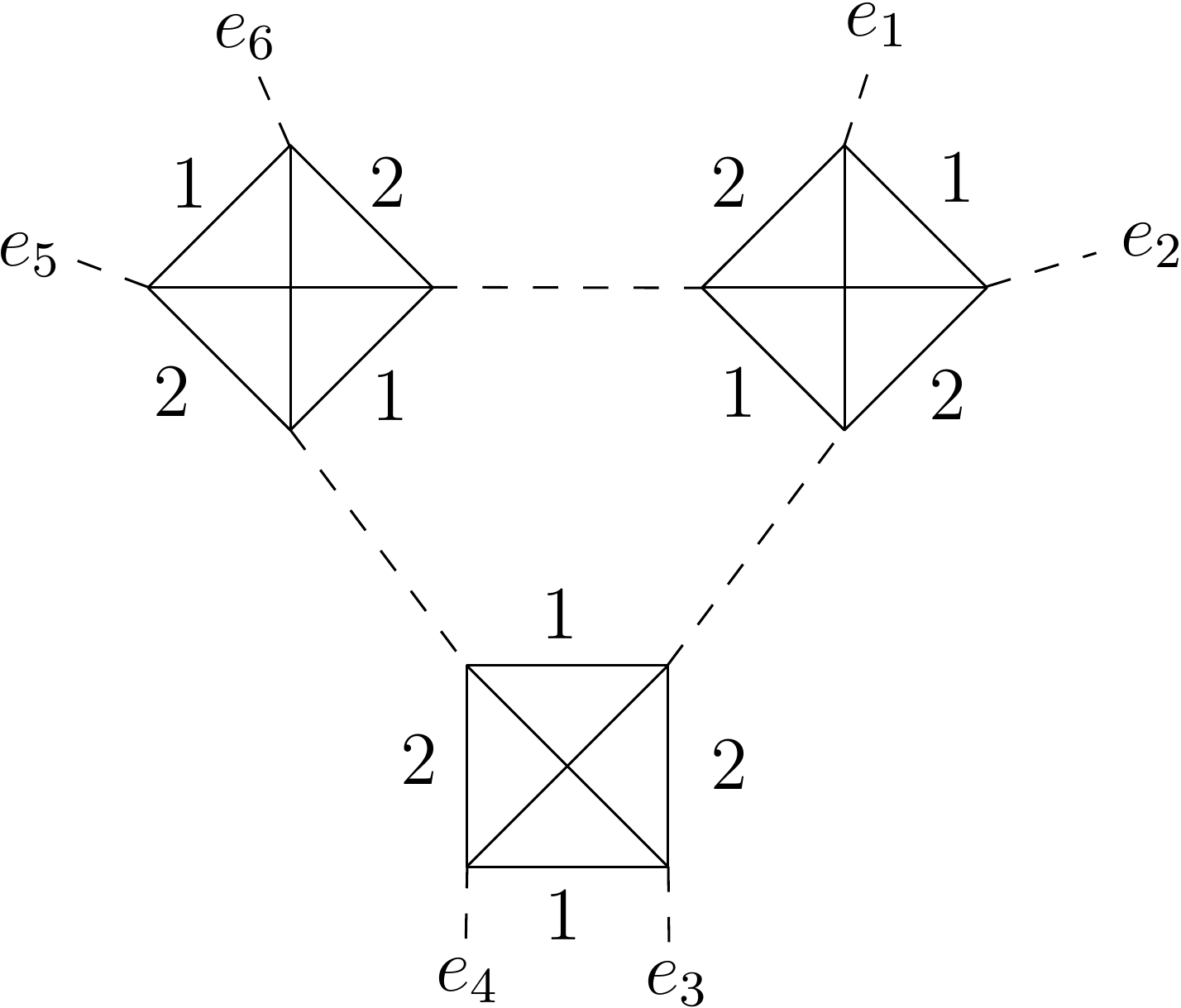} \end{array}
\end{equation}
There are at most $3$ different faces of each color passing along the edges $e_1, \dotsc, e_6$. We denote $f_1, f_2, f_3 \leq 3$ their numbers (note that $2$ and $3$ play symmetric roles). We want to show that it is always possible to remove this subgraph from $\cG$ and reglue the $6$ half-edges while decreasing the degree of the graph. We will use the following lemmas.

\begin{lemma} \label{thm:Variations}
\begin{enumerate}
\item Consider a n.s.i. face of degree 3 in $\cG\in\mathbb{G}_{O(N)^3}$, as in \eqref{F3nsi}, and a move on $\cG$ which removes the 3 bubbles of that face and reconnects the half-edges $e_1, \dotsc, e_6$ pairwise (the resulting graph may not be connected). Denote $\Delta F^{(d)}$ the variation of the number of faces of degree $d$ and $\Delta k$ the variation of the number of dipoles. Then
\begin{equation}
\Delta F^{(d)} \geq -9 \qquad \text{and} \qquad \Delta k \leq 9.
\end{equation}
\item Let $\cG\in\mathbb{G}_{O(N)^3}$ with $k$ dipoles. Let $e, e'$ be two edges of color 0 in $\cG$ forming a 2-edge-cut, and perform a flip
\begin{equation}
\begin{array}{c} \includegraphics[scale=.3]{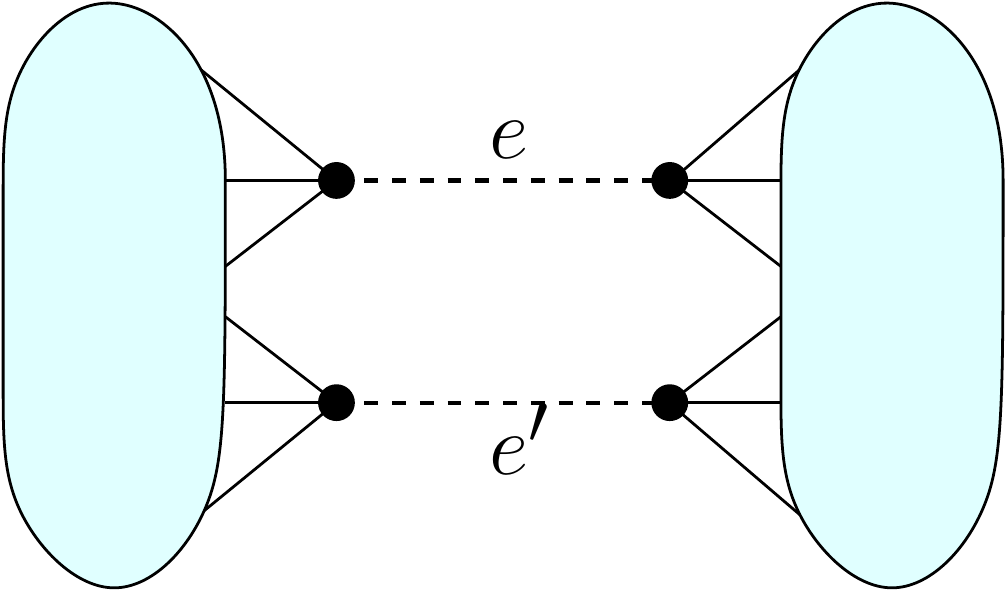} \end{array} \qquad \to \qquad \mathcal{G}_L = \begin{array}{c} \includegraphics[scale=.3]{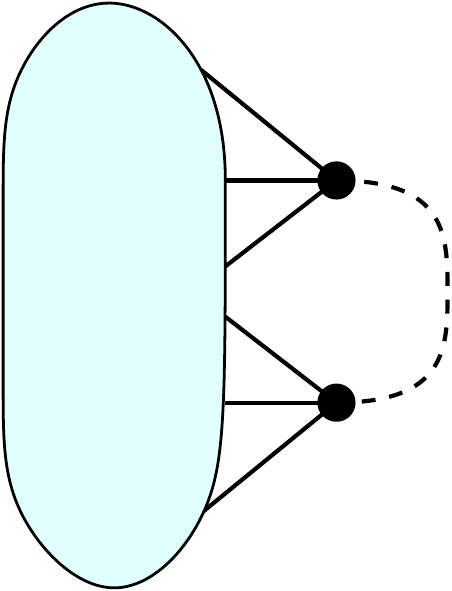} \end{array} \quad \cup \quad \mathcal{G}_R = \begin{array}{c} \includegraphics[scale=.3]{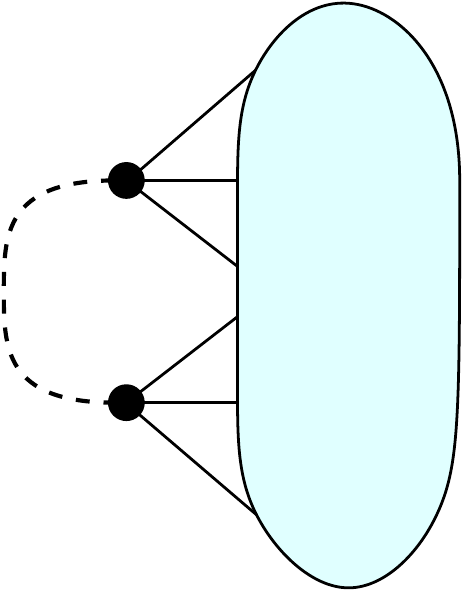} \end{array}
\end{equation}
which gives rise to two connected components $\cG_L$, $\cG_R$. Then the number of dipoles of each is bounded as $k(\cG_\alpha) \leq k+3$, $\alpha=L, R$, and the number of faces of degree $d$ as $F^{(d)}(\cG_L) + F^{(d)}(\cG_R) \geq F^{(d)}(\cG) - 3$.
\end{enumerate} 
\end{lemma}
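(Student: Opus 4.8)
The plan is to read both items of Lemma~\ref{thm:Variations} as \emph{local surgery estimates}. Each move changes the graph only inside a bounded region, so the only faces (resp.\ dipoles) whose count can change are those meeting that region, and it suffices to bound how many of them there are. I would use throughout the elementary fact that a face of $\cG$ which avoids the removed (resp.\ reglued) bubbles is literally a face of the new graph, and conversely a face of the new graph which avoids the newly created colour-$0$ edges is a face of $\cG$. Consequently $\Delta F^{(d)}$ equals the number of degree-$d$ faces of the new graph running through the new colour-$0$ edges, minus the number of degree-$d$ faces of $\cG$ meeting the region; in particular $\Delta F^{(d)}\ge -(\#\{\text{faces of degree }d\text{ meeting the region}\})$, and likewise for $\Delta k$ with ``meeting the region'' replaced by ``incident to a new colour-$0$ edge''.

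For item~1 the region consists of the three quartic bubbles crossed by the n.s.i.\ degree-$3$ face of~\eqref{F3nsi}, together with the three internal colour-$0$ edges joining them cyclically and the six external colour-$0$ half-edges $e_1,\dots,e_6$; the move deletes the three bubbles and pairs $e_1,\dots,e_6$ into three new colour-$0$ edges. I would first argue that the only faces of $\cG$ meeting one of the three bubbles are the degree-$3$ face itself and the faces running along some $e_i$: a face enters and leaves a bubble only through colour-$0$ edges, the three internal colour-$0$ edges are carried only by the degree-$3$ face and by faces that exit through an $e_i$, and each bubble has exactly two edges of every colour, so every other relevant face uses one of the six $e_i$. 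Since only six external half-edges are available, at most three faces of each colour run along them --- this is the estimate $f_1,f_2,f_3\le 3$ recorded right after~\eqref{F3nsi} --- so at most $1+f_1+f_2+f_3\le 10$ faces meet the region, the lone ``internal'' one being the degree-$3$ face. For $d\neq 3$ this already gives $\Delta F^{(d)}\ge -9$; the case $d=3$ needs the extra observation that the reconnection always restores a face of degree $3$ through a new colour-$0$ edge. For $\Delta k\le 9$ I would note that a newly created dipole uses one of the three new colour-$0$ edges, that the endpoints of such an edge pin down the bubble(s) a dipole containing it could use, and that for each of the three colours this determines at most one dipole --- hence at most $3$ new dipoles per new edge and at most $9$ in all, dipole destructions only improving the bound.

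For item~2 the key point is that, since $\{e,e'\}$ is a $2$-edge-cut, any face of $\cG$ with vertices on both sides of the cut must traverse a cut edge, and with only two cut edges available the colour-$i$ face through $e$ must equal the colour-$i$ face through $e'$, for each colour $i$. Hence exactly three faces of $\cG$ meet the cut --- one per colour, each passing through both $e$ and $e'$ --- and the flip cuts each of them into two faces, one lying inside $\cG_L$ and one inside $\cG_R$, whose degrees add up to the degree of the original face (the single $e$ and $e'$ contributions being replaced one-for-one by $\tilde e$ and $\tilde e'$). Writing $c_d\le 3$ for the number of these three faces having degree $d$, we get $F^{(d)}(\cG_L)+F^{(d)}(\cG_R)=F^{(d)}(\cG)-c_d+(\text{number of degree-}d\text{ halves})\ge F^{(d)}(\cG)-3$. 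For the dipole count, a new dipole in $\cG_\alpha$ must use $\tilde e$ or $\tilde e'$, which as above lies in at most one dipole per colour, so $k(\cG_\alpha)\le k+3$.

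I expect the bottleneck to be the bookkeeping in item~1: one has to genuinely enumerate the routings of faces and of would-be dipoles through the three-bubble configuration, in particular verifying that nothing beyond the degree-$3$ face and the $e_i$-arcs is affected, handling the degree-$3$ case delicately, and checking that three reconnected colour-$0$ edges cannot spawn more dipoles than claimed. Item~2 is by comparison routine once the $2$-edge-cut has forced every cut face through both $e$ and $e'$, which pins the face count to $3$ and makes the dipole bound immediate.
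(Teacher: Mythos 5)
Your proposal follows the same route as the paper's proof: for item~1 you bound the damage by the faces incident to the three removed bubbles, using that each face entering through the legs $e_1,\dotsc,e_6$ uses at least two of them (hence $f_i\le 3$ per colour, at most nine such faces), and you bound the created dipoles by three per new colour-0 edge (one per colour), giving $\Delta k\le 9$; for item~2 you observe that exactly one face per colour crosses the 2-edge-cut, that the flip splits each of these three faces into two, and that new dipoles in $\cG_L,\cG_R$ must contain $\tilde e$ or $\tilde e'$. This is exactly the paper's accounting (the paper does not even spell out the dipole bound in item~2, which you do), and your verification that the degree-3 face is the only face internal to the three-bubble region is a correct and welcome extra detail.

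The one step that does not hold up is your patch for $d=3$ in item~1: the assertion that ``the reconnection always restores a face of degree 3 through a new colour-0 edge.'' Nothing forces this: the degrees of the faces running through the three new colour-0 edges are dictated by the part of $\cG$ outside the removed region, over which the move gives no control, so no face of any prescribed degree is guaranteed to appear. The paper's own proof never invokes such a claim; it simply takes the at most nine faces running along $e_1,\dotsc,e_6$ as the extremal loss, the internal degree-3 face being the one that defines the move. If one insists, as you do, on also counting that internal face among the destroyed degree-3 faces, the honest worst-case constant for $d=3$ becomes $10$ rather than $9$ unless one proves it cannot be saturated --- and this is harmless, since every later use of the lemma (the inductive bounds on $F^{(3)}_{\text{n.s.i.}}$ and $F^{(4)}$) only requires \emph{some} finite constant. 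So either drop the unproven observation and accept the marginally weaker constant, or supply an actual argument for it; as written, this single sentence is the only genuine gap, and it is local and inconsequential for the rest of the paper.
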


This lemma contains only particular cases of a more general principle. If $\mathcal{H}$ is a subgraph, then the number of incident faces of each degree can be bounded as a function of $\mathcal{H}$ and not the graphs it is contained in. Then when one removes $\mathcal{H}\subset \cG$ in some way, the variations of the number of faces of each degree is bounded independently of $\cG$.

\begin{proof}
Along each edge there are exactly 3 faces, one of each color.
\begin{enumerate}
\item Before the move, there is at most 9 different faces going along the edges $e_1, \dotsc, e_6$. If they all have degree $d$ and none of this degree are created by the move we have $\Delta F^{(d)} =-9$, which is the extremal case.

After the move, the pairing of the half-edges gives rise to 3 edges of color 0. If they have no faces in common and all of them are dipoles, this gives 9 dipoles. If no dipoles are destroyed in the move, this gives $\Delta k = 9$.

\item For each color 1, 2, 3, it is the same face going along $e$ and $e'$ and the move splits each of them into two. If those three faces in $\cG$ were of degree $d$, and they are split in $\cG_L, \cG_R$ into faces of different degrees, then $F^{(d)}(\cG_L) + F^{(d)}(\cG_R) = F^{(d)}(\cG) - 3$, which is the extremal case.

\end{enumerate}
\end{proof}

\begin{lemma} \label{thm:TypeIMove}
Assume $\cG$ has a n.s.i. of degree 3 such that $f_2\leq 3$ and $f_3\leq 2$. Further assume that the following move gives a graph $\cG'$ connected
\begin{equation}
\begin{array}{c} \includegraphics[scale=.65]{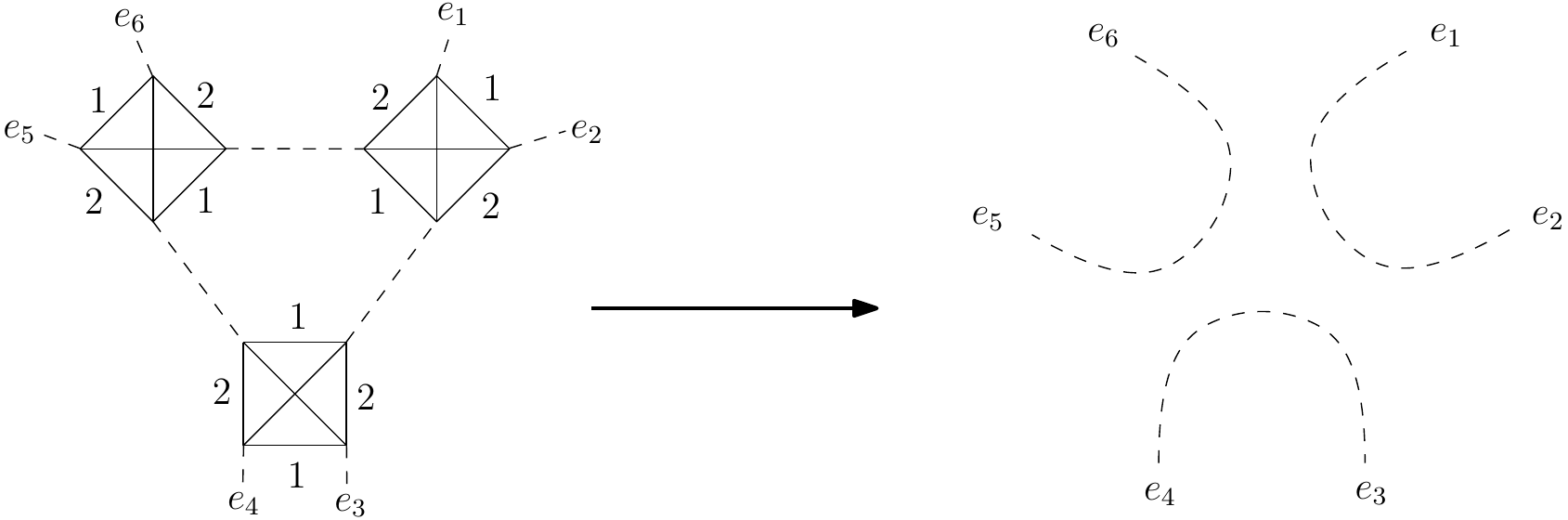} \end{array}
\end{equation}
then the degree changes as $\Delta \omega = \omega(\cG') - \omega(\cG) \leq -1/2$.
\end{lemma}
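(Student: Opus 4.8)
The plan is to compute $\Delta\omega=\omega(\cG')-\omega(\cG)$ directly from the degree formula~\eqref{eq:deg}, separating the contribution of the bubble count from that of the face count. The move only deletes bubbles and adds edges of color $0$, so the bubble part is the easy half: it removes exactly the three bubbles carried by the n.s.i.\ face (together with the three color-$0$ edges forming that face), and since deleting a tetrahedral bubble lowers $\tfrac32 n_t+2n_p$ by $\tfrac32$ while deleting a pillow lowers it by $2$, the total drop of the bubble contribution to $\omega$ is at least $\tfrac92$. Hence
\begin{equation*}
\Delta\omega \ \le\ -\tfrac92 + \bigl(F(\cG)-F(\cG')\bigr),
\end{equation*}
and it suffices to show that the move destroys at most four faces, i.e. $F(\cG)-F(\cG')\le 4$.

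For the face count I would do the following bookkeeping. Write $\mathcal H\subset\cG$ for the deleted piece (three bubbles plus three color-$0$ edges), which has exactly the six external half-edges $e_1,\dots,e_6$. For each color $i$, counting the six color-$i$ edges inside $\mathcal H$ against the three internal color-$0$ edges shows that the color-$i$ faces restricted to $\mathcal H$ form three disjoint arcs, hence a perfect matching $\sigma_i^{\mathrm{in}}$ of $\{e_1,\dots,e_6\}$, plus — only when $i=1$ — the closed degree-$3$ face itself; in particular no color-$2$ or color-$3$ face closes up inside $\mathcal H$. The complementary graph gives in the same way a matching $\sigma_i^{\mathrm{out}}$, and $f_i$ equals the number of cycles of the $2$-regular graph $\sigma_i^{\mathrm{in}}\cup\sigma_i^{\mathrm{out}}$ (so $f_i\in\{1,2,3\}$, consistent with what is already stated after~\eqref{F3nsi}). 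Inspecting a single tetrahedron identifies $\sigma_1^{\mathrm{in}}$ as the matching that pairs, for each bubble, its two external half-edges — and this is precisely the reconnection performed by the move. Writing $f_i'$ for the number of color-$i$ faces meeting the three new color-$0$ edges, one then has $f_i'=\#\text{cycles}(\sigma_1^{\mathrm{in}}\cup\sigma_i^{\mathrm{out}})$, every other face is untouched, and the only face destroyed without replacement is the degree-$3$ one, so that
\begin{equation*}
F(\cG)-F(\cG') \ =\ 1 + \sum_{i=1}^{3}\bigl(f_i-f_i'\bigr).
\end{equation*}

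The bound then follows termwise. For $i=1$ the reconnecting matching is $\sigma_1^{\mathrm{in}}$ itself, so $f_1'=f_1$ and that term drops out. For $i=2,3$, any two perfect matchings on six points differ by at most two ``swaps'', and each swap changes the cycle count of the $2$-regular graph obtained after adjoining $\sigma_i^{\mathrm{out}}$ by at most one, so $f_i'\ge\max(f_i-2,\,1)$ (this is also a sharpening, in our situation, of the $\Delta F^{(d)}\ge -9$ bound of Lemma~\ref{thm:Variations}). With the hypotheses $f_2\le 3$ and $f_3\le 2$ this gives $f_2-f_2'\le 2$ and $f_3-f_3'\le 1$, whence $F(\cG)-F(\cG')\le 1+0+2+1=4$ and $\Delta\omega\le -\tfrac12$, as claimed. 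The main obstacle is the second step: making the face bookkeeping fully rigorous — in particular verifying that no color-$2$ or color-$3$ face closes up inside $\mathcal H$, that $\sigma_i^{\mathrm{in}}$ and $\sigma_i^{\mathrm{out}}$ are genuine perfect matchings, and that the move realises exactly $\sigma_1^{\mathrm{in}}$ — and, if they can occur, dealing with configurations where some of the bubbles are pillows rather than tetrahedra or where some of the half-edges $e_1,\dots,e_6$ coincide.
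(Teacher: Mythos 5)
Your proposal is correct and is essentially the paper's own argument: the paper writes $\Delta \omega = -\tfrac{7}{2} - \Delta f_1 - \Delta f_2 - \Delta f_3$ (your $-\tfrac{9}{2}$ from the three deleted bubbles plus $+1$ for the destroyed internal degree-3 face), uses that the move leaves $f_1$ unchanged, and gets $\Delta f_2 \geq -2$, $\Delta f_3 \geq -1$ simply because at least one face of color $2$ and one of color $3$ survive along the new color-0 edges, combined with the hypotheses $f_2 \leq 3$, $f_3 \leq 2$ --- exactly your $f_i' \geq 1$ bound. Your additional matching/swap refinement $f_i' \geq \max(f_i-2,1)$ is sound but not needed, and the caveats you flag (pillow bubbles, coinciding half-edges) are likewise not treated in the paper's proof, which works directly with the pictured tetrahedral configuration defining the move.
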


\begin{proof} 
The variation in the degree after the move is given by 
\begin{equation}
\Delta \omega = -\frac{7}{2} - \Delta f_1 - \Delta f_2 - \Delta f_3
\end{equation}
where $\Delta f_{i}$ is the variation of the number of faces of color $i$ which go along $e_1, \dotsc, e_6$ in $\cG$. In the lemma, the move does not change $f_1$. Moreover, $\Delta f_2\geq -2$ and $\Delta f_3 \geq -1$, since there is at least one face of color $2$ and one of color $3$ going along the edges on the configuration on the right. Thus $\Delta \omega \leq -\frac{1}{2}$.
\end{proof}

%
%
%


We can now prove \eqref{BoundFaceDegreeP} for n.s.i. faces of degree 3. We proceed by induction on the degree. At degree 0, all faces have even degree so $F^{(3)}(\cG)=0$.

Let $\omega>0$ and assume that there exists a bound $F^{(3)}_{\text{n.s.i.}}(\cG')\leq \phi^{(3)}_{\text{n.s.i.}}(\omega', k')$ for all graphs $\cG'$ of degree $\omega'<\omega$. Let $\cG\in\mathbb{G}_{O(N)^3}$ have degree $\omega$.

We first consider the cases where a pair of edges $e_i, e_j$ forms a 2-edge-cut. Without loss of generality, we will consider the pairs to be $\{e_1, e_j\}$ for $j=2, \dotsc, 6$. Clearly, the case of $\{e_1, e_5\}$ is a 2-cut is equivalent to $\{e_1, e_3\}$ by exchanging the colors 2 and 3. Same for $\{e_1, e_6\}$ which is equivalent to $\{e_1, e_4\}$. It is therefore enough to consider the cases where $\{e_1, e_j\}$ is a 2-cut for $j=2,3,4$.

The first step is to perform the cut and obtain two connected components $\cG_1, \cG_2$ and $\omega(\cG) = \omega(\cG_1) + \omega(\cG_2)$. We denote $\cG_1$ the component which inherits the n.s.i. face of degree 3. From Lemma \ref{thm:Variations}
\begin{equation}
F^{(3)}_{\text{n.s.i.}}(\cG) \leq  F^{(3)}_{\text{n.s.i.}}(\cG_1) + F^{(3)}_{\text{n.s.i.}}(\cG_2) + 3
\end{equation}
$\cG_1$ is not melonic (it has a face of degree 3), so $\omega(\cG_1)>0$. This gives $\omega(\cG_2)<\omega(\cG)$ and thus by the induction hypothesis 
\begin{equation}
F^{(3)}_{\text{n.s.i.}}(\cG_2) \leq \phi^{(3)}_{\text{n.s.i.}}(\omega(\cG_2), k(\cG_2)) \leq \max_{k_2\leq k+3} \phi^{(3)}_{\text{n.s.i.}}(\omega(\cG_2), k_2)
\end{equation}
To bound $F^{(3)}_{\text{n.s.i.}}(\cG_1)$, we will notice from Lemma \ref{thm:Variations} that $\cG_1$ has at most $k+3$ dipoles (hence isolated dipoles) and distinguish the following cases.

\begin{description}
\item[$\{e_1, e_2\}$ is a 2-cut (or form a single edge)] 
$\cG_1$ has a tadpole, 
\begin{equation}
\begin{array}{c} \includegraphics[scale=.6]{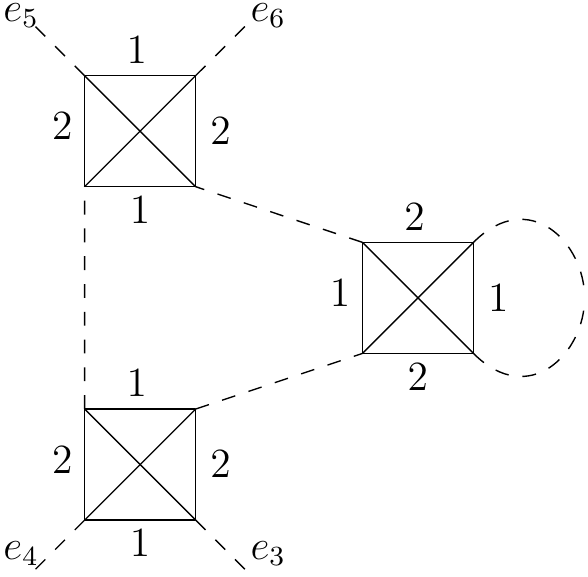} \end{array}
\end{equation}
which we remove to get $\cG_1'$ whose degree is $\omega(\cG_1') = \omega(\cG_1)-1/2$. From Lemma \ref{thm:Variations}, $\cG_1'$ has at most $k+6$ isolated dipoles ($k+3$ in fact holds because in this particular case, $\cG_1$ has at most all the dipoles of $\cG$). Thus,
\begin{equation}
F^{(3)}_{\text{n.s.i.}}(\cG_1) \leq F^{(3)}_{\text{n.s.i.}}(\cG_1') + 3 \leq \max_{k_1\leq k+6} \phi^{(3)}_{n.s.i.}(\omega-\frac{1}{2}, k_1) + 3
\end{equation}

\item[$\{e_1, e_3\}$ is a 2-cut (or form a single edge)] 
$\cG_1$ is as follows
\begin{equation}
\begin{array}{c} \includegraphics[scale=.6]{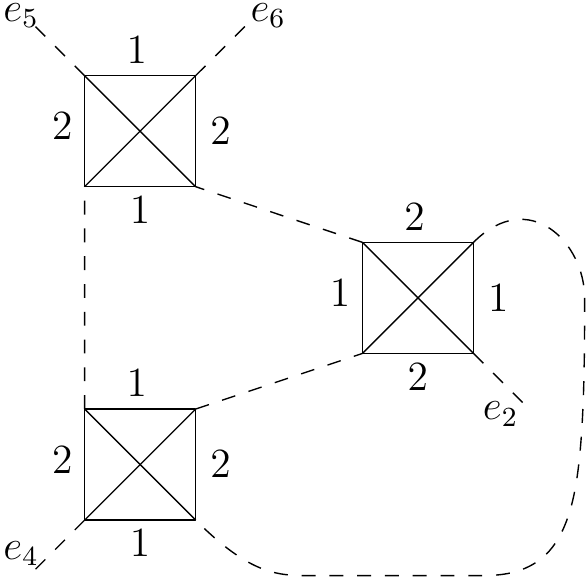} \end{array}
\end{equation}
so that $f_2, f_3\leq 2$. We can assume that $\{e_5, e_6\}$ is not a 2-edge-cut, since it would be equivalent to the case where $\{e_1, e_2\}$ is a 2-cut if it were. Therefore the move from Lemma \ref{thm:TypeIMove} does not disconnect $\cG_1$ and one obtains $\cG_1'$ of degree $\omega(\cG_1')\leq \omega(\cG_1)-1/2$. From Lemma \ref{thm:Variations}, it has at most 9 more isolated dipoles than $\cG_1$ and 9 n.s.i. faces of degree 3 less than $\cG_1$. This gives
\begin{equation}
F^{(3)}_{\text{n.s.i.}}(\cG_1) \leq F^{(3)}_{\text{n.s.i.}}(\cG_1') + 9 \leq \max_{\substack{\omega_1\leq \omega-1/2\\ k_1\leq k+12}} \phi^{(3)}_{\text{n.s.i.}}(\omega_1, k_1) + 9
\end{equation}

\item[$\{e_1, e_4\}$ is a 2-cut (or form a single edge)] 
$\cG_1$ is as follows
\begin{equation}
\begin{array}{c} \includegraphics[scale=.6]{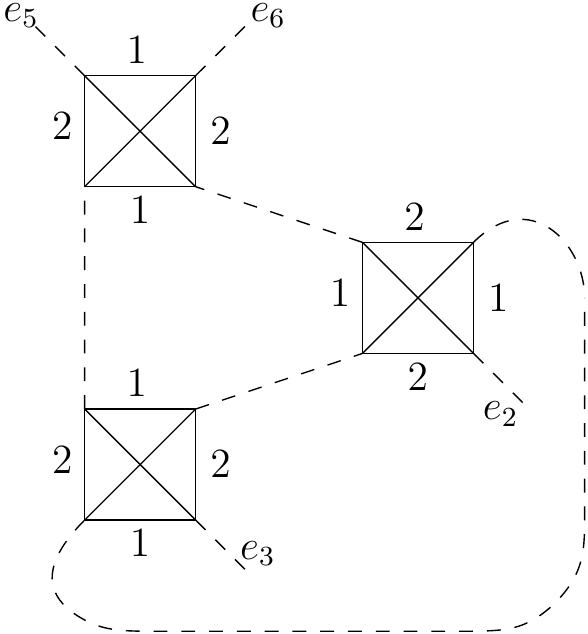} \end{array}
\end{equation}
so that $f_2\leq 2, f_3\leq 3$. The rest is therefore identical to the case where $\{e_1, e_3\}$ is a 2-cut above.
\end{description}

We now assume that no pair $\{e_i, e_j\}$ forms a 2-cut and distinguish two cases.
\begin{description}
\item[$f_2 \leq 3$ and $f_3\leq 2$ (or the other way around)] 
We can directly perform the move from Lemma \ref{thm:TypeIMove} and obtain $\cG'$ which is connected and of degree $\omega(\cG')\leq \omega-1/2$. From Lemma \ref{thm:Variations} we further get
\begin{equation}
F^{(3)}_{\text{n.s.i.}}(\cG) \leq F^{(3)}_{\text{n.s.i.}}(\cG') + 9 \leq \max_{\substack{\omega'\leq \omega-1/2\\ k'\leq k+9}} \phi^{(3)}_{\text{n.s.i.}}(\omega', k') + 9
\end{equation}

\item[$f_2 = f_3 = 3$]
In this case, the move from Lemma \ref{thm:TypeIMove} would only guarantee $\Delta \omega \leq 1/2$ which is not enough for our purposes. Instead, we exchange some of the half-edges as follows
\begin{equation}
\begin{array}{c} \includegraphics[scale=0.65]{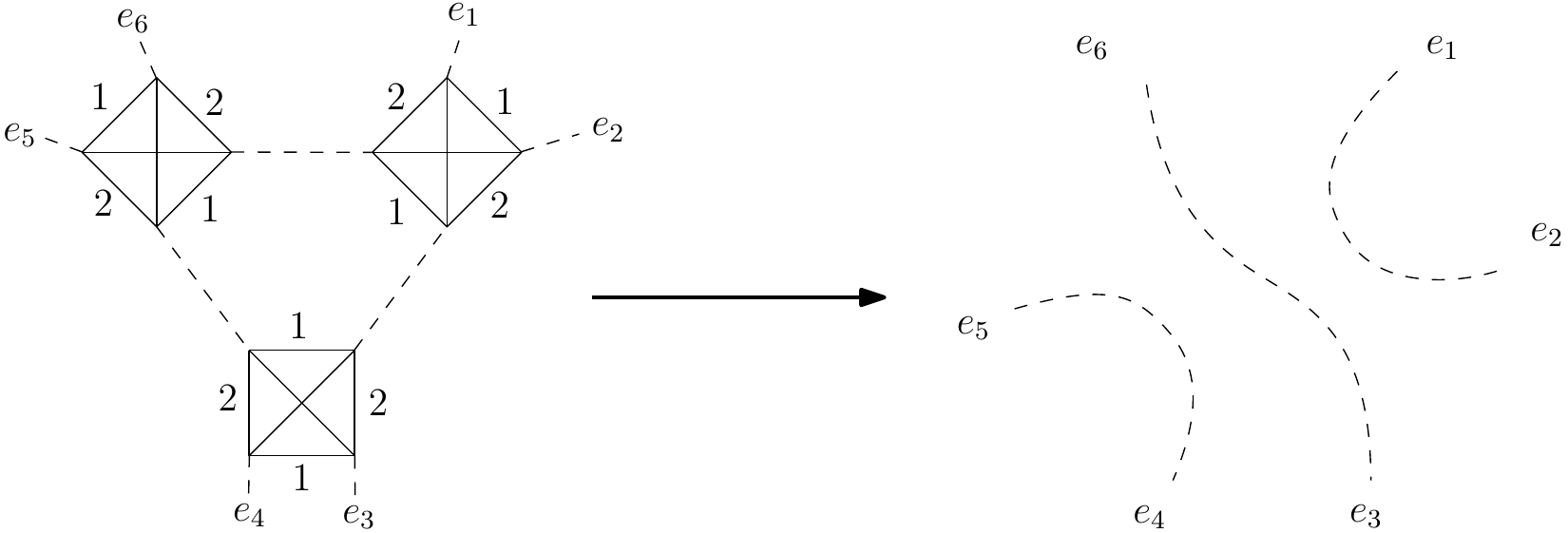} \end{array}
\end{equation}
which gives exactly $\Delta f_2 = \Delta f_3 = -1$. Moreover, if $f_1=3$, then $\Delta f_1=-1$ too, and if $f_1\geq 2$ then $\Delta f_1\geq -1$ because there is still at least one face on the configuration of the right hand side. Overall, $\Delta f_1 \geq -1$ and thus $\Delta \omega \leq -1/2$. We then conclude as previously.
\end{description}

%
%
%
\paragraph{Faces of degree $4$\\}

We have so far proved that at fixed degree $\omega$ and fixed number of isolated dipoles $k$, there is a bound on the number of faces of every degree $d$, except for $d=4$ which does not enter \eqref{eq:face_bound}. So {\it a priori}, there could be an unbounded number of faces of degree 4. However, for this to be possible they would have to be at arbitrarily large distance of any faces of degree $d\neq 4$ (since a bubble has a finite number of other bubbles at finite distance). Here the distance between two bubbles is the minimal number of edges of color 0 to go from any vertex of the first bubble to any vertex of the second bubble.

Similarly as for faces of degree $2$ and $3$, we distinguish two cases according to whether the face is self-intersecting or not.

\subparagraph{Self-intersecting faces of degree $4$\\}

Since a tetrahedral bubble has two edges of each color, a face can pass through a bubble at most twice. Therefore a self-intersecting face of degree $4$  has either 2 or 3 bubbles. If it is 2, then $\cG\in\mathbb{G}_{O(N)^3}$ has in fact only those two bubbles and there is a finite number of such graphs.

A self-intersecting face of degree 4 which goes along three bubbles can have a tadpole as follows (up to color permutations),
\begin{equation}
\begin{array}{c} \includegraphics[scale=.6]{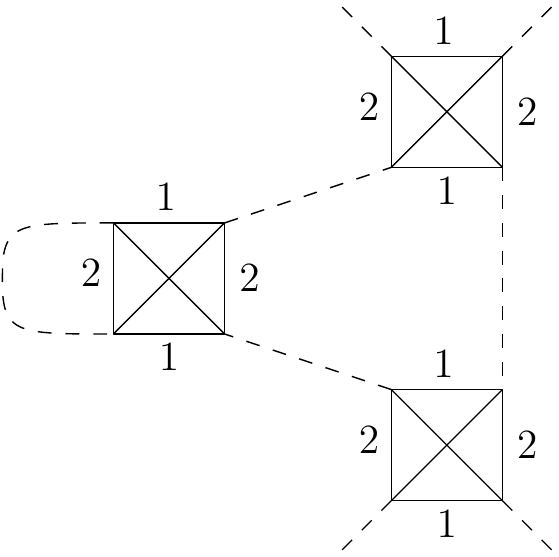} \end{array}
\end{equation}
the number of which we know to be bounded at fixed degree, or be as follows (up to color permutations)
\begin{equation}
\begin{array}{c} \includegraphics[scale=.45]{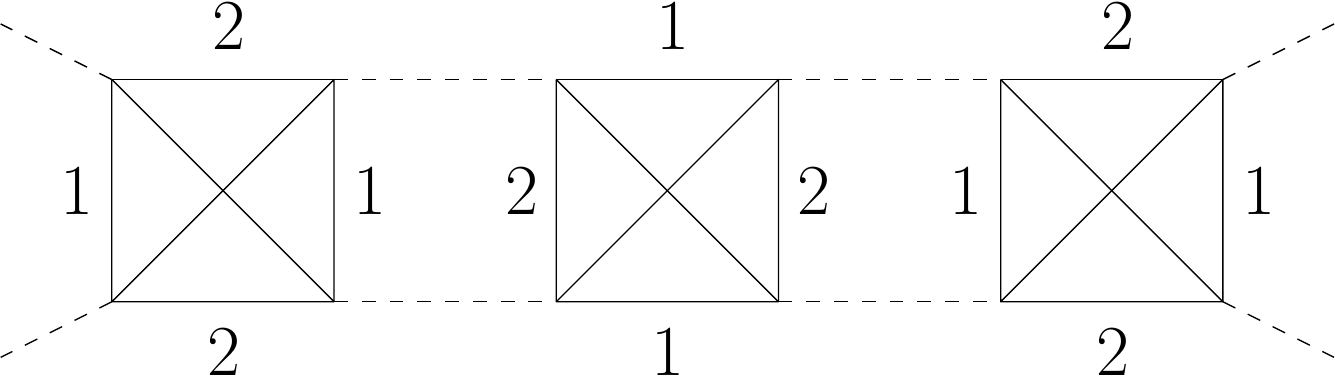} \end{array}
\end{equation}

\begin{itemize}
\item If the graph has a 2-cut as follows
\begin{equation}
\begin{array}{c} \includegraphics[scale=.45]{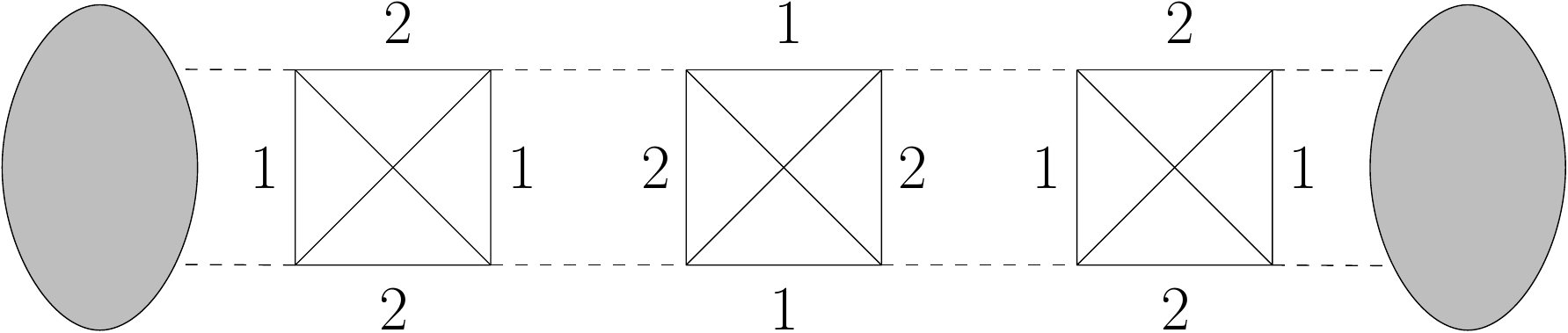} \end{array}
\end{equation}
we transform it to $\begin{array}{c} \includegraphics[scale=.25]{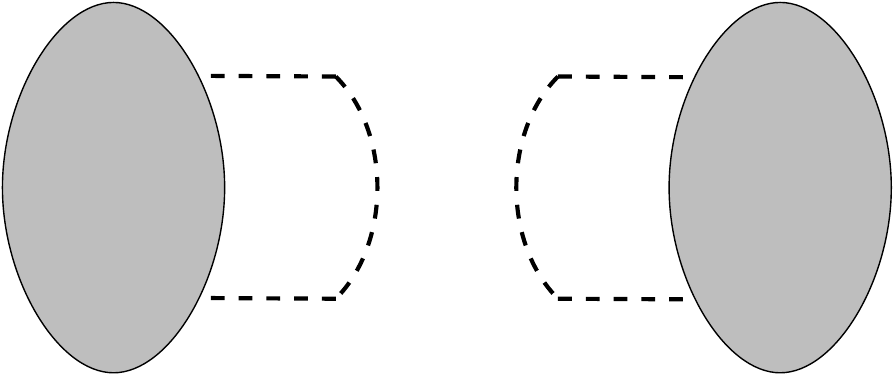} \end{array}$ and denote $\cG_L, \cG_R$ the two connected components. We find that
\begin{equation}
\omega(\cG) = \omega(\cG_L) + \omega(\cG_R) + \frac{3}{2}
\end{equation}
which means that $\omega(\cG_{L,R}) <\omega(\cG)$ so that the induction hypothesis can be applied to $\cG_{L, R}$. By applying Lemma \ref{thm:Variations} on both sides of the subgraph we find
\begin{equation}
F^{(4)}(\cG) \leq F^{(4)}(\cG_L) + F^{(4)}(\cG_R) + 7 \leq 2 \max_{\substack{\omega'\leq \omega-3/2\\ k'\leq k+3}} \phi^{(4)}(\omega',k') + 7
\end{equation}

\item Else, we perform the same move but the new graph $\cG'$ is connected and $\omega(\cG') \leq \omega(\cG)-5/2$. By adapting the arguments of Lemma \ref{thm:Variations}, we also find $F^{(4)}(\cG') \geq F^{(4)}(\cG) - 10$ and $k(\cG') \leq k(\cG)+10$. This leads to
\begin{equation}
F^{(4)}(\cG) \leq \max_{\substack{\omega'\leq \omega-5/2\\k'\leq k+10}} \phi^{(4)}(\omega', k') + 10
\end{equation}
\end{itemize} 

\subparagraph{Non-self-intersecting faces of degree $4$\\}
\label{par:nsi_f4}


Let $\cG\in\mathbb{G}_{O(N)^3}$ have a n.s.i. face of degree 4 and $B_{\text{exc}}\subset \cG$ the set of bubbles incident to at least one face of degree $d\neq 4$ or one self-intersecting face of degree 4. We have shown that
\begin{equation}
|B_{\text{exc}}| \leq \beta(\omega, k)
\end{equation}
We now want to show that $\cG$ cannot have bubbles at arbitrarily large distance from $B_{\text{exc}}$. Since the number of bubbles at finite distance of $B_{\text{exc}}$ is itself finite (because bubbles have degree 4), this will prove Lemma \ref{lemma:sec}. To do so, we will show that there are only a finite number of graphs where a bubble is at distance 3 or more from $B_{\text{exc}}$.

We denote $A, B, C, D$ the four bubbles of a n.s.i. face of degree 4
\begin{equation}
\begin{array}{c} \includegraphics[scale=.7]{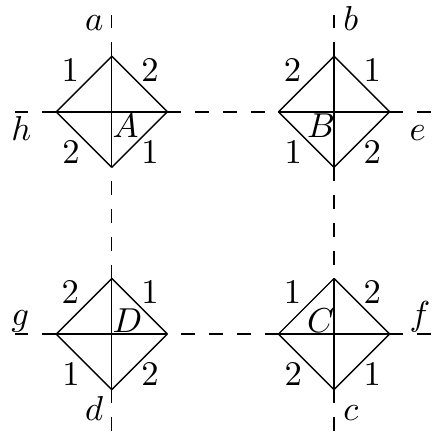} \end{array}
\end{equation}
and consider that \emph{A is at distance 3 of $B_{\text{exc}}$}. In particular, it is incident to only n.s.i. faces of degree 4. The half-edge $a$ cannot be connected to $b$ as it would form a dipole, nor to $d$ for the same reason. If $a$ is connected to $c$, then, in order for the faces of color 2 and 3 to have degree 4, it is necessary to connect $b$ to $d$. This leaves a subgraph which we replace with a single bubble as follows
\begin{equation}
\begin{array}{c} \includegraphics[scale=.7]{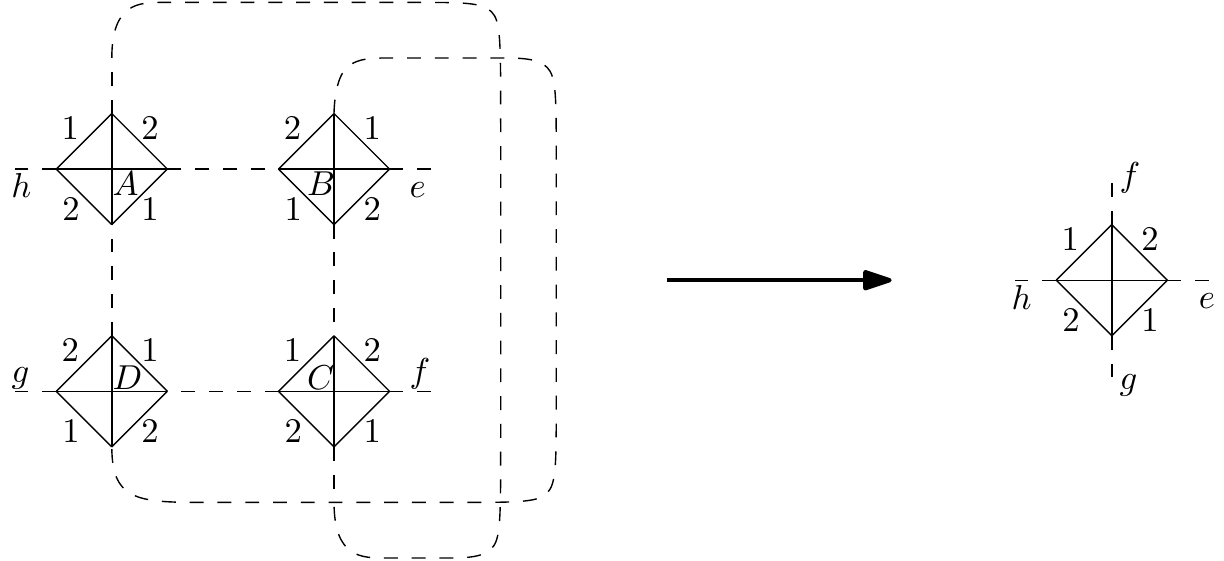} \end{array}
\end{equation}
This does not change any of the faces which go along $e, f, g, h$. Therefore the new graph $\cG'$ (which is connected) has only three faces less than $\cG$ and has degree $\omega(\cG') = \omega(\cG) -3/2$. Again adapting the arguments of Lemma \ref{thm:Variations}, we see that the number of faces of degree 4 cannot decrease by more than 6 (this is the largest number of faces which can go through $e,f,g,h$), the number of dipoles cannot increase by more than 6 (for the same reason). This gives
\begin{equation}
F^{(4)}(\cG) \leq \max_{k'\leq k+6} \phi^{(4)}(\omega-\frac{3}{2}, k') + 6
\end{equation}

We can now consider the case where $a$ is connected to another bubble,
\begin{equation}
\begin{array}{c} \includegraphics[scale=.7]{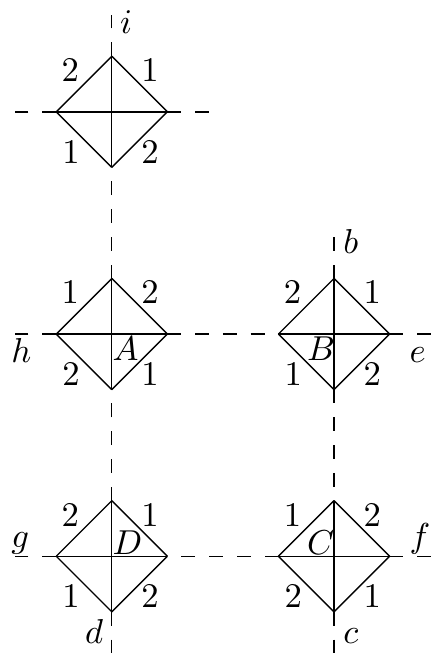} \end{array}
\end{equation}
The face of color 3 which goes along $d$ and $i$ must be of degree 4. The half-edges $d, i$ must therefore be connected to two vertices which are themselves connected by an edge of color 3. There is no such edge available in the subgraph, so a new bubble must be added,
\begin{equation}
\begin{array}{c} \includegraphics[scale=.7]{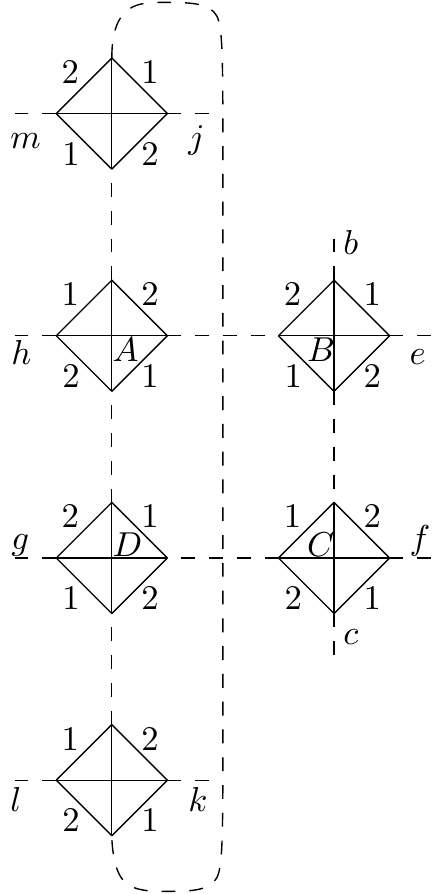} \end{array}
\end{equation}
The face of color 2 which goes along $b$ and $j$ must be of degree 4. The half-edges $b, j$ must therefore be connected to two vertices with an edge of color 2 between them. There are no such vertices available in the subgraph, so a new bubble must be added,
\begin{equation}
\begin{array}{c} \includegraphics[scale=.7]{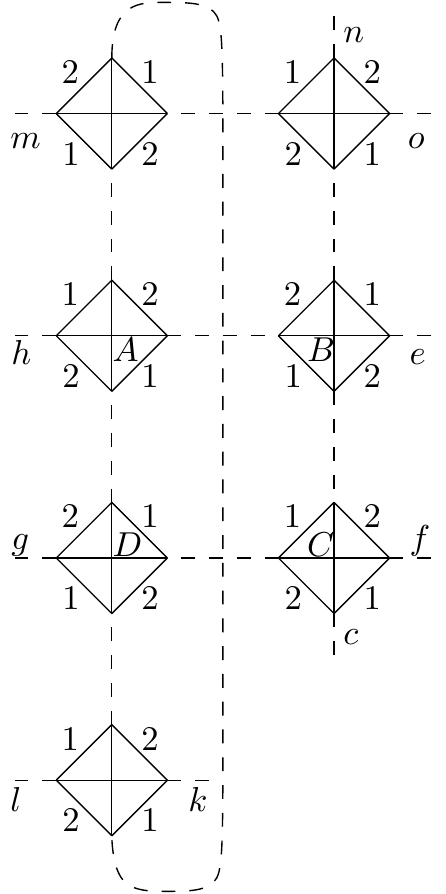} \end{array}
\end{equation}
The face of color 3 which goes along $c$ and $n$ must be of degree 4. The half-edges $c, n$ must therefore be connected to two vertices which are themselves connected by an edge of color 3. There is in our subgraph the edge of of color 3 adjacent to the half-edges $k, l$ available to do so.
\begin{itemize}
\item If $c$ is connected to $k$ and $n$ to $l$, it creates a face of degree 3 (that of color 2 along $c$ and $k$), which is forbidden.
\item If $c$ is connected to $l$ and $n$ to $k$, it creates a face of degree greater than 4 (that of color 2 along $c$), which is forbidden.
\end{itemize}
Therefore another bubble must be added
\begin{equation}
\begin{array}{c} \includegraphics[scale=.7]{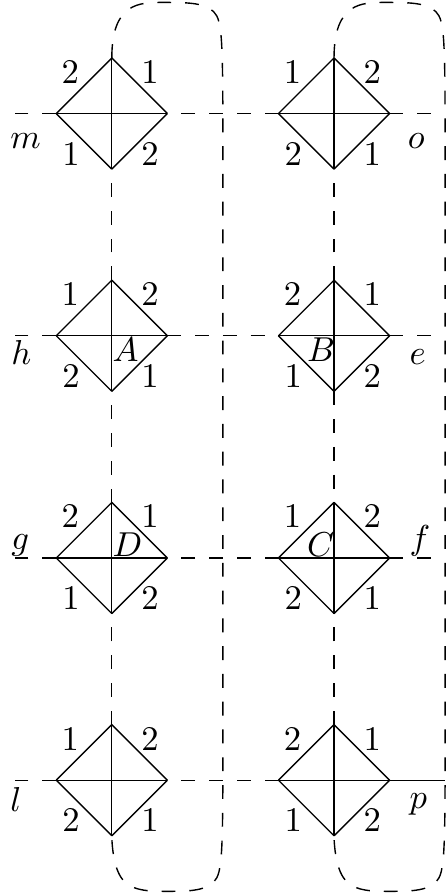} \end{array}
\end{equation}
We now consider the face of color 3 along $h$.
\begin{itemize}
\item If $h$ is connected to $g$, $m$ or $e$, this creates a dipole, which is forbidden.
\item $h$ can be connected to $f$, $l$, $m$, or $p$.
\item $h$ can be to a new bubble.
\end{itemize}

The situations of the second type are all treated similarly and each lead to a single possible graph.
\begin{description}
\item[$h$ to $l$] It then forces $e$ to $p$ so that the face of color 3 along $h$ has degree 4. It then forces $g$ to $m$ so that the face of color 1 along $h$ has degree 4. This in turn forces $f$ to $o$ so that the face of color 3 along $g$ has degree 4 (indeed the bubble labeled $D$ is at distance at least 2 of $B_{\text{exc}}$ so all its incident faces must have degree 4). This fully determines $\cG\in\mathbb{G}_{O(N)^3}$,
\begin{equation}
\begin{array}{c} \includegraphics[scale=.7]{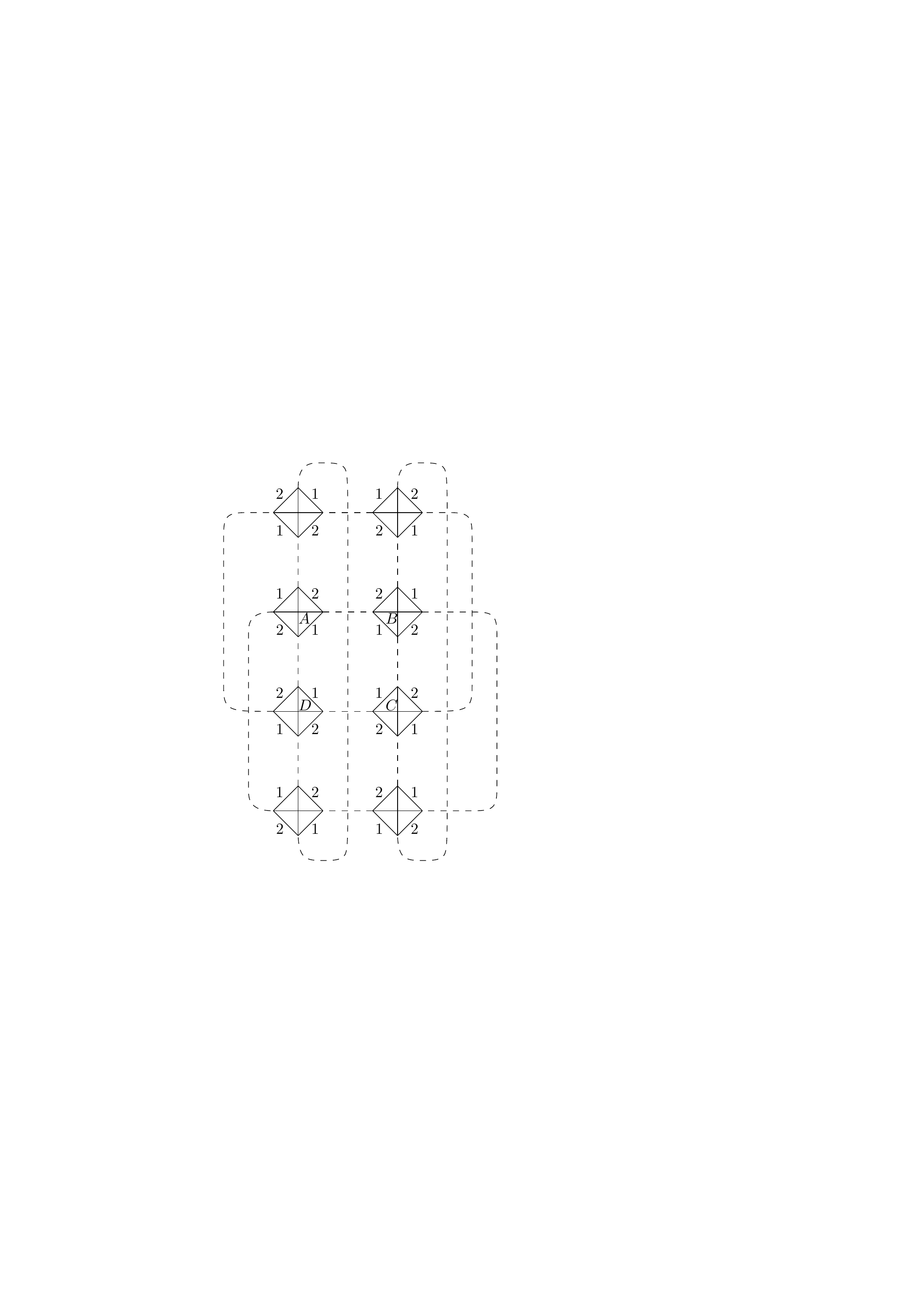} \end{array}
\end{equation}
\item[$h$ to $f$] then forces $g$ to $e$, then $p$ to $m$, then $o$ to $l$.
\item[$h$ to $o$] similar to the previous case by symmetry.
\item[$h$ to $p$] then forces $l$ to $e$, then $f$ to $m$, then $o$ to $g$.
\end{description}

We now consider the case where $h$ is connected to a new bubble. To close the face of color 3 along $h$, it is necessary to have two vertices connected by an edge of color 3, and they cannot belong to the newly added bubble or else the face of degree 4 would be self-intersecting. It is therefore necessary to add a yet another bubble. We arrive at
\begin{equation}
\begin{array}{c} \includegraphics[scale=.7]{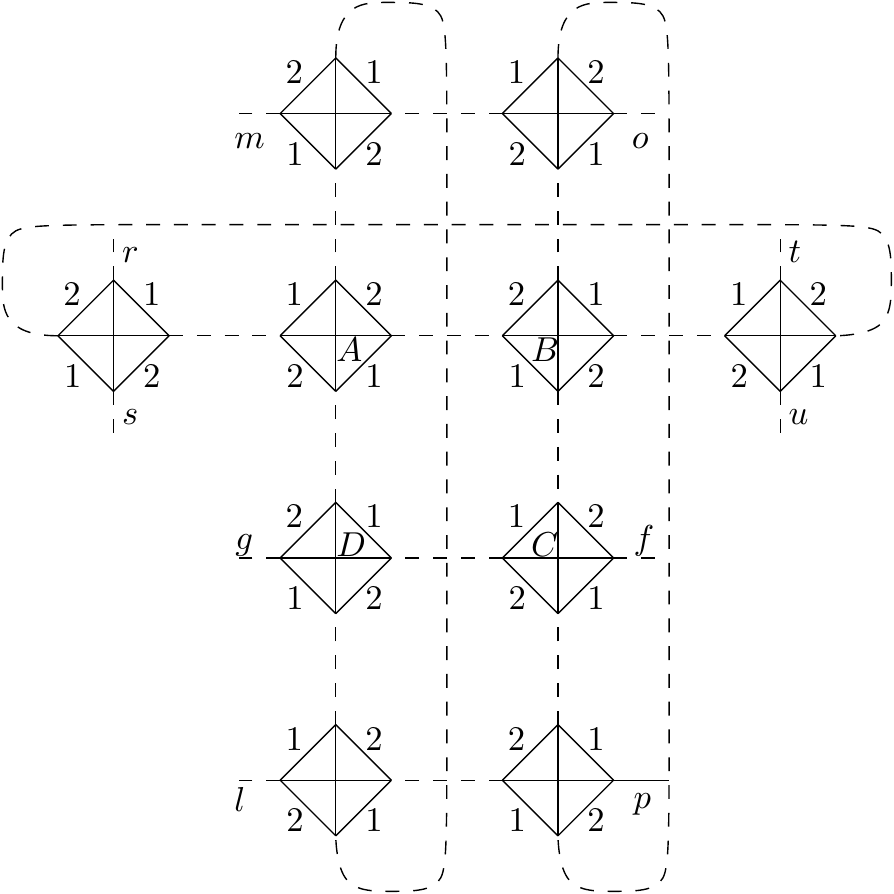} \end{array}
\end{equation}
The face of color 1 along $m$ and $r$ must have degree 4. The half-edges $m, r$ must therefore be connected to the vertices of an edge of 1. Since there is none available, a new bubble must be added. The same holds true for the face of color 2 along $g$ and $s$ (even with the previously added new bubble), so that we get
\begin{equation}
\begin{array}{c} \includegraphics[scale=.7]{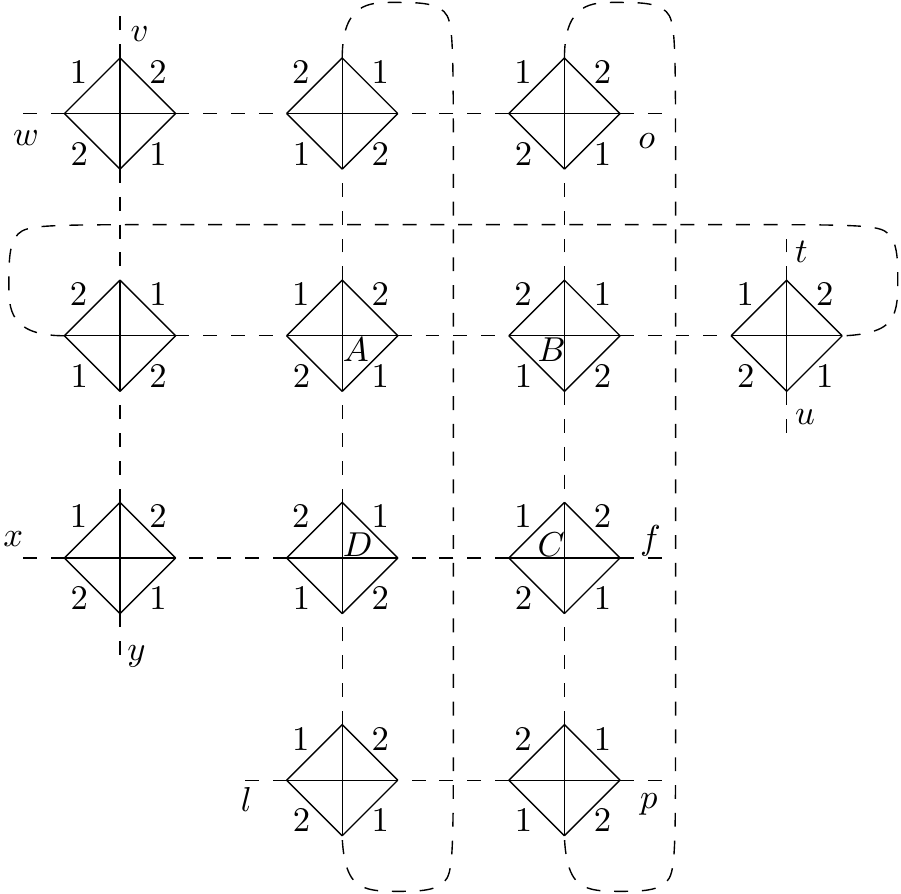} \end{array}
\end{equation}
To close the face of color 1 which goes along $o$ and $t$, one could try to use the edge of color 1 between $v$ and $w$. However, $o$ to $w$ would create a face of color 3 and degree 3, while $o$ to $v$ would create a self-intersecting face of color 3. We therefore need a new bubble. The same argument applies to the face of color 2 along $f$ and $u$ (even with the previously added new bubble, since the latter has no edge of color 2 with both ends available), and we get
\begin{equation}
\begin{array}{c} \includegraphics[scale=.7]{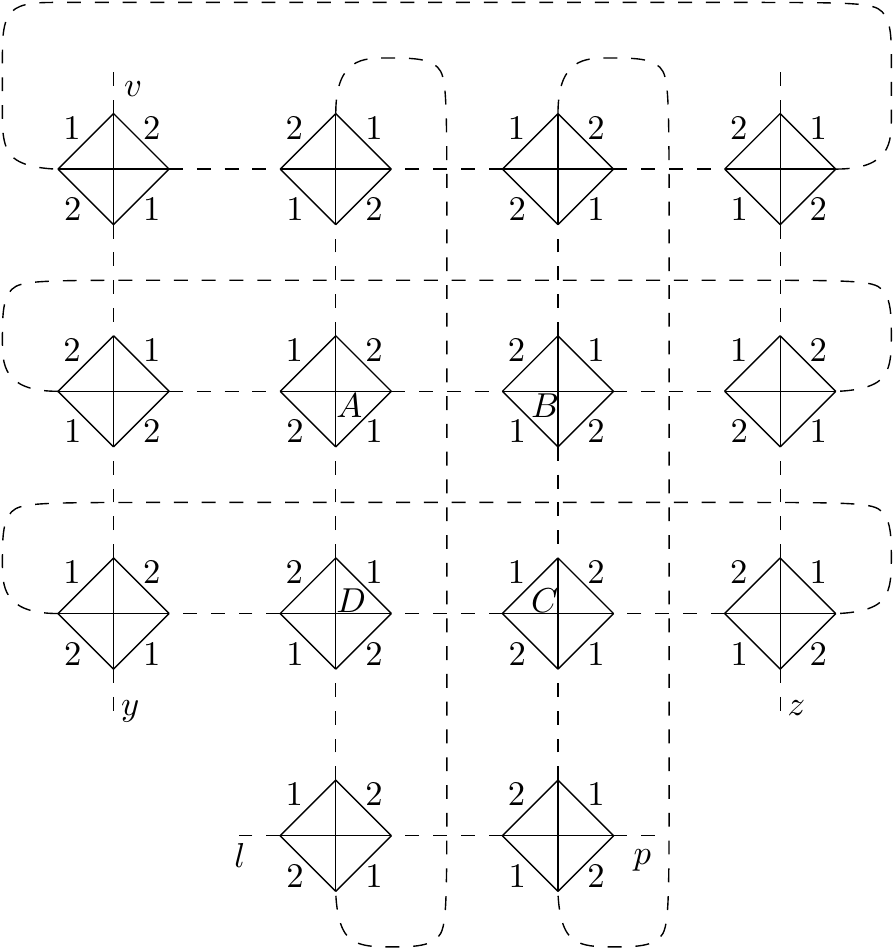} \end{array}
\end{equation}
The face of color 1 going along $l$ and $y$ must have degree 4 (since it is incident to the bubble labeled $D$ which is at distance at least 2 from $B_{\text{exc}}$). The half-edges $l, y$ must therefore be connected to both ends of an edge of color 1, but there is no such edge available in the subgraph. A new bubble must therefore be added. This also allows for closing the face of color 3 along $v$ and $y$ (which must also have degree 4 because the bubble to the left of $A$ is at distance at least 2 of $B_{\text{exc}}$).

The whole argument is then repeated one last time: the face of color 1 along $z$ and $p$ must have degree 4 (since it is incident to the bubble labeled $C$ which is at distance at least 1 from $B_{\text{exc}}$). The half-edges $z, p$ must therefore be connected to both ends of an edge of color 1, but there is no such edge available in the subgraph. A new bubble must therefore be added. It allows for closing the face of color 3 which goes along $z$ (which must have degree 4 because the bubble to the right of $B$ is at distance at least 1 from $B_{\text{exc}}$), and for closing the face of color 3 along $l$ and $p$ (which must have degree 4 because the bubble below $D$ is at distance at least 1 from $B_{\text{exc}}$). This fully determines $\cG$ as
\begin{equation}
\begin{array}{c} \includegraphics[scale=.7]{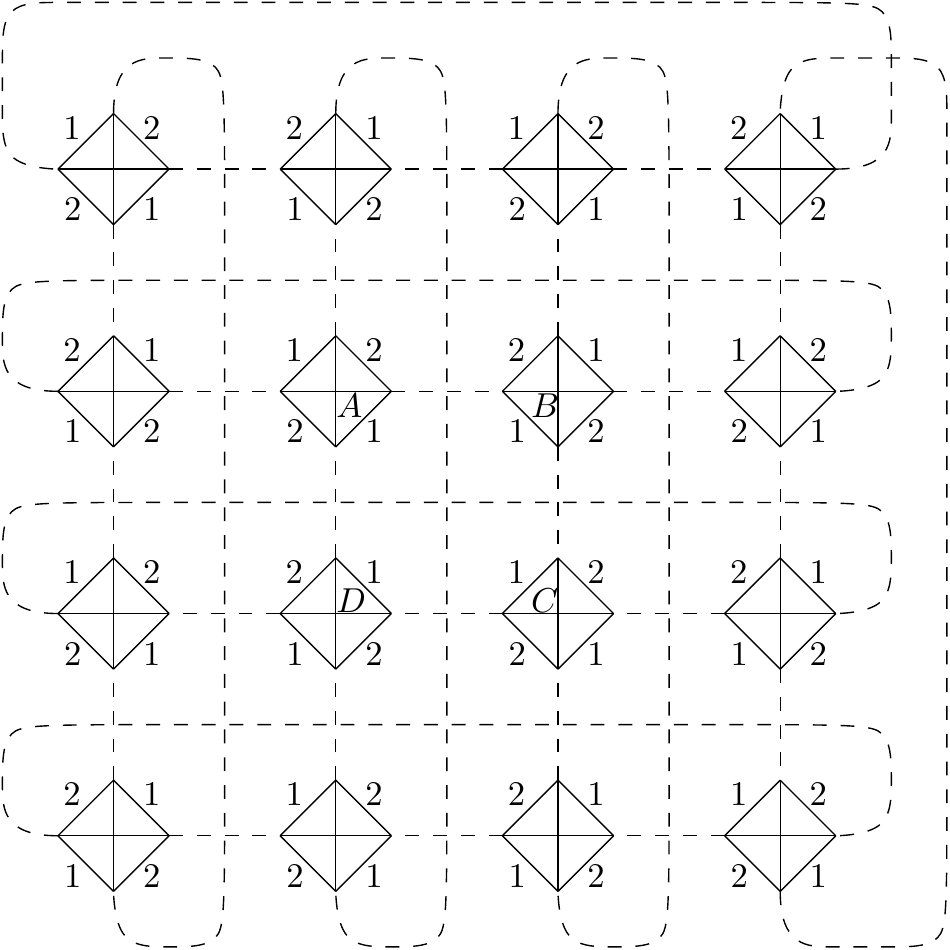} \end{array}
\end{equation}
This exhausts all the possibilities of having a bubble at distance 3 from $B_{\text{exc}}$. They consist in a finite number of graphs. This concludes the proof of Lemma \ref{lemma:sec}.

\section{Identification of the dominant schemes and double scaling}

\subsection{Identification of the relevant singularities} \label{sec:RelevantSingularitiesO(N)3}

From Theorem \ref{th:sch}, we know that there is a finite number of schemes at fixed degree, so the singularities can only come from the generating functions of chains and dipoles and that of melonic 2-point graphs. Therefore, we have a priori three different types of singular points:
\begin{itemize}
\item Singular points of $M(t,\mu)$, which are also singular for $U(t\mu)$ and $B(t,\mu)$.
\item Points such that $U(t,\mu) = 1$, which are singular for any type of chains.
\item Points such that $U(t,\mu) = \frac{1}{3}$ for broken chains only.
\end{itemize}

$M(t,\mu)$ is a generating series whose coefficients $[t^p\mu^q]M$ are the numbers of melonic 2-point graphs with $p$ melons in total and $q$ of type II. Hence its coefficients are all positive. This implies that the function $U_{\mu}: t \mapsto U(t,\mu)$ is an increasing function of $t$. Hence at fixed $\mu$, the point where $U(t,\mu) = \frac{1}{3}$ is always reached for a smaller value of $t$ than $U(t,\mu) = 1$.

Thus, we only have to know whether we first reach a value of $t$ where $U(t,\mu) = \frac{1}{3}$ or $t_c(\mu)$ such that $(t_c(\mu),\mu)$ is a singular point of $M(t,\mu)$. Using equation~\eqref{eq:mel}, we can express the variable $t$ as $ t = \frac{M(t,\mu)-1}{M(t,\mu)^4 + \mu M(t,\mu)^2}$. Therefore the equation $U(t,\mu) = 1/3$ can be written as
\begin{equation}
M(t,\mu) - \frac{4}{3} - \frac{2}{3} \frac{(M(t,\mu)-1)M(t,\mu)^2 \mu}{M(t,\mu)^4 + \mu M(t,\mu)^2} = 0
\end{equation}
Clearing the denominator gives
\begin{equation} \label{CriticalEquation}
-3M(t,\mu)^3 + 4t M(t,\mu)^2 -\mu M(t,\mu) + 2 \mu = 0
\end{equation}
which actually coincides with the equation~\eqref{eq:Mc_poly} determining the critical values of $M(t,\mu)$. Thus, points where $U(t,\mu) = \frac{1}{3}$ are exactly the points which are critical for $M(t,\mu)$ and they are called the \emph{dominant singularities} or \emph{critical curve}. It is plotted in Figure~\ref{fig:plot_tc}.

\begin{figure}
\begin{center}
\includegraphics[scale=0.6]{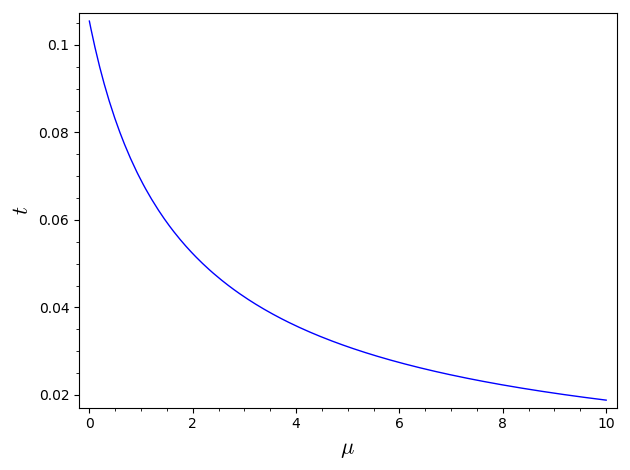}
\caption{Critical points for the generating function $M(t,\mu)$. They also correspond to points where $U(t,\mu) = \frac{1}{3}$ and therefore are critical points for $B(t,\mu)$ as well.}
\label{fig:plot_tc}
\end{center}
\end{figure}

\medskip
Close to the critical curve, the behaviour of $M(t,\mu)$ is given by equation~\eqref{eq:crit_behav}. Therefore $U(t,\mu)$ can be expressed as
\begin{equation}
U(t,\mu) \underset{t \rightarrow t_c(\mu)}{\sim} \frac{1}{3} + \left(1 - \frac{4}{3}t_c(\mu)\mu M_c(\mu)\right) K(\mu)\sqrt{1-\frac{t}{t_c(\mu)}} - \frac{2}{3} t_c(\mu) \mu K(\mu)^2 \left(1-\frac{t}{t_c(\mu)}\right)
\end{equation}

It follows that for $\mu \geq 0$, the behaviour of $B$ near the critical curve is given by
\begin{align}
B(t,\mu) &\underset{t\rightarrow t_c(\mu)}{\sim} \frac{1}{\left(1 - \frac{4}{3}t_c(\mu)\mu M_c(\mu)\right) K(\mu)\sqrt{1-\frac{t}{t_c(\mu)}}}
\label{eq:B_crit}
\end{align}

\subsection{Identification of the dominant schemes}


To perform the double-scaling limit, we identify the schemes at fixed $\omega$ which are the most singular at criticality, and call them \emph{dominant schemes}. From the above analysis, they are the schemes which maximize the number of broken chains. The analysis of \cite{TaFu} can be applied almost verbatim. It results in Theorem \ref{thm:DominantSchemesO(N)3} below.

Let us recall that a tree is a graph with no cycles. Vertices of valency 1 are called leaves and the others are called internal nodes. A rooted tree is a tree with a marked leaf. A binary tree is a tree whose internal nodes all have valency 3. A tree is said to be plane if it is embedded in the plane. 

\begin{theorem} \label{thm:DominantSchemesO(N)3}
The dominant schemes of degree $\omega>0$ are given bijectively by rooted plane binary trees with $4\omega-1$ edges, with the following correspondence
\begin{itemize}
\item The root of the tree corresponds to the two external legs of the 2-point function.
\item Edges of the tree correspond to broken chains.
\item The leaves are tadpoles: $\begin{array}{c} \includegraphics[scale=.4]{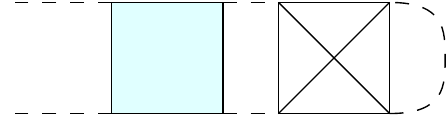}\end{array}$
\item There are two types of internal nodes,
\begin{equation} \label{InnerNodes}
\begin{array}{c} \includegraphics[scale=.6]{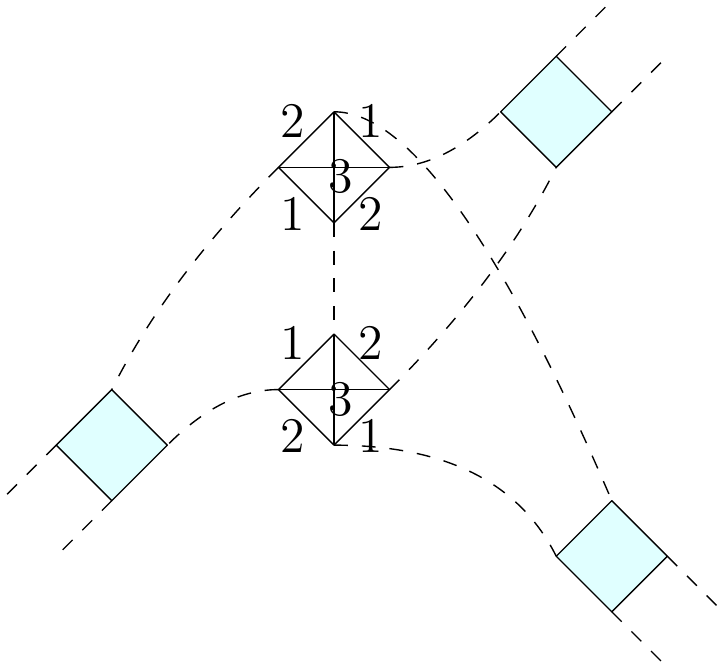}\end{array}, \qquad \begin{array}{c} \includegraphics[scale=.6]{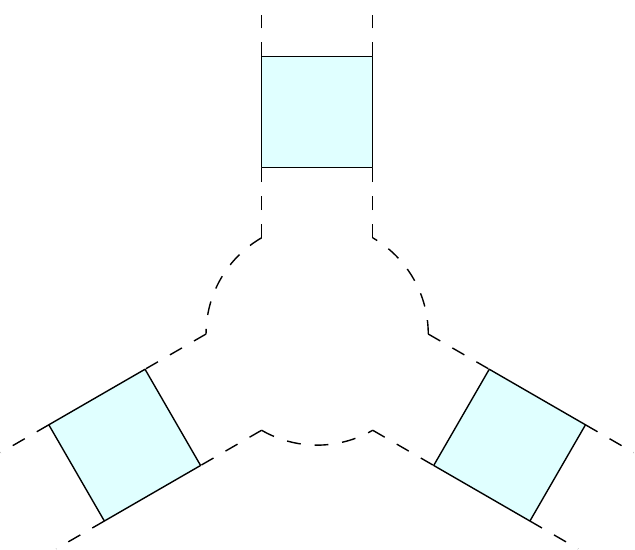}\end{array}
\end{equation}
\end{itemize}
\end{theorem}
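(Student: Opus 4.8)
The plan is to characterize the dominant schemes directly via the structure of the skeleton graph $\cI(\cS)$ introduced before Lemma \ref{thm:SkeletonGraph}, and then count edges using the degree formula. The starting observation is the singularity analysis of Section \ref{sec:RelevantSingularitiesO(N)3}: on the critical curve, $M$, $U$ and colored chains $C_i$ have an $O(1)$ value while a broken chain $B$ diverges like $(1-t/t_c)^{-1/2}$ by \eqref{eq:B_crit}. Hence a scheme $\cS$ contributes to $G_\omega$ a term $P_\cS(C(M),M)$ whose leading singularity is governed by the number $b(\cS)$ of broken chain-vertices: the amplitude behaves like $(1-t/t_c)^{-b(\cS)/2}$ up to less singular corrections (one must also check that no spurious cancellation happens, which is guaranteed since all series have nonnegative coefficients, so $P_\cS$ is a polynomial with nonnegative coefficients evaluated at positive singular quantities). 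Therefore the dominant schemes at degree $\omega$ are exactly those maximizing $b(\cS)$, and the first task is to find this maximum.

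First I would bound $b(\cS)$. Since colored chains and melons are ``neutral'' on the critical curve, only broken chains matter; by Lemma \ref{thm:ChainRemoval} each broken chain removal, if separating, splits the degree additively \eqref{SeparatingRemoval}, and if non-separating drops the degree by at least $1$ (here I would argue that for the dominant configuration the relevant broken chains are in fact separating, i.e.\ $\cI(\cS)$ is a tree — a cycle in $\cI(\cS)$ would mean a non-separating chain, which by \eqref{NonSeparatingRemoval} ``wastes'' degree that could instead be spent on more separating broken chains). With $\cI(\cS)$ a tree, point \ref{enum:SpanningTree} of Lemma \ref{thm:SkeletonGraph} gives $\omega(\cS)=\omega(\cG^{(0)})+\sum_r \omega(\bar\cG^{(r)})$. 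To maximize the number of edges (= number of broken chain-vertices) for fixed total degree, every component $\bar\cG^{(r)}$ should have the smallest possible positive degree and every leaf component should be as cheap as possible. The cheapest 2-point ``cap'' turning a single edge of color $0$ around is the tadpole bubble, with degree $\tfrac12$ after the opening/closing convention; and point \ref{enum:Valency3} of Lemma \ref{thm:SkeletonGraph} forces degree-$0$ components to have valency $\geq 3$, so degree-$0$ components can only sit at internal nodes of valency $\geq 3$. A leaf must therefore be a positive-degree component, minimized by a single tadpole. A short case analysis of which minimal-degree components can serve as internal nodes — done exactly as in \cite{TaFu} — shows the valency-$3$ internal nodes are precisely the two configurations in \eqref{InnerNodes} (degree $0$), and that using higher-valency or higher-degree nodes is never optimal. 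Once all internal nodes have valency exactly $3$ and all leaves are tadpoles, $\cI(\cS)$ is a rooted binary tree (rooted at the vertex carrying $\cG^{(0)}$, i.e.\ the two external legs).

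Next I would pin down the edge count. If the rooted binary tree has $L$ leaves then it has $L-1$ internal nodes of valency $3$, $2L-2$ edges, and (counting via Lemma \ref{thm:SkeletonGraph}\ref{enum:SpanningTree}) total degree $\omega = \tfrac12 L$ coming entirely from the $L$ tadpole leaves, since each internal node has degree $0$ and the root component is trivial. Wait — I would double-check the degree bookkeeping against the precise degrees of the pieces in \eqref{InnerNodes} and of the tadpole under the $2$-point convention $\omega(\cG):=\omega(\bar\cG)$; the claim $4\omega-1$ edges means one should get, after reinstating the root edge, a tree with $4\omega-1$ edges, i.e.\ $2L-2$ with the root half-edges counted, forcing $L=2\omega$ and confirming each leaf contributes degree $\tfrac14$ rather than $\tfrac12$ in the normalization used — I would reconcile this by carefully tracking faces in the tadpole and in \eqref{InnerNodes}. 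Finally, the bijection itself: given a rooted plane binary tree decorated as prescribed, reconstruct $\cS$ by gluing the node/leaf gadgets along broken chain-vertices following the tree; conversely, a dominant scheme produces such a decorated tree as its skeleton graph, and Theorem \ref{thm:graph-scheme} together with the uniqueness of the scheme ensures the two maps are mutually inverse. The planarity/ordering of the tree records the cyclic order in which broken chains attach at each trivalent gadget, which is part of the data of $\cS$.

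The main obstacle is the internal-node classification: showing that the only degree-$0$ trivalent components that can occur are the two gadgets in \eqref{InnerNodes}, and that allowing a component of higher valency or positive degree at an internal node never increases $b(\cS)$ for fixed $\omega$. This is where one genuinely uses the combinatorics of the $O(N)^3$ bubbles (tetrahedron and pillow) rather than just the abstract skeleton-graph inequalities, and it is the step that must be imported and adapted from the multi-orientable analysis of \cite{TaFu}; the rest is degree arithmetic on trees plus the singularity dichotomy already established in \eqref{eq:B_crit}.
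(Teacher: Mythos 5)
Your overall route is the same as the paper's: use the skeleton graph $\cI(\cS)$, argue via Lemma \ref{thm:ChainRemoval} and Lemma \ref{thm:SkeletonGraph} that a dominant scheme has a tree skeleton whose edges are all broken chains, take the positive-degree components as small as possible (tadpoles, of degree $1/2$) at the leaves and degree-$0$ trivalent components at the internal nodes, classify the latter as the two gadgets of \eqref{InnerNodes} (the paper, like you, imports the mechanics of this step from \cite{TaFu}, adding only the short check that a $6$-point component with no melons, no dipoles and degree $0$ has either $0$ or $2$ tetrahedra), and recover planarity of the tree from the ordering of the chains at each trivalent node. The singularity dichotomy you invoke (only broken chains diverge at the dominant singularity, and positivity of coefficients prevents cancellations) is exactly the paper's Section \ref{sec:RelevantSingularitiesO(N)3}.

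However, your final degree/edge bookkeeping contains a genuine error, and the ``reconciliation'' you propose is wrong. The root component $\cG^{(0)}$ (degree $0$, carrying the external legs) is itself a valency-$1$ vertex of the skeleton tree, attached by a broken chain. Hence a binary tree with $L$ tadpole leaves (root not counted) has $L-1$ trivalent internal nodes and $2L-1$ edges, not $2L-2$: the vertex count is $1+L+(L-1)=2L$, so the tree has $2L-1$ edges. Since each tadpole leaf has degree $1/2$ and every other component has degree $0$, point \ref{enum:SpanningTree} of Lemma \ref{thm:SkeletonGraph} gives $\omega = L/2$, i.e.\ $L=2\omega$, and the edge count $2L-1=4\omega-1$ follows with no tension whatsoever. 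Your suggested fix --- that each leaf contributes degree $1/4$ --- is false: the minimal positive degree is $1/2$ and the degree-$1/2$ graphs are precisely the tadpoles (this is the classification from \cite{TaCa} that your own leaf identification relies on), so accepting degree $1/4$ would contradict the very input of the argument. The correct resolution is simply to restore the missing root edge in your count; once that is done your plan coincides with the paper's proof.
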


The degree is thus entirely ``contained'' in tadpoles at the end of the broken chains.

\begin{proof}
We use the notations of Lemma~\ref{thm:SkeletonGraph}. From this lemma, it is clear that if $\cS$ is dominant, then its skeleton graph is a tree, i.e. $\cI(\cS) = \cT$ (or $q=0$), with $N(\cT)=4\omega(\cT)-1$ edges, all corresponding to  broken chains in $\cS$. Moreover
\begin{itemize}
\item all its internal nodes have valency 3 and correspond to components $\bar{\cG}^{(r)}$ of degree 0,
\item all leaves correspond to components $\bar{\cG}^{(r)}$ of degree 1/2,
\item the component $\cG^{(0)}$, which has the two external legs, gives rise to a root for $\cT$, and has degree 0.
\end{itemize}
We now have to identify which graphs can appear as nodes and leaves of the tree.
\begin{itemize}
\item Leaves correspond to graphs of degree $1/2$, which have been identified in~\cite{TaCa}. They are tadpoles, i.e. graphs with one tetrahedral interaction and two edges of color 0. There are three different tadpoles, depending on the color of the two faces of length $1$. One cuts an edge of color 0 and connect them to one side of a chain in $\cS$.
\item An internal node of $\cT$ corresponds to a graph of degree 0, with three edges of color 0 cut in order to connect it to three chains. The resulting 6-point function must have no melons and no dipoles. One can check that it must have either $0$ interactions, i.e. be a single propagator closed on itself, giving rise to the second type of vertices in \eqref{InnerNodes}, or 2 interactions, which is the elementary type I melon, giving rise to the first type of vertices in \eqref{InnerNodes}.
\end{itemize}
Clearly, the dominant schemes are fully encoded by their skeleton graphs which are rooted binary trees, except for the order of the chains meeting an internal nodes. We thus obtain a bijection between the dominant schemes and rooted binary trees with an order of the edges incident at every vertex, i.e. plane trees. This concludes the proof of the theorem.
\end{proof}

\subsection{Generating function for the dominant schemes}

A rooted binary tree with $N$ edges has $\frac{N-1}{2}$ inner nodes and $\frac{N+1}{2}$ leaves (not counting the root). Here a leaf carries a weight $3t^{1/2}$ (for the three ways to form a tadpole). An inner node receives a weight $1+6t$, the 1 being due to the second type of nodes in \eqref{InnerNodes} and the $6t$ to the first type.

A dominant scheme corresponds to a rooted, plane binary tree $\mathcal{T}$ with $4\omega-1$ edges and thus its generating function is
\begin{equation}
G_{\mathcal{T}}^{\omega}(t,\mu) = (3t^{\frac{1}{2}})^{2\omega} (1+6t)^{2\omega-1} B(t, \mu)^{4\omega-1} = (3t^{\frac{1}{2}})^{2\omega} (1+6t)^{2\omega-1} \frac{6^{4\omega-1}U^{8\omega-2}}{\left((1-U)(1-3U)\right)^{4\omega-1}}
\end{equation}
This function only depends on $\omega$ and not on the shape of $\mathcal{T}$. We can therefore easily sum over all trees and also add all melonic insertions at the root,
\begin{equation}
G_{\text{dom}}^{\omega}(t,\mu) = M(t, \mu) \sum_{\substack{\mathcal{T}\\ \text{$2\omega$ leaves}}} G_{\mathcal{T}}^{\omega}(t,\mu) = \Cat_{2\omega-1} G_{\mathcal{T}}^{\omega}(t,\mu) 
\label{eq:fct_dom_scheme}
\end{equation}
where $\Cat_{2\omega-1} = \frac{1}{2\omega}\binom{4\omega-2}{2\omega-1}$ is the number of rooted, plane, binary trees with $2\omega$ leaves.

\subsection{Double scaling limit of the quartic $O(N)^3$ model}

Using Equation~\eqref{eq:crit_behav}, the generating function of dominant schemes behaves near singular points as
\begin{equation}
G_{dom}^\omega(t,\mu) \underset{t\rightarrow t_c}{\sim} M_c(\mu)\Cat_{2\omega-1}9^{\omega}t_c^{\omega}\left(1+6t_c\right)^{2\omega-1}\left(\frac{1}{\left(1 - \frac{4}{3}t_c(\mu)\mu M_c(\mu)\right) K(\mu)\sqrt{1-\frac{t}{t_c(\mu)}}}\right)^{4\omega-1}
\end{equation}
Since in the large $N$ expansion a graph $\cG$ of degree $\omega$ scales as $N^{3-\omega}$ we define the following double scaling parameter
\begin{equation}
\kappa(\mu)^{-1} = N^{\frac{1}{2}}\frac{1}{3}\frac{1}{t_c(\mu)^\frac{1}{2}\left(1+6t_c(\mu)\right)}\left( \left(1 - \frac{4}{3}t_c(\mu)\mu M_c(\mu)\right) K(\mu) \right)^2\left(1-\frac{t}{t_c(\mu)}\right)
\label{eq:kappa}
\end{equation}
Using Equation~\eqref{eq:kappa} we get:
\begin{equation}
\left[ (1+6t_c(\mu)) \frac{1}{\left(1 - \frac{4}{3}t_c(\mu)\mu M_c(\mu)\right) K(\mu)\sqrt{1-\frac{t}{t_c(\mu)}}}\right]^{-1} = \kappa(\mu)^{-\frac{1}{2}}N^{-\frac{1}{4}}\frac{\sqrt{3}t_c(\mu)^\frac{1}{4}}{\left(1+6t_c(\mu)\right)^\frac{1}{2}}
\end{equation}

Therefore in the double scaling limit, the dominant graphs of degree $\omega > 0$ contribute as:
\begin{equation}
G_{dom}^\omega(\mu) = M_c(\mu)\frac{N^{\frac{11}{12}}}{\kappa(\mu)^\frac{1}{2}}\sqrt{3}\frac{t_c(\mu)^\frac{1}{4}}{\left(1+6t_c(\mu)\right)^\frac{1}{2}}\Cat_{2\omega-1}\kappa(\mu)^{2\omega}
\end{equation}
where one has to add the contribution of the graphs of degree $0$ i.e. the melons, which contribute simply as $M_c(t,\mu)$.

\vspace{15pt}
Hence summing over contribution of all degree, the total contribution to $G_2^{DS}$ is:
\begin{align}
G_{2}^{DS}(\mu) &= \sum\limits_{\omega\in\mathbb{N}/2} G_{dom}^\omega(\mu) \nonumber \\
&= M_c(\mu) + M_c(\mu)\frac{N^{\frac{11}{12}}}{\kappa(\mu)^\frac{1}{2}}\sqrt{3}\frac{t_c(\mu)^\frac{1}{4}}{\left(1+6t_c(\mu)\right)^\frac{1}{2}} \sum\limits_{n \in \frac{\mathbb{N}}{2} > 0 } \Cat_{2\omega-1}\kappa(\mu)^{2\omega} \nonumber \\
&= M_c(\mu) + M_c(\mu)\kappa(\mu)N^{\frac{11}{12}}\sqrt{3}\frac{t_c(\mu)^\frac{1}{4}}{\left(1+6t_c(\mu)\right)^\frac{1}{2}} \sum\limits_{n \in \mathbb{N}} \Cat_n \kappa(\mu)^{n} \nonumber \\
&= M_c(\mu) \left(1 + N^{\frac{11}{12}}\sqrt{3}\frac{t_c(\mu)^\frac{1}{4}}{\left(1+6t_c(\mu)\right)^\frac{1}{2}} \frac{1-\sqrt{1-4\kappa(\mu)}}{2\kappa(\mu)^\frac{1}{2}}\right).
\end{align}

Let us give now an analysis of this final result. Note that the sum converges for $\kappa(\mu) \leq \frac{1}{4}$, thus showing, as announced above, that the tensor double scaling series of the $O(N)^3$-invariant tensor model is convergent. This last identity is the identity announced in Theorem \ref{resultatfinal} stated in the Introduction. This is a different type of result when compared to the matrix case, where the double scaling limit series is divergent. 

Moreover, the parameter $\kappa(\mu)$ allows to define a double scaling limit such that graphs of all orders in the $\frac{1}{N}$ expansion contribute. This conclusion comes from an analysis  of the singularities arising from (broken) chains: higher order graphs in the $\frac{1}{N}$ expansion can have more (broken) chains. Thus, we can tune how to approach the critical point while sending $N$ to infinity such that the singularities coming from the chains compensate the loss of scaling in $N$. 
It is the double scaling parameter $\kappa(\mu)$ which encodes this balance between the large $N$ limit and the criticality of chains. In particular, when $\kappa(\mu) \rightarrow 0$, we obtain the usual melonic large $N$ limit.
Another interesting value is $\kappa(\mu) = \frac{1}{4}$,
when analyticity is lost. 

For the quartic $O(N)^3-$invariant tensor model, $\kappa(\mu)$ explicitly depends on the ratio $\mu$ of the two coupling constants through $M_c(\mu)$, $t_c(\mu)$ and $K(\mu)$, due to the presence of the pillow interaction. In particular, when $\mu = 0$, we obtain a similar expression 
to the one that has been obtained for other tensor models with tetrahedral interaction~\cite{GuTaYo,BeCa}.

\bigskip

\paragraph*{Acknowledgments.} The authors have been partially supported by the ANR-20-CE48-0018 "3DMaps" grant.

\begin{figure}
\includegraphics[scale=0.5]{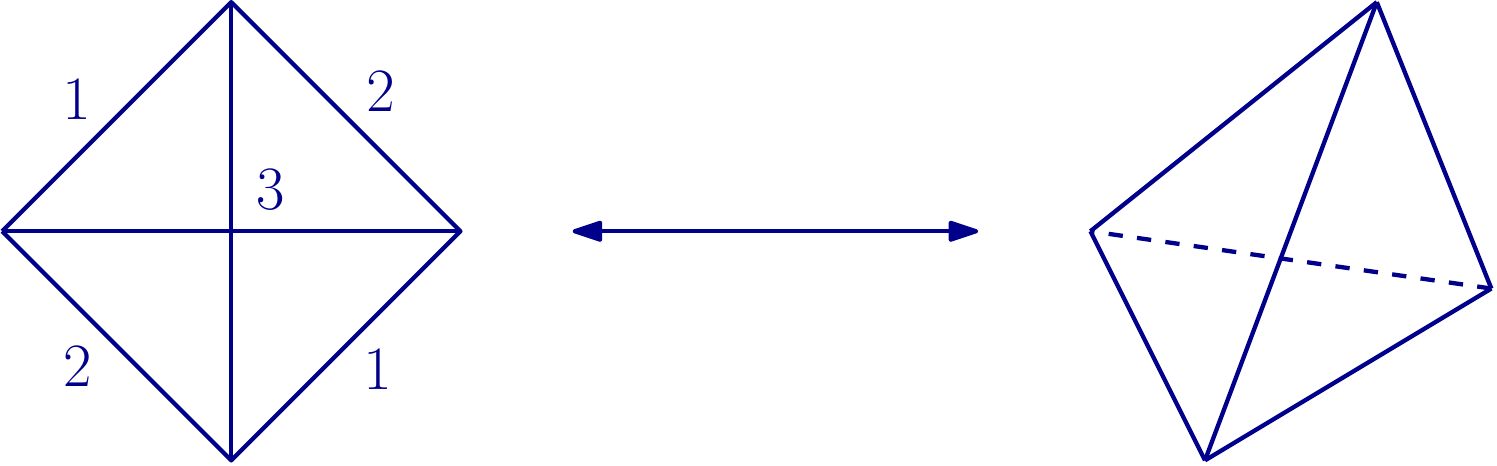}
\end{figure}
 \appendix
 
\bibliography{DS_biblio}

\end{document}